\documentclass[11pt]{article}
\pdfoutput=1 

\usepackage{lmodern}

\usepackage[margin=1in]{geometry}
\usepackage{amsfonts,amssymb,amsmath,amsthm,mathtools,thmtools}
\usepackage{microtype,xspace,bm,floatrow}
\usepackage{braket}
\usepackage{algorithm}
\usepackage{algpseudocode}
\usepackage{comment}

\usepackage{doi}
\usepackage{hyperref}
\usepackage[svgnames]{xcolor}
\hypersetup{
	colorlinks=true,
	urlcolor=MidnightBlue,
	linkcolor=MidnightBlue,
	citecolor=MidnightBlue,
	unicode
}

\usepackage{enumitem}
\setlist{  
  listparindent=\parindent,
  parsep=0pt,
}

\usepackage{float}
\usepackage[capitalise,nameinlink,noabbrev]{cleveref}
\crefname{equation}{}{}

\usepackage{tikz}
\usepackage{graphicx}
\usetikzlibrary{positioning,arrows.meta,math,shapes.geometric,decorations.pathmorphing}

\usepackage[font=small,labelfont=bf]{caption} 
\usepackage{subcaption}

\usepackage[english]{babel}
\addto\extrasenglish{}
\addto\extrasenglish{}

\usepackage{lpic}

\newfloat{protocol}{htbp}{lop}
\floatname{protocol}{Protocol}

\crefname{protocol}{protocol}{protocols}
\Crefname{protocol}{Protocol}{Protocols}

\makeatletter
\DeclareRobustCommand\bfseries{%
  \not@math@alphabet\bfseries\mathbf
  \fontseries\bfdefault\selectfont\boldmath}
\makeatother

\newtheorem{theorem}{Theorem}[section]

\newtheorem{lemma}[theorem]{Lemma}

\newtheorem{claim}[theorem]{Claim}

\newtheorem{corollary}[theorem]{Corollary}
\newtheorem{definition}[theorem]{Definition}
\newtheorem{remark}[theorem]{Remark}

\newcommand{\newclass}[2]{\newcommand{#1}{{\text{\upshape\sffamily #2}}\xspace}}
\renewcommand{\P}{{\text{\upshape\sffamily P}}\xspace}
\newclass{\NP}{NP}
\newclass{\coNP}{coNP}
\newclass{\FP}{FP}
\newclass{\TFNP}{TFNP}
\newclass{\BPP}{BPP}
\newclass{\FBQP}{FBQP}
\newclass{\FBPP}{FBPP}
\newclass{\BQP}{BQP}
\newclass{\EQP}{EQP}
\newclass{\PH}{PH}

\newcommand{\newProbClass}[2]{\newcommand{#1}{{\text{\upshape\ttfamily #2}}\xspace}}
\newProbClass{\Prunable}{Prunable}
\newProbClass{\Fixable}{Fixable}
\newProbClass{\Garblable}{Garblable}

\newcommand{\newprob}[2]{\newcommand{#1}{{\text{\upshape\scshape #2}}\xspace}}
\newprob{\eol}{EoL}
\newprob{\findone}{Find1}
\newprob{\HHit}{HeavyHitter}
\newprob{\BalancedFindone}{BalancedFind1}
\newprob{\Grover}{Grover}
\newprob{\AOES}{AOES}
\newprob{\XOR}{XOR}
\newprob{\AOESfull}{Avoid-One-Encrypted-String}
\newprob{\ContSupp}{Continuous-Support-Selection}
\newprob{\UnifSuppFind}{Uniform-Support-Finding}
\newprob{\Hamming}{Hamming}
\newprob{\SimonHamming}{QL-Estimation}
\newprob{\distinctness}{Distinctness}
\newprob{\ksum}{Sum}
\newprob{\GraphCollision}{Graph-Collision}
\newprob{\HM}{Hidden-Matching}
\newprob{\Triangle}{Triangle}

\ifdefined\SHOWCOMMENTS
  \newcommand{\jiawei}[1]{{\color{brown}[Jiawei: #1]}}
  \newcommand{\hugo}[1]{{\color{blue}[Hugo: #1]}}
  \newcommand{\tom}[1]{{\color{purple}[Tom: #1]}}
\else
  \newcommand{\jiawei}[1]{{}}
  \newcommand{\hugo}[1]{{}}
  \newcommand{\tom}[1]{{}}
\fi

\def\NOTANONYMOUS{1}

\newcommand{\XORfunc}[1]{\mathsf{XOR}(#1)}

\newcommand{\alg}{\mathcal{A}}
\newcommand{\probR}{{R}}
\newcommand{\vL}{\mathit{L}}
\newcommand{\vR}{\mathit{R}}
\newcommand{\vcur}{\mathit{Cur}}
\newcommand{\calX}{\mathcal{X}}
\newcommand{\calY}{\mathcal{Y}}
\newcommand{\rvX}{\mathbf{X}}
\newcommand{\rvY}{\mathbf{Y}}
\newcommand{\setS}{\mathcal{F}}
\newcommand{\bv}[1]{{#1}}
\newcommand{\Flip}{\mathcal{E}}
\newcommand{\Fliplow}{\mathcal{E}_{\mathsf{low}}}
\newcommand{\Xh}{X^*}
\newcommand{\Ylow}{Y_{\mathsf{low}}}
\newcommand{\Yhigh}{Y_{\mathsf{hi}}}

\newcommand{\D}{\mathsf{D}}
\newcommand{\R}{\mathsf{R}}
\newcommand{\Q}{\mathsf{Q}}
\newcommand{\psR}{\mathsf{psR}}
\newcommand{\psQ}{\mathsf{psQ}}
\newcommand{\V}{\mathsf{V}}
\newcommand{\concat}[2]{#1 \circ #2}
\newcommand{\QC}{Q}

\newcommand{\LB}{Next}

\begin{document}

\mbox{}\vspace{12mm}

\begin{center}
{\huge Pseudo-Deterministic Quantum Algorithms}
\\[1cm] \large

\ifdefined\NOTANONYMOUS
    \setlength\tabcolsep{2em}
    \begin{tabular}{ccc}
    Hugo Aaronson & Tom Gur & Jiawei Li
    \\[-0mm]
    \small\slshape University of Cambridge & \small\slshape University of Cambridge & \small\slshape UT Austin 
    \end{tabular}
    	
    \vspace{6mm}
\else

\fi
	
\large
\today
	
\vspace{6mm}
	
\end{center}

\begin{abstract}
    We initiate a systematic study of pseudo-deterministic quantum algorithms. These are quantum algorithms that, for any input, output a canonical solution with high probability. Focusing on the query complexity model, our main contributions include the following complexity separations, which require new lower bound techniques specifically tailored to pseudo-determinism:

\begin{itemize}
    \item We exhibit a problem, {Avoid One Encrypted String} (\AOES), whose classical randomized query complexity is $O(1)$ but is maximally hard for pseudo-deterministic quantum algorithms ($\Omega(N)$ query complexity).
    \item We exhibit a problem, {Quantum-Locked Estimation} (\SimonHamming),
    for which pseudo-deterministic quantum algorithms admit an exponential speed-up over classical pseudo-deterministic algorithms ($O(\log(N))$ vs. $\Theta(\sqrt{N})$), while the randomized query complexity is $O(1)$. 
\end{itemize}

    Complementing these separations, we show that for any total problem $R$, pseudo-deterministic quantum algorithms admit at most a quintic advantage over deterministic algorithms, i.e., $\D(R) = \tilde O(\psQ(R)^5)$.
    On the algorithmic side, we identify a class of quantum search problems that can be made pseudo-deterministic with small overhead, including Grover search, element distinctness, triangle finding, $k$-sum, and graph collision. 
\end{abstract}

\section{Introduction}

The study of pseudo-deterministic algorithms, initiated by Gat and Goldwasser~\cite{GG11}, is concerned with randomized algorithms that output a canonical solution for each input. That is, a randomized algorithm $\mathcal{A}$ that solves a search problem $\probR\subseteq\mathcal{X}\times \mathcal{Y}$ is said to be pseudo-deterministic if there exists a function $f:\mathcal{X}\rightarrow\mathcal{Y}$ such that for any input $X\in\mathcal{X}$, the output $\mathcal{A}(X)$ is $f(X)$ with high probability. 

Pseudo-deterministic algorithms capitalize on the computational power of randomization while still maintaining the predictable output behaviour of deterministic algorithms. This stands in contrast to fully randomized algorithms, whose repeated runs may return vastly different outputs.
Over the last decade, pseudo-deterministic algorithms have received much attention and have appeared across diverse areas of theoretical computer science~\cite{GGR13, GGH18,GGH19,GGMW20,DPV21,DPWV22,CDM23query, CLORS23, BKKS23,GGS23}.

In the quantum setting, pseudo-deterministic algorithms were first considered by Goldwasser, Impagliazzo, Pitassi and Santhanam~\cite{GIPS21} when studying the \findone problem\footnote{The input to the \findone problem is a string $X\in\{0,1\}^N$ with the promise that $X$ has Hamming weight at least $N/2$, and the goal is to output an index $i$ with $X_i=1$.} 
for pseudo-deterministic algorithms in the query complexity model. While \findone is solvable in $O(1)$ queries by random sampling, they showed that any pseudo-deterministic algorithm, classical or quantum, for \findone requires ${\Omega}(\sqrt{N})$ queries. 

What makes quantum pseudo-determinism truly distinct is the randomness that arises from \emph{measurement}.
Many quantum algorithms that achieve exponential quantum speed-up (e.g., Fourier sampling) measure superpositions over exponentially many candidate solutions, a step that appears inherent.  This offers a new perspective to revisit existing quantum advantages: 

\begin{quote}
    \begin{center}
    \emph{When can quantum speed-ups be achieved pseudo-deterministically,\\ and when does quantum advantage inherently require output entropy?}
    \end{center}
\end{quote}

This question also links quantum pseudo-determinism to certified randomness: if one can show that the quantum advantages of a certain problem $\probR$ cannot be preserved given the pseudo-deterministic requirement, then any (potentially untrusted) quantum device $\alg$ that solves $\probR$ efficiently must generate some entropy.
A series of works~\cite{bassirian2021certified, AH23, Yamakawa2022} show, under standard complexity assumptions, that Fourier sampling~\cite{fishing}, random-circuit sampling, and the Yamakawa–Zhandry algorithm~\cite{Yamakawa2022} cannot be made pseudo-deterministic\footnote{For quantum certified randomness, these works in fact prove {quantitatively} stronger statements: any efficient quantum algorithm for these tasks must generate a large amount of min-entropy.}. 

Motivated by the above, we initiate a systematic study of quantum pseudo-determinism. We investigate the fundamental properties of pseudo-deterministic quantum algorithms, show complexity separations that shed light on the strength and limitations of quantum pseudo-determinism, and identify a wide class of problems (which includes Grover search, element distinctness, triangle finding, $k$-sum, and graph collision) that admit efficient pseudo-deterministic quantum algorithms.

\subsection{Our results}

We start with the query complexity model. Unless otherwise stated, we denote the length of the input to the search problem by $N$. For a search problem $\probR$, we denote the deterministic query complexity by $\D(\probR)$, the pseudo-deterministic quantum query complexity $\psQ(\probR)$, and the quantum query complexity by $\Q(\probR)$. We denote the output of algorithm $\alg$ given query access to input $X$ as $\mathcal{A}^{X}$.

\subsubsection{Query complexity separations}
We first study the relationship between pseudo-deterministic quantum query algorithms, randomized algorithms, and classical pseudo-deterministic algorithms. 

\paragraph{Hardness of quantum pseudo-determinism.} Our first result shows that not all problems admit efficient pseudo-deterministic quantum algorithms. We give a maximal separation between classical randomized query complexity and quantum pseudo-deterministic query complexity. 

\begin{restatable}{theorem}{IntroAOES}\label{thm:IntroAOES}
    There exists a search problem $\probR$ whose classical randomized complexity is $O(1)$ and whose pseudo-deterministic quantum query complexity is $\Omega(N)$.
\end{restatable}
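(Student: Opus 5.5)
The plan is to define $\AOES$ explicitly, check the $O(1)$ randomized upper bound, and then prove the $\Omega(N)$ pseudo-deterministic quantum lower bound with the polynomial method.

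\textbf{The problem.} The input is $X\in\{0,1\}^N$. It \emph{encrypts} a string $s(X)=AX\in\{0,1\}^n$, where $A\in\mathbb{F}_2^{n\times N}$ is fixed and $n=\Theta(N)$. We choose $A$ so that every nonzero vector in its row span has Hamming weight at least $d=\Omega(N)$; the parity-check matrix of an asymptotically good binary code works. A valid output is any $y\in\{0,1\}^n$ with $y\neq s(X)$. A randomized algorithm makes zero queries and outputs a uniformly random $y$, so it fails with probability $2^{-n}$. The same property forces any pseudo-deterministic algorithm to commit to a canonical $f(X)\neq AX$, and intuitively this requires learning something about $AX$, every bit of which is a parity of $\Omega(N)$ input bits.

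\textbf{Lower bound.} Let $\alg$ be a $T$-query pseudo-deterministic algorithm with canonical function $f$. For each $y$, the acceptance probability $p_y(X)=\Pr[\alg^X=y]$ is a real polynomial of degree at most $2T$. Pseudo-determinism says $p_{f(X)}(X)\ge 2/3$. We amplify by running $\alg$ $k=O(1)$ times and taking the majority, and define $q_y(X)$ as the probability that the majority outcome is $y$. Then $q_y$ has degree at most $2Tk$, and $q_{f(X)}(X)\ge 1-\varepsilon$ for a small constant $\varepsilon$. The key Fourier identity is this: for uniform $X$ and any polynomial $q$ of degree less than $d$,
\[
\mathbb{E}_X\bigl[q(X)\,\mathbf{1}[AX=y]\bigr]=2^{-n}\,\mathbb{E}_X[q(X)].
\]
It holds because $\mathbf{1}[AX=y]=2^{-n}\sum_{a}(-1)^{a\cdot y}\chi_{A^{\top}a}(X)$, and every nonconstant character $\chi_{A^{\top}a}$ here has degree at least $d$. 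Suppose $2Tk<d$. Summing the identity over $y$ gives
\[
\mathbb{E}_X\bigl[q_{AX}(X)\bigr]=2^{-n}\sum_y\mathbb{E}_X[q_y(X)]=2^{-n}.
\]
So the amplified algorithm outputs the forbidden string with probability only $2^{-n}$, and this alone yields no contradiction.

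\textbf{Where pseudo-determinism must enter.} The argument therefore has to use the canonical $f$ directly. The plan is to use a \emph{hybrid along a coset}. Fix $X$ and pick a shift $\Delta$ with $A\Delta=f(X)-AX$, so that $X'=X\oplus\Delta$ satisfies $s(X')=f(X)$. Then $f(X')\neq f(X)$ is forced. I will also use two-run collision probabilities, $\sum_y p_y^2$, of degree $4T$. Then:
\begin{itemize}
\item Correctness forces the map $X\mapsto f(X)$ to change across each coset $\{X\oplus\Delta : A\Delta=c\}$ of $\ker A$ in a structured way.
\item Low degree forces the averages $\mathbb{E}[q_y\mid AX=c]$ to be independent of $c$, by the identity above restricted to the affine slices.
\end{itemize}
Concretely, by pigeonhole some value $y^*$ has $\Pr_X[f(X)=y^*]\ge 2^{-n}$. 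Its slice average $\mathbb{E}[q_{y^*}\mid AX=y^*]$ must be at most $\varepsilon$, since there $f\neq y^*$. The unconditional mean $\mathbb{E}[q_{y^*}]$ is at least $(1-\varepsilon)\Pr[f=y^*]$. Slice-independence then gives $(1-\varepsilon)\Pr[f=y^*]\le\varepsilon$.

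\textbf{Main obstacle.} This last inequality is only a contradiction if some $y^*$ has $\Pr[f=y^*]$ bounded below by a constant, which pigeonhole does not provide. The main obstacle is handling exponentially many possible outputs, so that errors of size $\varepsilon$ per output do not swamp the $2^{-n}$-size terms. My first attempt would be to restrict to a sub-problem in which the canonical output is essentially determined by a few queried bits, for example by fixing most coordinates of $X$ via a random restriction that preserves the distance property of $A$ on the remaining coordinates, so that $f$ concentrates on a constant number of values. Failing that, I would move $\Delta$ within low-weight directions and argue through sensitivity and the degree of $q_{f(X)}$ that $f$ is constant on large structured sets. Either route concludes that $2Tk\ge d$, hence $T=\Omega(N)$.
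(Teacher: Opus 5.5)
Your proposal has a genuine gap: the lower bound is never proved. Both of your computations are consistent rather than contradictory. The Fourier identity only shows that a degree-$<d$ algorithm hits $AX$ with average probability $2^{-n}$. The slice argument only gives $(1-\varepsilon)\Pr[f=y^*]\le\varepsilon$, where $\Pr[f=y^*]$ can be as small as $2^{-n}$ (e.g.\ $f(X)=AX\oplus e_1$ is uniform). Neither remedy you suggest is carried out. The random-restriction route is also problematic as stated: if fewer than $n$ coordinates stay free, the corresponding columns of $A$ cannot span $\mathbb{F}_2^n$. Some parity $a\cdot AX$ is then constant on the restricted domain, and your identity no longer holds there.

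The real obstruction is your choice $n=\Theta(N)$. The leverage pseudo-determinism gives is this: the output is nearly a fixed function $f(X)\neq AX$, so taking the plurality of $k$ runs of $\alg$ pushes the probability of outputting $AX$ below the random guesser's $2^{-n}$ \emph{at every input}. A genuinely randomized algorithm cannot do this. Getting below $2^{-n}$ costs $k=\Theta(n)$ repetitions, so with $n=\Theta(N)$ this route yields only $T=\Omega(d/n)=\Omega(1)$. Nothing in your plan shows that the $n=\Theta(N)$ problem is $\Omega(N)$-hard at all.

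With $n$ a constant, your own identity closes the argument in a few lines.
\begin{itemize}
\item Take $A$ to be $n\ge 2$ disjoint parity blocks of length $N/n$. This has full row rank and row-space distance $d=N/n$, and the zero-query guess errs with probability $2^{-n}\le 1/4$.
\item Choose $k=O(n)$ so that $q_{f(X)}(X)\ge 1-\varepsilon$ with $\varepsilon<2^{-(n+1)}$.
\item Since $f(X)\ne AX$, you get $q_{AX}(X)\le\varepsilon$ for every $X$.
\item If $2Tk<d$, your identity gives $\mathbb{E}_X[q_{AX}(X)]=2^{-n}\sum_y\mathbb{E}_X[q_y(X)]\ge 2^{-n}(1-\varepsilon)>\varepsilon$, a contradiction.
\end{itemize}
Hence $T\ge d/(2k)=\Omega(N/n^2)=\Omega(N)$.

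This is exactly the paper's $\Omega(N/m^2)$ bound for $\AOES_m$, obtained from the same two ingredients in a different guise. The paper first amplifies the pseudo-deterministic solver with $100m$ repetitions to error below $2^{-(m+2)}$. It then uses a one-time-pad self-reduction, embedding an $\XOR_{N/m}$ instance into a uniformly random subset of the $m$ blocks. This turns such a solver into a parity algorithm with success probability $\tfrac12+2^{-(m+2)}$, contradicting the $(N/m+1)/2$ parity lower bound. Your Fourier identity is the polynomial-method counterpart of that reduction.

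A minor point: the identity needs $A$ to have full row rank with every nonzero vector of its \emph{row space} of weight at least $d$. That is, $A$ should be a generator matrix of a good code, not a parity-check matrix.
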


\autoref{thm:IntroAOES} improves the separation given by the \findone problem, where the pseudo-deterministic quantum lower bound is $\Omega(\sqrt{N})$. 
We obtain this stronger separation via the \emph{Avoid One Encrypted String} (\AOES) problem. Informally, the input encodes $m$ instances of \XOR gadgets; together they encrypt a hidden string $b\in\{0,1\}^m$, where $b_i$ is the solution to the $i$-th \XOR instance. The goal is to output any $m$-bit string \emph{not equal} to $b$. A classical randomized algorithm succeeds with constant probability by blind sampling, whereas any pseudo-deterministic algorithm must consistently avoid $b$ across runs, effectively forcing it to learn at least one bit of $b$.

\paragraph{Power of quantum pseudo-determinism.}
Our second result shows that pseudo-deterministic quantum algorithms can be much stronger than their classical counterparts. We prove an exponential separation between pseudo-deterministic quantum algorithms and pseudo-deterministic classical algorithms on a task that remains easy for classical randomized algorithms. To ensure this separation is meaningful, we need to be certain that the quantum algorithm is not simply introducing solutions inaccessible to classical computation. We do this by defining the notion of ``canonization''. Intuitively, this is where a quantum algorithm stabilizes a solution that occurs with considerable probability in a randomized algorithm into a canonical solution, thus canonizing it from a randomized solution to a pseudo-deterministic one.

\begin{definition}\label{def: Elevate}
    Let $\probR\subseteq\mathcal{X}\times \mathcal{Y}$ be a search problem with randomized algorithm $\mathcal{A}$ and let $f:\mathcal{X}\rightarrow\mathcal{Y}$ be a function. $f$ canonizes $\mathcal{A}$ if, for any $X\in \mathcal{X}$, the probability $\mathcal{A}^{X}$ outputs $f(X)$ is greater than the probability it outputs an incorrect solution. Equivalently,
    \begin{equation*}
    \forall X\in \mathcal{X}:\mathbb{P}_{\mathcal{A}}[\mathcal{A}^X=f(X)]> \mathbb{P}_{\mathcal{A}}[ (X,\mathcal{A}^X)\not\in \probR].
\end{equation*}
A pseudo-deterministic algorithm $\mathcal{B}$ canonizes $\mathcal{A}$ if the function $f$ calculated by $B$ canonizes $\mathcal{A}$.
\end{definition}

We now state the theorem that demonstrates the quantum advantage on pseudo-deterministic algorithms.

\begin{restatable}{theorem}{IntroSH}\label{thm:IntroSH}
    There exists a search problem $\probR$ such that
    \begin{enumerate}
        \item there exists a randomized classical algorithm $\alg$ for $\probR$ that uses $O(\log N)$ queries; 
        \item there exists a quantum pseudo-deterministic algorithm $\mathcal{B}$ with $O(\log N)$ query complexity that canonizes $\alg$;
        \item any classical pseudo-deterministic algorithm for $\probR$ requires $\Omega(\sqrt{N})$ queries.
    \end{enumerate}
\end{restatable}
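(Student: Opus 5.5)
We propose an explicit problem, \SimonHamming, that couples an \emph{estimation} task with a \emph{Simon-type} oracle. The estimation task is easy for randomized algorithms but has no canonical answer. The Simon oracle supplies a ``quantum-locked'' source of shared randomness: a quantum algorithm can unlock it in $O(\log N)$ queries, but a classical algorithm cannot extract it in $o(\sqrt N)$ queries.

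\textbf{The problem.} The input has two parts.
\begin{itemize}
\item A string $X\in\{0,1\}^N$.
\item A Simon oracle $g:\{0,1\}^n\to\{0,1\}^n$ with $N=2^n$, which is $2$-to-$1$ with a hidden shift $s\neq 0$.
\end{itemize}
A valid output is a number $v$ with $|v-|X|/N|\le \varepsilon$ for a fixed constant $\varepsilon$, subject to one restriction: $v$ must lie on a grid $\{k\delta+\theta\}_{k\in\mathbb{Z}}$. Here $\delta$ is a constant comparable to $\varepsilon$, and the offset $\theta$ ranges over a constant-size family of admissible offsets. The canonical answer will pin $\theta=\theta(s)$, i.e.\ the offset is a fixed, publicly known function of the hidden shift $s$.

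\textbf{Item 1 (randomized algorithm $\alg$).} Take $O(1)$ samples of $X$ to get an estimate within $\varepsilon/2$ with high probability. Choose a uniformly random admissible offset $\theta$, and round the estimate to the grid with offset $\theta$. No queries to $g$ are needed, so this uses $O(1)\le O(\log N)$ queries.

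\textbf{Item 2 (quantum algorithm $\mathcal{B}$ canonizes $\alg$).} $\mathcal{B}$ runs Simon's algorithm with $O(n)=O(\log N)$ queries and amplification, and so recovers $s$ deterministically with high probability. It then computes $\theta(s)$. Finally it estimates $|X|/N$ to precision $\varepsilon'\ll\delta$ with $O(1)$ queries and rounds to the $\theta(s)$-grid.

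The output is the same in every run unless the estimate lands near a rounding boundary. To make it canonical, I would let the family of offsets, indexed via $s$, be large enough (say $\Theta(\log N)$ distinct offsets). Then for each fixed $X$, most choices of $\theta$ put the value $|X|/N$ far from every rounding boundary. The key point is that the offset assignment is fixed before the input is chosen, and $s$ is part of the input, so $s$ alone cannot be used to adapt to $X$. The remedy is to define the valid inputs as those $(X,s)$ for which $|X|/N$ is at distance at least $\varepsilon'$ from the $\theta(s)$-boundaries, and to enforce this as a promise.

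Canonization then follows. $\alg$ picks offset $\theta(s)$ with probability $1/(\#\text{offsets})$ and in that case outputs $f(X)$ with high probability. Its error probability can be driven below this by $O(\log\log N)$ extra samples, which stays within $O(\log N)$ queries.

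\textbf{Item 3 (classical lower bound), the main obstacle.} I would argue in two steps.
\begin{itemize}
\item \emph{Hybrid step.} By the classical $\Omega(\sqrt N)$ lower bound for Simon's problem, an algorithm making $o(\sqrt N)$ queries has output distribution nearly unchanged when $g$ is replaced by a random $1$-to-$1$ function. So its canonical output cannot depend on $s$ beyond what is forced by validity.
\item \emph{Boundary step.} Fix $g$ and vary $X$ along a path that raises $|X|$ one bit at a time from $0$ to $N$. The canonical output must switch between grid points somewhere along this path. At a switching point, the two neighbouring inputs differ in a single bit but have different canonical answers, so a pseudo-deterministic algorithm must distinguish them. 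Distinguishing a one-bit change near a threshold weight requires $\Omega(\sqrt N)$ classical queries (via a Find1-style argument, or the sampling bound for telling weight $w$ from $w+1$ in a symmetrized instance). This mirrors the GIPS lower bound.
\end{itemize}
The delicate part is reconciling the promise with this path argument: the switching point must not be excluded by the promise for every $s$. Because the algorithm essentially cannot learn $s$, it must answer correctly for all consistent $s$ simultaneously, and this is what produces the contradiction.
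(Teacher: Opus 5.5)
Your Items 1 and 2 are a workable, if heavier, variant of the paper's construction. The paper simply promises $\bigl|\|X\|_{\mathsf{hw}}-s\bigr|\le 0.09N$, lets $\mathcal{B}$ output $s$ itself, and lets $\alg$ output a uniform integer within $0.095N$ of an $O(\log N)$-sample estimate. Item 3, however, has a genuine gap. Once $g$, hence $s$, is fixed, your promise deletes exactly the weights within $\varepsilon'$ of the $\theta(s)$-boundaries. So along any path raising $\|X\|_{\mathsf{hw}}$, the canonical output may switch inside a promise gap --- and for $\mathcal{B}$ it does --- leaving no valid pair of neighbours with different answers. This is not a technicality: any argument that fixes $g$ and only perturbs $X$ applies verbatim to $\mathcal{B}$'s canonical function, since sensitivity lower-bounds quantum query complexity too, and $\mathcal{B}$ shows the conclusion is false. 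Moreover, one switching pair found along one path can be told apart with a single query. The sensitivity method needs a single string with $\Omega(N)$ sensitive neighbours, and it then gives $\Omega(N)$, so the $\sqrt{N}$ has to come from Simon's problem, not from the Hamming part. Your hybrid step does not supply this link. The classical Simon bound is distributional and says nothing about a fixed $g$. A canonical output may also legitimately depend on $g$ through features unrelated to $s$ (e.g.\ the value $g(0^n)$). So ``it must answer correctly for all consistent $s$ simultaneously'' cannot be invoked directly.

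The missing ingredient is \cref{lem: reduce Simon to SH}, which reverses the logic. Fix $g\in\setS_s$ and a mid-weight $\Xh$, and let $t$ be the canonical output on $(g,\Xh)$. Among valid $X$ in the same promise interval with canonical output $t$, take the lightest $\Ylow$ and the heaviest $\Yhigh$. If $\Ylow$ were not at the lower end of that interval, all its $\Omega(N)$ one-to-zero flips would be valid inputs with a different canonical output, forcing $\Omega(N)$ queries. Hence $\Ylow$ and $\Yhigh$ sit at the two ends, and $t$ must be a correct answer for both. This confines $t$ to a narrow window around a point determined by $s$; in the paper $|t-s|\le 0.01N$, because the promise radius $0.09N$ is just below the tolerance $0.1N$. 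Choose $S_1,S_2$ with disjoint windows. Running the algorithm on $(g,\Xh)$ then decides whether $s\in S_1$ or $s\in S_2$, which costs $\Omega(\sqrt{\min(|S_1|,|S_2|)})$ classical queries by \cref{lem: hardness of Simons}. In your grid formulation (everything normalized by $N$), the window is $|t-c|\le \varepsilon+\varepsilon'-\delta/2$ around the $\theta(s)$-grid point $c$ nearest $\|\Xh\|_{\mathsf{hw}}/N$. You must therefore impose an explicit constraint, such as $\varepsilon+\varepsilon'<3\delta/4$ for two offsets $\delta/2$ apart with $\Xh$ placed midway between their grid points. This is compatible with the $\varepsilon>\delta/2$ that Item 1 needs. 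With merely ``comparable'' constants the windows can overlap and no lower bound follows. You also need a constant number of offsets, each with $\Omega(N)$ preimages, to get $\Omega(\sqrt N)$ rather than $\Omega(\sqrt{N/\log N})$.
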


We make two remarks about the conditions required for showing a conceptually meaningful quantum advantage in the setting of pseudo-deterministic algorithms. First, note that without requiring the problem to be easy for randomized algorithms (Item~1), such a separation would be trivial, in the sense that it need not refer to pseudo-determinism at all (e.g., by Simon’s problem, which has a unique solution). In contrast, \autoref{thm:IntroSH} exhibits a different form of quantum advantage: beyond faster search, quantum algorithms can \emph{break symmetry} (by eliminating output randomness) more efficiently than classical pseudo-deterministic algorithms.
Furthermore, we also require that such a quantum pseudo-deterministic algorithm canonizes a classical one. This is because we do not want the quantum algorithm to \emph{cheat} by outputting a new solution that is inaccessible to classical algorithms. 

\subsubsection{Fundamental properties of quantum pseudo-determinism.}
Next, we provide theorems that demonstrate two basic properties of pseudo-deterministic quantum algorithms: a limitation on maximal speed-ups for total problems, and a general pseudo-deterministic quantum query upper bound via the completeness of the \findone{} problem.

\paragraph{Deterministic emulation of total problem.} We show that for total problems, pseudo-deterministic quantum algorithms cannot exhibit exponential speed-up over classical algorithms. In particular, pseudo-deterministic quantum algorithms for total problems can be emulated by deterministic algorithms with a quintic blow-up. 

\begin{restatable}{theorem}{TotalPoly}\label{thm:TotalPoly}
    Let $\probR \subseteq \{0,1\}^{N}\times \calY$ be a total search problem,
    then $\D(R)=O(\psQ(R)^{5}\log(N))$.
\end{restatable}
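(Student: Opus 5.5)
Our plan is to adapt the classical argument that, for total search problems, pseudo-deterministic (or even randomized) algorithms can be simulated deterministically, using the standard "polynomial method / certificate" machinery that gives $\D(f)=O(\Q(f)^6)$ for total functions, but exploiting that pseudo-determinism pins down a canonical answer. Fix a pseudo-deterministic quantum algorithm $\alg$ for $\probR$ with $T=\psQ(\probR)$ queries. It computes a function $f:\{0,1\}^N\to\calY$ with $f(X)\in\probR(X)$ for every $X$ (totality means every $X$ is a valid input). The difficulty is that $\calY$ may be huge, so we cannot treat $f$ as a Boolean function directly. We will instead run a deterministic ``guess a candidate output and verify/refute it'' loop, in the spirit of the classical $\D(f)\le C(f)\cdot\mathrm{bs}(f)$-type arguments.

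The key steps, in order. First, for each output $y$ define the Boolean function $g_y(X)=[f(X)=y]$. The acceptance probability $p_y(X)=\Pr[\alg^X=y]$ is a polynomial of degree at most $2T$, and it is $\ge 2/3$ when $g_y(X)=1$ and $\le 1/3$ otherwise; so $\widetilde{\deg}(g_y)\le 2T$ uniformly in $y$, hence by the standard bounds $\mathrm{bs}(g_y)=O(T^2)$ and certificate complexity $C(g_y)\le \mathrm{bs}\cdot s=O(T^4)$ (using $s(g_y)\le\mathrm{bs}(g_y)=O(T^2)$). Second, design the deterministic algorithm: maintain the set of queried bits (a partial assignment $\rho$). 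Pick any $X'$ consistent with $\rho$, let $y=f(X')$, and query a minimal $1$-certificate of $g_y$ at $X'$ (size $O(T^4)$). If the true input agrees with this certificate, output $y$ (correct since $f(X)=y\in\probR(X)$). Otherwise we have learned new bits and repeat. Third, bound the number of rounds by $O(T\log N)$-ish: the classical argument shows each round where a $1$-certificate of some $g_y$ is contradicted must intersect every $0$-certificate, yielding progress on the degree of the restricted polynomial. Here we would use that for the true input $X$ with $f(X)=y^\ast$, the polynomial $p_{y^\ast}$ restricted to $\rho$ must keep a certificate-disjointness structure; each failed round fixes a variable in every maximum-degree monomial of some relevant polynomial, dropping degree.

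The main obstacle is the round-count bound, because $\calY$ is unbounded and we cannot iterate over all $y$ with a single polynomial. What I would try first: instead of degree, use a potential function via the $\log N$ factor, e.g. track $\sum_y$ of a normalized polynomial weight, or argue that each failed round makes the hypothetical solution $y$ inconsistent while halving the set of surviving ``candidate certificates'' of size $O(T^4)$, each appearing in the approximating polynomial; combining $O(T)$ rounds-type degree reduction with an $O(\log N)$ factor from a union/halving argument over positions would give total $O(T^4)\cdot O(T\log N)=O(T^5\log N)$, matching the claimed bound. If the degree argument fails for general $\calY$, the fallback is to first compress outputs: since $f(X)$ is determined by $O(T)$-query quantum behavior, index outputs bitwise by binary-searching on coordinates of $y$, paying the $\log N$ factor.
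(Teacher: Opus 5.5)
\textbf{There is a genuine gap.} Your main route, the guess-a-certificate loop, has no round bound, and you do not supply one.

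A failed round makes guaranteed progress only when $y\neq f(X)$. In that case the queried certificate conflicts with every certificate of the true input on a not-yet-queried variable. When $y=f(X)$ but the certificate chosen at $X'$ disagrees with $X$, nothing is gained. Successive rounds also involve different functions $g_y$, so there is no single polynomial whose degree drops. The potential-function and halving ideas are not substantiated, and nothing in the loop produces a $\log N$ factor.

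The loop can be repaired by deciding $g_y(X)$ outright after a failed certificate. Then every continuing round has $y\neq f(X)$ and reveals a fresh variable of a fixed minimal certificate at $X$. But the resulting bound is about $(C+1)(C+\max_y\D(g_y))=O(T^8)$, not the stated $O(T^5\log N)$.

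Your fallback, computing the canonical output bit by bit, is exactly the paper's strategy. However, it rests on a key lemma you never state: the canonical-output function $f$ takes at most $N^{O(T)}$ distinct values, so each value can be encoded in $O(T\log N)$ bits. Without this, the bitwise decomposition costs a factor equal to the output length, which is unrelated to $N$.

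The paper proves $|\mathrm{Im}(f)|=O((2N)^{3T/2})$ in two steps. Holevo's bound shows the pre-measurement state lies in a subspace of dimension at most $(2N)^T$, so $I(\rvX:\rvY)\le T\log(2N)$. Fano's inequality gives $I(\rvX:\rvY)\ge\frac{2}{3}\log m-2$ for $\rvX$ uniform over $m$ inputs with distinct canonical outputs.

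Your own polynomials $p_y$ would also give the lemma. Take inputs $X_1,\dots,X_m$ with distinct canonical outputs $y_1,\dots,y_m$. The matrix $M_{jk}=p_{y_j}(X_k)$ has diagonal entries at least $2/3$ and off-diagonal column sums at most $1/3$, so it is nonsingular. Its rows are evaluations of polynomials from a space of dimension $\sum_{i\le 2T}\binom{N}{i}\le (N+1)^{2T}$, which forces $m\le (N+1)^{2T}$.

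Finally, each bit $f_i$ is a total Boolean function with $\Q(f_i)\le T$. Reaching exponent $5$ requires $\D(f_i)=O(\Q(f_i)^4)$ from \cite{ABKRT21}, which relies on Huang's sensitivity theorem. The $\D=O(\Q^6)$ relation you invoke would only give $O(T^7\log N)$.
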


In contrast, in the classical setting, it was shown in \cite{GGR13} that pseudo-deterministic classical algorithms for total problems can be emulated by deterministic algorithms with a quartic blow-up $\D(R)=O(\psR(R)^{4}\log(N))$. Our proof of \cref{thm:TotalPoly} follows the strategy of \cite{GGR13}, but poses a new challenge; namely, our main technical contribution in the proof of \cref{{thm:TotalPoly}} is a lemma (\cref{lem:MaxOutput}) that upper bounds the number of possible outputs of a pseudo-deterministic quantum algorithm; this step is trivial in the classical case.

\paragraph{General pseudo-deterministic quantum query upper bound.}
We also observe that the completeness result in \cite{GIPS21}, which states that the \findone problem is \emph{complete} for pseudo-deterministic classical query complexity, extends to the quantum setting. By combining it with the fact that $\psQ(\findone)=\Theta(\sqrt{N})$~\cite{GIPS21},\footnote{An $\tilde{O}(\sqrt{N})$ upper bound for \findone is informally given in \cite{GIPS21}. We make a more careful analysis and show that this logarithmic term can be avoided. See \cref{sec:GrovWit} and \cref{thm:PsDGrov} for details.} we derive a general upper bound on the pseudo-deterministic quantum query complexity of any problem via its randomized query complexity and verification complexity (i.e., the deterministic query complexity of verifying whether an input-solution pair is in a relation).

\begin{restatable}{theorem}{FindoneQComplete}\label{thm:Find1QComplete} 
For any search problem $\probR$ with randomized query complexity $\R(R)$ and verification complexity $\V(R)$, we have $\psQ(R) = O\big((\R(R)+\V(R))\cdot \sqrt{N}\big)$.
\end{restatable}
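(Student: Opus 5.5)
We would prove \cref{thm:Find1QComplete} by reducing $\probR$ to one instance of \findone, following the classical completeness argument of \cite{GIPS21}, and then running the pseudo-deterministic quantum algorithm for \findone with $O(\sqrt{M})$ queries (\cref{thm:PsDGrov}) on that instance. Write $r=\R(R)$ and $v=\V(R)$.

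\textbf{Step 1: amplify the randomized algorithm.} Let $\alg$ be a randomized algorithm for $\probR$ with $r$ queries and success probability $2/3$. We first amplify it without a verifier-free majority vote, since majority is unavailable when solutions are not unique. The amplified algorithm runs $\alg$ a constant number of times and checks each candidate output $y$ deterministically with $v$ queries. It outputs the first candidate that passes the check. This gives an algorithm with $O(r+v)$ queries whose output is a \emph{verified} solution with probability at least $1/2$; otherwise it outputs $\bot$. We index its random strings by $s\in\{0,1\}^{\ell}$, where $\ell$ is at most the total number of random bits used.

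\textbf{Step 2: build the \findone instance and bound its length.} Define the string $Z\in\{0,1\}^{2^\ell}$ by setting $Z_s=1$ iff the run on seed $s$ outputs a verified solution. Then $Z$ has weight at least half, and each bit $Z_s$ costs $O(r+v)$ queries to $X$. The obstacle is that $2^\ell$ could be huge. We handle it with the standard Newman-style argument: for each input $X$ (out of at most $2^N$), a random multiset of $O(N)$ seeds preserves the property that at least a $1/3$ fraction of them succeed, except with probability $2^{-2N}$. A union bound over all inputs then gives one fixed multiset $S$ of size $M=O(N)$ that works for every input. After a constant reweighting (run \findone with threshold $1/3$, which only changes constants in \cref{thm:PsDGrov}), $Z$ restricted to $S$ is a valid \findone instance of length $O(N)$.

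\textbf{Step 3: answer queries to $Z$ with a quantum oracle.} A quantum query to $Z$ is simulated by coherently running the deterministic computation for seed $s$: $O(r+v)$ classical queries, implemented reversibly, followed by uncomputation. This costs $O(r+v)$ quantum queries to $X$ per query to $Z$.

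\textbf{Step 4: run \findone pseudo-deterministically.} The pseudo-deterministic \findone algorithm uses $O(\sqrt{M})=O(\sqrt N)$ queries to $Z$, so the total cost is $O((r+v)\sqrt N)$. It outputs a canonical index $s^*$ with high probability. We then rerun seed $s^*$ classically, with $O(r+v)$ queries, and output its verified solution. This final output is a deterministic function of $s^*$, so the combined algorithm is pseudo-deterministic. It is also correct, because $Z_{s^*}=1$ means the output was verified.

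The main step requiring care is Step 2, the seed reduction that keeps $M=O(N)$, so that the $\sqrt{N}$ bound holds rather than $\sqrt{2^\ell}$. A secondary point is that \cref{thm:PsDGrov} must tolerate weight $1/3$ in place of $1/2$, or else we amplify further in Step 1 to restore density $\ge 1/2$ after sampling.
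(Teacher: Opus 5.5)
Your proposal is correct and follows essentially the same route as the paper. You sparsify the randomized algorithm to a multiset of $O(N)$ deterministic algorithms via a Chernoff/union bound over all inputs, define the \findone string $Z$ whose bits each cost $O(\R(R)+\V(R))$ queries, and run the $O(\sqrt{N})$ pseudo-deterministic binary-search Grover of \cref{thm:PsDGrov}. Your Step 1 amplification and the density-threshold worry are unnecessary: a single verified run already succeeds with probability $2/3$, and \cref{thm:PsDGrov} only needs $Z\neq 0^M$. Your explicit coherent simulation of $Z$-queries with uncomputation, and the final re-run of seed $s^*$, spell out details the paper leaves implicit.
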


This implies that any lower bound for a pseudo-deterministic quantum query complexity problem greater than $\Omega(\sqrt{N})$ requires either a high randomized query complexity or it is difficult to verify. For instance, in our maximally pseudo-deterministically hard problem \AOES (\cref{def: AOES}), the randomized query complexity is $O(1)$, but it is very hard to verify.

\subsubsection{Making quantum algorithms pseudo-deterministic.}
In \cite{GIPS21}, it was observed that by integrating Grover’s algorithm in a binary search, one can obtain a pseudo-deterministic quantum algorithm for \findone that returns the first non-zero index in $\tilde{\Theta}(\sqrt{N})$ queries.

We extend this idea further and generalize it to obtain pseudo-deterministic quantum algorithms for a broad family of \emph{$k$-subset finding} problems~\cite{CE05subset}. Here the input $X\in\Sigma^N$ consists of $N$ elements, and the goal is to find a subset of $k$ elements satisfying a predicate. To make the binary-search approach applicable, we introduce a mild condition, \Prunable, which informally requires that restricting the search domain does not increase the query cost. We show that for \Prunable{} problems, any subset-finding quantum algorithm can be turned pseudo-deterministic with only a near-linear dependence on $k$.

\begin{restatable}{theorem}{prunable}\label{thm:prunable}
    Let $\probR$ be a $k$-subset finding problem. If $\probR$ is \Prunable, then $\psQ(\probR) = \tilde{O}(k\cdot \Q(\probR))$.
\end{restatable}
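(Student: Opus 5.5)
The plan is to generalize the binary-search-with-Grover idea of \cite{GIPS21} from single indices to $k$-subsets, and to use \Prunable to run the given quantum algorithm on restricted domains. I will assume the \Prunable condition says the following. For any subset (or prefix) $S\subseteq[N]$, the restricted problem $\probR_S$ asks for a solution whose $k$ elements lie in $S$, possibly with some elements fixed in advance. I assume $\probR_S$ can be decided with $O(\Q(\probR))$ queries, i.e., restricting the domain does not increase the cost. The canonical output $f(X)$ will be the lexicographically smallest valid $k$-subset $\{i_1<\dots<i_k\}$, written in sorted order.

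\textbf{Step 1: a bounded-error decision subroutine.} From a $\Q(\probR)$-query algorithm $\mathcal{A}$ for $\probR$, together with verification of the $k$-subset it outputs ($k$ queries to check the predicate), I build a decision procedure. Given $S$ and a set $F$ of already-fixed elements, it decides whether there is a solution containing $F$ whose remaining elements lie in $S$. A positive answer is certified by the output, so errors are one-sided. Repeating $O(\log(k\log N))$ times drives the error below $1/(10k\log N)$. This costs $\tilde O(\Q(\probR))$ per call, using \Prunable to bound the cost on restricted domains.

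\textbf{Step 2: coordinate-by-coordinate binary search.} The elements $i_1,i_2,\dots$ are determined in order. Suppose $i_1,\dots,i_{j-1}$ are already fixed. To find $i_j$, binary search for the smallest $t$ such that some solution exists extending $\{i_1,\dots,i_{j-1}\}$ with $i_j\le t$ and all later elements greater than $i_j$. This needs a slight reformulation: fix $i_j$'s range to $(i_{j-1},t]$ and allow the remaining elements anywhere above. I will phrase this as a restriction that \Prunable covers. Each coordinate takes $O(\log N)$ decision calls, so there are $O(k\log N)$ calls in total. A union bound shows that with probability at least $2/3$ every call is correct, and then the output equals $f(X)$. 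The total cost is $O(k\log N)\cdot\tilde O(\Q(\probR))=\tilde O(k\cdot\Q(\probR))$.

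\textbf{Main obstacle.} The hard part is matching the restricted queries in Step 2 to exactly what \Prunable provides. Two restrictions are needed: some elements are fixed, and the $j$-th element is confined to an interval while the others are unconstrained above it. If \Prunable only allows deleting elements of the domain, I would fall back on a smaller search. For each candidate interval, I would run $\mathcal{A}$ on the domain $\{i_1,\dots,i_{j-1}\}\cup(i_{j-1},N]$ with the elements of $(t,N]$ made unusable as the $j$-th element. Handling the last of these is delicate, so I would rather use prefix restrictions and select the minimal solution under a suitable ordering. In particular, picking the lexicographically smallest sorted tuple under reverse order, i.e.\ minimizing the largest element first via prefix domains $[1,t]$, needs only pure domain restrictions. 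That is: binary search for the smallest $t$ such that $\probR_{[1,t]}$ has a solution, which forces $t\in$ the solution. Then fix $t$ and recurse on $[1,t-1]$ with $t$ marked as required. The "required element" condition is the only nontrivial part, and I expect it to be what the \Prunable definition is designed to permit.
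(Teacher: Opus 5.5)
Your architecture matches the paper's: you fix the canonical subset one element at a time and binary-search each element with $O(\log N)$ calls to a pruned-domain algorithm. Each call is amplified $O(\log(k\log N))$ times with one-sided (verified) error, and you union bound over $O(k\log N)$ calls. Your prefix-domain fallback is the mirror image of the paper's algorithm, which builds the lexicographically \emph{last} subset using suffix domains $T\cup\{i,\dots,N\}$.

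The gap is the step you flag yourself: the ``required element'' condition. As defined in the paper, \Prunable{} only provides algorithms for $\probR\vert_I=\probR\cap(\Sigma^k\times I^k)$, i.e., pure domain restrictions. It gives no way to force a chosen index into the solution. So your expectation that the definition is designed to permit this is not correct, and the recursion on $[1,t-1]$ ``with $t$ marked as required'' is unjustified as written. Your primary Step~2 (minimize $i_1$ first, later elements unconstrained) is not a domain restriction either: ``some solution has an element $\le t$'' is not of the form ``some solution lies inside $I$''.

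The missing idea is that you never need to mark elements as required. Put them into the domain and let the extremality of earlier choices enforce their inclusion.
\begin{itemize}
\item Let the chosen indices be $t_1>t_2>\dots>t_{j-1}$ and $T=\{t_1,\dots,t_{j-1}\}$. At step $j$, binary-search for the smallest $t<t_{j-1}$ such that $\probR\vert_{T\cup[1,t]}$ has a solution. For $j=1$, $T=\emptyset$ and $t$ ranges over $[N]$.
\item Every solution in $T\cup[1,t]$ contains $T$. Suppose a solution $S\subseteq T\cup[1,t]$ missed some $t_{j'}\in T$. Since $t_{j'+1},\dots,t_{j-1}$ and $t$ are all smaller than $t_{j'}$, we would have $S\subseteq\{t_1,\dots,t_{j'-1}\}\cup[1,t_{j'}-1]$. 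This contradicts the minimality of $t_{j'}$ at step $j'$.
\item The predicate is monotone in $t$. It holds at $t=t_{j-1}-1$, because $T\cup[1,t_{j-1}-1]=\{t_1,\dots,t_{j-2}\}\cup[1,t_{j-1}]$, which contains a solution by step $j-1$.
\item At the minimal $t$, every solution in $T\cup[1,t]$ also contains $t$. Otherwise it would lie in $T\cup[1,t-1]$, which contains no solution.
\end{itemize}
After $k$ steps, the $k$-element set $T$ is therefore itself a solution. With this observation your proof is complete and gives the stated $\tilde O(k\cdot\Q(\probR))$ bound. The paper's algorithm uses exactly this device, querying $\probR\vert_I$ with $I=T\cup\{i,\dots,N\}$.
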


We note that many natural $k$-subset finding tasks are \Prunable, e.g., $k$-distinctness, triangle finding, and $k$-sum. However, we observe that the condition can fail for $k$-subset finding problems with global structure, such as \findone{}, when interpreted as a $1$-subset finding problem.

\subsubsection{Beyond the query complexity model}
Finally, we study pseudo-deterministic quantum algorithms outside the setting of the standard query complexity model; namely, by considering search problems with quantum inputs, as well as the white-box setting.

\paragraph{Search problems with quantum inputs.}

We consider a different sort of search problems for which the input is a quantum state. Here, the query complexity is defined as the number of copies of the input state required to solve the problem.
Pseudo-determinism could be useful in problems such as finding a computational basis element in the support of the input state, finding a state orthogonal to the input state, etc. We show, however, there can be no pseudo-deterministic algorithms if the input domain is a connected continuous space.  

Nevertheless, we consider a quantum version of \findone we call \UnifSuppFind: finding a computational basis element in the support of an input state $\ket{\psi}$, where $\ket{\psi}$ is promised to be a \emph{uniform} superposition over a subset of basis. We prove that the query complexity of \UnifSuppFind is $\Theta(d)$, where $d$ is the dimension of $\ket{\psi}$ (\autoref{thm: suppfind}).

\paragraph{The white-box setting.}\label{sec:WB}

Finally, we give the characterization of problems that admit efficient pseudo-deterministic quantum algorithms in the white-box (Turing machine) setting.

\begin{restatable}{theorem}{WhiteBox}\label{thm: wb PsQ to FPBQP}
    A search problem $\probR$ has a pseudo-deterministic quantum polynomial-time algorithm if and only if $\probR$ is $\P$-reducible\footnote{A $\P$-reduction refers to a deterministic polynomial time Turing reduction.} to some decision problem in $\BQP$.
\end{restatable}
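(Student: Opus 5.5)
Both directions go through the standard correspondence between pseudo-deterministic algorithms and bit-by-bit computation of a canonical function. The model is fixed as follows. A search problem $\probR$ has a pseudo-deterministic quantum polynomial-time algorithm if there are a polynomial-time uniform quantum algorithm $\alg$ and a function $f$ with $(X,f(X))\in\probR$ such that $\Pr[\alg(X)=f(X)]\ge 2/3$ for every $X$ in the domain. Here $|f(X)|\le \mathrm{poly}(|X|)$, since the output is produced in polynomial time.

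\emph{($\Rightarrow$)} Given such an $\alg$ with canonical function $f$, define the decision problem
\[
L=\{(X,i,b) : \text{the $i$-th bit of } f(X) \text{ equals } b\},
\]
padded with symbols for ``end of string'' so that the length of $f(X)$ is also encoded. Bits are indexed by $i\le p(|X|)$.

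$L\in\BQP$: on input $(X,i,b)$, run $\alg(X)$, read off the $i$-th output symbol, and compare it with $b$. With probability at least $2/3$ the output is exactly $f(X)$, so the answer is correct with probability at least $2/3$.

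This is a promise problem if $\probR$ has a promise on inputs. That is fine, because the reduction only queries $(X,i,b)$ with $X$ in the domain.

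The $\P$-reduction queries $L$ on $(X,i,0)$ for $i=1,\dots,p(|X|)$, reconstructs $f(X)$ exactly, and outputs it. The reduction is deterministic, so it solves $\probR$.

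\emph{($\Leftarrow$)} Suppose a deterministic polynomial-time oracle machine $M$ solves $\probR$ with oracle access to $L\in\BQP$. Since $M$ is deterministic and the oracle answers are fixed, $M^L(X)$ is a fixed string $f(X)$ with $(X,f(X))\in\probR$.

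To simulate $M$ quantumly, first amplify the $\BQP$ algorithm for $L$ by majority vote over $O(\log |X|)$ repetitions, so that each call errs with probability at most $1/(10\,q(|X|))$, where $q$ bounds the number of oracle calls. Then run $M$, answering each query with a freshly measured amplified run.

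By a union bound, with probability at least $9/10$ every answer equals $L$'s true answer. On that event the simulation reproduces $M^L(X)=f(X)$ exactly, so the quantum algorithm is pseudo-deterministic with canonical output $f$.

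\emph{Main obstacle.} The delicate point is promise problems and adaptivity in the $(\Leftarrow)$ direction. If $L$ is a promise problem, $M$ may query strings outside the promise, where the $\BQP$ machine's answer is not well defined. It may then answer inconsistently across runs, destroying canonicity.

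I would handle this by requiring the reduction to make only queries satisfying the promise, as is standard for $\P^{\mathsf{PromiseBQP}}$ reductions to be well defined. Alternatively, if $L$ is a language, the issue disappears. The forward direction produces exactly such an $L$: it is total on $\{(X,i,b)\}$ for $X$ in the domain, so the two directions match.

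Adaptivity itself causes no trouble. Conditioned on all earlier answers being correct, the next query is the deterministic one, and the union bound over $q$ queries covers the whole run.
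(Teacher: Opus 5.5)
Your proof is correct and follows essentially the same route as the paper: the forward direction reduces to a decision problem about individual bits of the canonical output (the paper uses $\{(x,i,b): \Pr[\alg(x)_i=b]>2/3\}$, while you use $f(X)_i=b$, which avoids the threshold subtlety and is genuinely bounded-error), and the reverse direction simulates the deterministic reduction using $\BQP$ calls. Your amplification-plus-union-bound over the adaptive queries and your remark on promise queries make explicit details the paper leaves implicit; one small fix is that you should also query $b=$ ``end'' so that a $1$ is not confused with the end marker.
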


We prove \autoref{thm: wb PsQ to FPBQP} in \cref{App:WhiteBox}, which is generalized from its classical analog in \cite{GGR13}. Intuitively, \autoref{thm: wb PsQ to FPBQP} says that any search problem $R$ that has an efficient pseudo-deterministic quantum algorithm if and only if it has a \emph{search-to-decision} reduction to $\BQP$. In other words, any (super-polynomial) pseudo-deterministic quantum advantage for a search problem essentially comes from the quantum advantages for a certain decision problem.

\subsection{Techniques}

In this section, we provide a taste of our techniques for establishing the query complexity separations in \autoref{thm:IntroAOES} and \autoref{thm:IntroSH}. In both cases, the key part is proving query complexity lower bounds for pseudo-deterministic quantum or classical algorithms, where new techniques are needed. This can be attributed to a lack of tools for proving pseudo-deterministic lower bounds, and that well-established techniques such as the polynomial method cannot be directly applied.

\paragraph{A maximal separation between randomized and pseudo-deterministic quantum algorithms.}

A quintessential advantage of randomized algorithms compared to pseudo-deterministic ones is their ability to output random strings.
We consider the following problem, called \emph{Avoid One Encrypted String} (\AOES), which capitalizes on this advantage. The input of \AOES consists of $m$ instances of the \XOR problem in parallel, where $m$ is a big constant. Let $N$ be the total input length and $N/m$ be the input length of each \XOR instance. These \XOR instances encode a secret string $b \in \{0,1\}^m$, where $b_i$ is the value of the $i$-th \XOR instance. The goal is to output any $m$-bit string that is not equal to $b$.

\AOES can be solved trivially by randomly sampling an $m$-bit string with accuracy $1 - 1/2^m$. However, it is not difficult to show that no deterministic algorithm can do better than using $N/m$ queries to solve the first \XOR instance, and then output any string with the first bit unmatched. 
Intuitively, any pseudo-deterministic algorithm that can avoid outputting $b$ consistently must have learned some information about those \XOR instances. On the other hand, \XOR is maximally hard for quantum query algorithm, i.e., any quantum query algorithm for \XOR of length $n$ cannot do any better than a random guess with less than $n/2$ queries~\cite{farhi1998limit, BBCMW01}. 

To turn the intuitions to a proof, we provide a new \emph{randomized} reduction from \XOR to \AOES. Namely, we show that if there exists a zero-error quantum algorithm $\mathcal{B}$ for $\AOES_{N,m}$ with query complexity $q$, then there exists a quantum algorithm $\mathcal{A}$ for $\XOR_{N/m}$ which calls $\mathcal{B}$ once such that for every input $ X\in \{0,1\}^{N/m}$,
    \begin{equation*}
        \mathbb{P}\left[\mathcal{A}^{X}=\XORfunc{X} \right]\geq \frac{1}{2} +\frac{1}{2^{m+1}}.
    \end{equation*}

To prove that, let $\mathcal{B}$ be a solver for the \AOES problem, and let $X \in \{0,1\}^{N/m}$ be the \XOR instance we want to solve. We build a random \AOES instance $Z = (Z_1, \ldots, Z_m)$ in two steps. First, we generate $m$ uniformly random string $Y_1, \ldots, Y_m \in \{0,1\}^{N/m}$. Then, for each position $i\in [m]$, we take $Z_i = Y_i$ with probability $1/2$, and $Z_i = Y_i \oplus X$ with probability $1/2$. 

At a high-level, $Y_1, \ldots, Y_m$ can be viewed as a one-time pad, and $X$ is embedded into a random subset of positions in the \AOES instance $Z$. The crucial observation is that from the \AOES solver $\mathcal{B}$'s perspective, $Z$ is uniformly random, and it has no information about $X$, and where $X$ is embedded in $Z$.
We can show that when we embed $X$ into a certain subset of positions in $Z$ that depends on $\mathcal{B}$, we can then recover the solution of the \XOR instance $X$ with certainty. When $\mathcal{B}$ is a zero-error algorithm for \AOES, the overall success rate of our randomized reduction is $1/2 + 1/2^{m+1}$, and any value strictly greater than $1/2$ suffices given the hardness of \XOR. 

Finally, for any pseudo-deterministic algorithm $\mathcal{B}$ for \AOES, we can decrease its error rate to $1/2^{2m}$ via $O(m)$ repetitions and picking the most common outputs. We can then plug this extremely low error \AOES algorithm into the randomized reduction and still get non-zero advantages on \XOR. Note that while the random sampling algorithm has an error rate of $1/2^m$, it cannot further reduce it via repetition.

\paragraph{Exponential pseudo-deterministic quantum advantage.}

We construct a problem called Quantum-Locked Estimation (\SimonHamming), which is easy for randomized algorithms as well as quantum pseudo-deterministic algorithms, but is hard for classical pseudo-deterministic algorithms.

Let $N = 2^n$. For simplicity, we sometimes interpret a value $x \in [N]$ as an $n$-bit string in the natural way, and vice versa.
Recall that in Simon's problem~\cite{Simon97}, 
the input is a function $f:[N]\rightarrow[N]$ which is promised to encode a secret string $s$ such that $f(x)=f(y)\iff (x=y) \vee (x=y\oplus s)$. In the \Hamming problem, an input string $X\in \{0,1\}^N$ is given, and the goal is to output an estimate of the Hamming weight of $X$ up to additive error $N/10$. 

In the \SimonHamming problem, the input is simply a pair $(f,X)$, where $f$ is an instance of the Simon's problem and $X$ is an instance of the \Hamming problem.
It is promised that $f$ encodes some string $s$ that is promised to be a valid estimate of the Hamming weight of $X$ ($\big|s-\|X\|_{\mathsf{hw}}\big|\leq 0.09\cdot N$). 
The goal is to output an estimate for the Hamming weight of $X$ with an additive error up to $N/10$.

The classical randomized algorithm can estimate the Hamming weight with $O(1)$ queries, and a quantum algorithm can calculate the secret string $s$, which is a valid solution by the promise of the problem. Indeed, this quantum algorithm \emph{canonizes} the particular estimate $s$, which is one of the many different estimates of the Hamming weight that a classical randomized algorithm could reasonably produce.\footnote{To ensure the condition of the canonization (\cref{def: Elevate}) is satisfied, we need a slightly adjusted classical random sampling algorithm with $O(\log N)$ queries. See \cref{sec:SimonHamming} for more details.}
We view this as a ``lock-and-key'' construction, where 
the \Hamming problem locks a canonical solution that is hidden from pseudo-deterministic classical algorithms, while the key is Simon's problem, to which only quantum algorithms have access.

For the lower bound of pseudo-deterministic classical algorithms, 
it is helpful to first revisit the pseudo-deterministic hardness of \Hamming via a \emph{sensitivity argument}. Suppose for some string $X$ (where $\|X\|_{\mathsf{hw}}=N/2$), the pseudo-deterministic output of some algorithm is $t$. Let $X'$ be the string with the smallest Hamming weight whose pseudo-deterministic output is $t$. This algorithm is  \emph{sensitive} at $X'$ as flipping any $1$-bit of $X'$ will result in a string whose pseudo-deterministic output is not $t$ (as it will have a smaller Hamming weight), which implies a large query complexity of differentiating $X'$ from any such string. 

However, a high sensitivity also implies a large quantum query complexity, therefore, we have to modify our argument for the \SimonHamming problem, which is easy for a quantum algorithm. Suppose $\alg$ is a classical pseudo-deterministic algorithm for \SimonHamming. The crux of the proof is arguing that if $\alg$ has sub-linear sensitivity everywhere, then it must \emph{approximately} solve the secret string $s$ from the Simon's problem (\cref{lem: reduce Simon to SH}). 
To this end, we apply a similar trick from \Hamming's proof: finding a string $X'$ with the smallest Hamming weight among those strings that yield the same output by $\alg$; then flipping its $1$-bits. The analysis here is more delicate because the Hamming and Simon components of the problem are not independent; the Simon witness is coupled with the Hamming input, and therefore standard product-distribution or decoupling arguments do not apply.

Finally, we show that solving Simon's problem approximately is as hard as solving it exactly, and hence we conclude that $\alg$ must use $\Omega(\sqrt{N})$ queries.

\subsection{Paper Organization}

In \cref{sec:prelim} we set up the query model, define pseudo-deterministic classical/quantum algorithms, and review key preliminaries (including Find1 and information-theoretic tools). \cref{sec:Basic} proves foundational results, including the quintic relationship for total search problems and the completeness of Find1 for quantum pseudo-determinism. \cref{sec:Sep} gives our main separations: \cref{sec:AOES} introduces AOES to obtain a maximal gap between randomized and pseudo-deterministic quantum query complexity, and \cref{sec:SimonHamming} defines \SimonHamming to separate pseudo-deterministic quantum from pseudo-deterministic classical algorithms while remaining randomized-easy. \cref{sec:GrovWit} shows how to make a broad class of quantum subset-finding algorithms pseudo-deterministic with small overhead. Finally, \cref{sec:quantInputs} studies pseudo-determinism for search problems with quantum inputs, proving topological limitations and tight bounds for a quantum Find1 analogue.

\section{Preliminaries}\label{sec:prelim}

In the query complexity model, an algorithm $\mathcal{A}$ solves a search problem $\probR \subseteq \mathcal{X}\times \mathcal{Y}$ if, given query access to the input $X \in \mathcal{X}$, the algorithm outputs $Y\in \mathcal{Y}$ such that $(X,Y)\in \probR$ with probability at least $2/3$. We denote the output of algorithm $\alg$ given query access to input $X$ as $\mathcal{A}^{X}$.
Unless otherwise stated the set of inputs is a subset of the binary strings, $\mathcal{X}\subseteq \bigcup_{N\in\mathbb{N}}\{0,1\}^{N}$. In this case, the algorithm has explicit access to the input length $N\in\mathbb{N}$ for which $X\in \{0,1\}^N$. The complexity measure here is the number of queries made to the black box oracle. Unless there are other parameters to the search problem we will define $\mathcal{X}_{N}\coloneqq \mathcal{X}\cap \{0,1\}^N$ to be the subset of the search problem for inputs of length $N$. When the length of the input is clear from context we will suppress the subscript and write only $\mathcal{X}$.

\begin{definition}[Pseudo-deterministic algorithm]\label{def:PsD_alg}
    An algorithm $\mathcal{A}$ for a search problem $\probR\subseteq\mathcal{X}\times \mathcal{Y}$ is said to be pseudo-deterministic if there exists a function $f:\mathcal{X}\rightarrow\mathcal{Y}$ such that for all $X\in\mathcal{X}$, the output of $\mathcal{A}(X)$ is $f(X)$ with probability at least $2/3$ and the output of $f$ is always a solution to the search problem. Here, the output $f(X)$ is said to be the canonical or pseudo-deterministic solution. Equivalently,
    \begin{equation*}
        \exists f:\mathcal{X}\rightarrow\mathcal{Y},\forall X\in \mathcal{X} \left((X,f(X))\in \probR \wedge \mathbb{P}[\mathcal{A}(X)=f(X)]\geq \frac{2}{3}\right).
    \end{equation*}
\end{definition}

In the classical setting a query to the black box is specified by the algorithm as an index $i\in[N]$ and then the algorithm receives the value $X_i$. For a quantum algorithm with a classical input, the algorithm is a circuit composed of unitaries and a single query is a single use of the oracle unitary $U_X$ whose action is expressed as
\begin{equation*}
    \forall i\in [N],j\in \{0,1\}:U_{X}\ket{i}\otimes\ket{j}=\ket{i}\otimes\ket{j\oplus X_{i}}.
\end{equation*}

\begin{definition}[Query complexity measures]
    For a search problem $\probR\subseteq \calX\times \calY$, we denote the following measures of complexity in the query setting.

    \begin{itemize}
        \item $\D(\probR)$ is the deterministic query complexity of solving $\probR$.
        \item $\R(\probR)$ is the randomized classical query complexity of solving $\probR$.
        \item $\Q(\probR)$ is the randomized bounded-error quantum query complexity of solving $\probR$.
        \item $\psR(R)$ is the pseudo-deterministic bounded error classical query complexity of solving $\probR$.
        \item $\psQ(\probR)$ is the pseudo-deterministic bounded-error quantum query complexity of solving $\probR$.
    \end{itemize}
    Here, bounded-error means that any algorithm must succeed with probability at least $2/3$. We define $\D,\R,\Q$ similarly for a Boolean function $f: \calX\rightarrow \{0,1\}$.
\end{definition}

\subsection{\textsc{Find1} problem}

We define the \findone problem and review the existing completeness and quantum lower bound results from \cite{GIPS21}.

\begin{definition}[\findone]
    On input $X\in \{0,1\}^{N}$ such that $\sum_{j}X_{j}\geq N/2$, a solution to the \findone problem is an index $i$ such that $X_{i}=1$. 
\end{definition}

The next theorem establishes the completeness of the \findone problem for pseudo-deterministic query algorithms.

\begin{theorem}[\cite{GIPS21} Theorem 3, Completeness of \findone]
    Let $r,q,v$ be functions satisfying $f(\Theta(N))=\Theta(f(N))$ for any $f\in \{r,q,v\}$
    Let $\probR$ be a search problem such that with input size $N$, a solution is verifiable with $v(N)$ queries, the bounded error randomized query complexity is at most $r(N)$, and the bounded error pseudo-deterministic query complexity of $\probR$ is at least $q(N)$. Then the pseudo-deterministic query complexity of \findone is at least $\Omega(q(N)/(r(N)+v(N)))$.
\end{theorem}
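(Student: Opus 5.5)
The plan is to reprove the completeness theorem of \cite{GIPS21} through an explicit reduction: given a pseudo-deterministic algorithm $\mathcal{B}$ for \findone on inputs of length $M$ with query complexity $q_{\findone}(M)$, I will build a pseudo-deterministic algorithm for $\probR$ with query complexity $O\big((r(N)+v(N))\cdot q_{\findone}(M)\big)$ for a suitable $M$. Rearranging against the assumed lower bound $q(N)$ then gives $q_{\findone}(M)=\Omega(q(N)/(r(N)+v(N)))$.

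\textbf{Step 1 (amplify and enumerate seeds).} Let $\mathcal{A}$ be the randomized algorithm for $\probR$ with $r(N)$ queries. By $O(1)$ repetitions and verification, I make $\mathcal{A}$ output a verified solution, or $\bot$, with success probability at least $3/4$. Then, by Newman-style / Adleman-style seed reduction, I restrict its randomness to a finite multiset of $M$ seeds $\rho_1,\dots,\rho_M$. The requirement is that for every input $X$ at least half of the seeds succeed. Since only the existence of a good seed set is needed (nonuniformity is free in query complexity), a union bound over the at most $2^N$ inputs gives $M=O(N)$ seeds. This choice of $M=\Theta(N)$ is exactly where the hypothesis $f(\Theta(N))=\Theta(f(N))$ is used, to compare $q_{\findone}(M)$ with $q_{\findone}(N)$.

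\textbf{Step 2 (implicit \findone instance).} Define $Z\in\{0,1\}^M$ by letting $Z_j=1$ iff running $\mathcal{A}$ on seed $\rho_j$ produces a $y$ with $(X,y)\in\probR$. Each bit $Z_j$ can be computed deterministically with $r(N)+v(N)$ queries to $X$: run $\mathcal{A}$ on seed $\rho_j$, then run the verifier. By Step 1, $Z$ has weight at least $M/2$, so it is a valid \findone instance. Simulating $\mathcal{B}$ on $Z$, each query of $\mathcal{B}$ to $Z$ costs $r+v$ queries to $X$.

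\textbf{Step 3 (canonical output).} If $\mathcal{B}$ returns an index $j$, we run $\mathcal{A}$ on seed $\rho_j$ and output its answer. Because $\mathcal{B}$ is pseudo-deterministic, $j$ equals a canonical $g(Z)$ with probability at least $2/3$. Also $Z$ is a deterministic function of $X$, so the output equals $\mathcal{A}(X;\rho_{g(Z(X))})$, which is a fixed valid solution $f(X)$, with probability at least $2/3$. The resulting algorithm is therefore pseudo-deterministic for $\probR$, with cost $O((r+v)\,q_{\findone}(M)+r)$. The same argument holds verbatim for quantum $\mathcal{B}$, provided the classical subroutine computing $Z_j$ is implemented reversibly with uncomputation.

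\textbf{Main obstacle.} The delicate point is Step 1: keeping $M$ linear in $N$, so that the growth condition on $q$ transfers the bound. If the union bound gave $M$ polynomially larger than $N$, I would instead reduce error with $O(\log N)$-wise repetition inside each seed. That introduces a $\log$ factor, which I would try to absorb by amplifying to constant error only and noting that per-input the seed fraction concentrates.
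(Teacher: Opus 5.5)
Your reduction is correct and matches the paper's own argument (its appendix proof of the quantum analogue uses the same reduction). The shared steps are: fix $\Theta(N)$ deterministic algorithms from the support of the randomized solver, use Chernoff plus a union bound over the $2^N$ inputs so that a majority succeed on every input, let $Z_j=1$ iff the $j$-th algorithm's output passes the verifier (cost $r(N)+v(N)$ per bit), and run the pseudo-deterministic \findone solver on $Z$.

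One bookkeeping correction. The growth hypothesis concerns $r,q,v$, not the complexity of \findone. So the right move is to run the reduction at input length $N/c$, making the \findone instance have length exactly $N$, and then replace $q(N/c),r(N/c),v(N/c)$ by $\Theta(q(N)),\Theta(r(N)),\Theta(v(N))$. Your ``main obstacle'' paragraph is also unnecessary, since constant-gap Chernoff already yields $M=O(N)$, as your Step 1 states.
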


We shall also need the following lower bounds on the pseudo-deterministic quantum complexity of the \findone problem.
\begin{theorem}[\cite{GIPS21} Theorem 4, quantum lower bound]\label{thm:find1_lb}
    The pseudo-deterministic quantum query complexity of \findone is at least $\Omega(\sqrt{N})$.
\end{theorem}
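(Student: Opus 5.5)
The plan is to reduce the pseudo-deterministic quantum \findone problem to any pseudo-deterministic quantum algorithm for a carefully embedded instance of another search problem, and then invoke \cref{thm:find1_lb}. Concretely, I treat the cited result as a black box: suppose, for contradiction, that $\mathcal{B}$ is a pseudo-deterministic quantum algorithm for \findone on inputs of length $N$ using $o(\sqrt{N})$ queries and outputting a canonical index $f(X)$ with probability at least $2/3$. The lower bound of \cite{GIPS21} is proved by a hybrid/adversary argument on inputs of Hamming weight near $N/2$, so my own proof will reconstruct that argument rather than rely on a chain of reductions.

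The key steps, in order, are as follows. First, I fix the canonical function $f$ and consider an input $X$ together with the index $i=f(X)$. Starting from the all-ones string, I walk towards a string $X'$ of weight exactly $\lceil N/2\rceil$ by zeroing coordinates. At each step the canonical answer must remain an index whose bit is $1$. Second, I use a counting argument to find an input $X'$ of weight about $N/2$ such that $f$ changes on many single-bit flips of $X'$. These are flips that turn a $0$ into a $1$, preserving the promise. Third, I show the following: if $f(X'^{(j)})\neq f(X')$ for $\Omega(N)$ positions $j$, then the algorithm distinguishes $X'$ from each such neighbour with constant advantage. This holds because its output distribution concentrates on different values. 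By the standard hybrid argument of Bennett--Bernstein--Brassard--Vazirani, the total query magnitude is then $\sum_j \|\psi_t^{j}\|^2\le T$. Having constant advantage on $\Omega(N)$ disjoint positions then forces $T^2=\Omega(N)$, i.e., $T=\Omega(\sqrt N)$.

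The main obstacle is the second step: guaranteeing many sensitive neighbours. To get it, I would sum over a random $X'$ of weight $N/2$ and count the pairs $(X',j)$ with $X'_j=0$ and $f(X'\cup\{j\})\ne f(X')$. Take $X'$ uniformly of weight $N/2$ and let $j$ be uniform among its zeros. The canonical index $i=f(X')$ satisfies $X'_i=1$, so it cannot serve as a canonical answer for every superset in a consistent way. The idea is to symmetrize over permutations. If $f(X'\cup\{j\})=f(X')$ for most $j$, then iterating upward gives long chains with a fixed canonical index. A double-counting over chains from weight $N/2$ to weight $3N/4$ would then show that some index is canonical for a large fraction of inputs. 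Such an index is zero on a constant fraction of inputs of weight $N/2$, which gives a contradiction. If this bookkeeping proves too delicate, my fallback is to use the general adversary method directly with weights on the bipartite graph of neighbouring inputs with different canonical outputs.
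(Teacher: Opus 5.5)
\textbf{The gap is Step 2.} Your Step 3 is fine: if some valid $X'$ has $s$ coordinates whose flip keeps the promise and changes the canonical output, the BBBV hybrid bound gives $T=\Omega(\sqrt{s})$. Your opening sentence, which plans to ``invoke \cref{thm:find1_lb}'', is circular; I read your real plan as the sensitivity argument. The paper does not reprove this statement but imports it from \cite{GIPS21}, so everything rests on Step 2.

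Step 2 requires that \emph{every} choice function $f$ (with $f(X)\in X$ for all $X$ of weight $\ge N/2$) has a valid input with $\Omega(N)$ sensitive $0\to 1$ flips. The counting you sketch cannot deliver this.
\begin{itemize}
\item Averaging over a uniformly random $X'$ of weight $N/2$ fails outright. Take $f(X)=\min X$: the sensitive up-flips at $X'$ are exactly the zeros below $\min X'$, and their expected number is at most $\sum_{k\ge 1}2^{-k}=1$. The highly sensitive inputs, such as $0^{N/2}1^{N/2}$, are exponentially rare.
\item The chain bookkeeping fails for the same reason. Suppose every input has fewer than $\epsilon N$ sensitive coordinates. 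Then a random upward step from weight $N/2+t$ changes $f$ with probability below $\epsilon N/(N/2-t)\le 4\epsilon$. So the hypothesis only bounds the expected number of changes along a chain of length $N/4$ by about $\epsilon N\ln 2$, which is far from forcing ``long chains with a fixed index''.
\item The contradiction you can actually extract is weak. The top $Z$ of the chain satisfies $f(Z)\notin X'$ with probability $1/3$, so at least one change occurs with probability $\ge 1/3$. This forces only $\epsilon N\ln 2\gtrsim 1/3$, i.e.\ $\Omega(1)$ sensitivity. The example $f=\min$ shows no chain-averaging argument can do better.
\end{itemize}

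What is missing is a worst-case (extremal) argument that locates a highly sensitive input for an arbitrary $f$, and this is not routine. The minimal-weight trick the paper uses for \Hamming handles only one case: some class $f^{-1}(i)$ has an inclusion-minimal member $Y$ of weight $>N/2$, and then all $|Y|$ down-flips are valid and sensitive. If every minimal member has weight exactly $N/2$, no down-flips are available and a new idea is needed.

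There is also a warning sign that Step 2 is hard. Combined with the classical sensitivity bound (each sensitive coordinate must be queried with probability $\ge 1/3$ on $X'$), your Step 2 would give $\psR(\findone)=\Omega(N)$. That is far beyond the $\Omega(\sqrt N)$ classical bound the introduction credits to \cite{GIPS21}, so it cannot be obtained by a loose double count.

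Your fallback does not close the gap either. Restricted to pairs of neighbouring inputs, the adversary value is at most the maximum sensitivity, so it needs the same missing structure. An unrestricted adversary is complete in principle, but then constructing a good adversary matrix for an unknown $f$ is the entire proof, and you have not done it.
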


\subsection{Entropy}

For a random variable $\rvX$, let $|\rvX|$ denote the size of the support of $\rvX$. Throughout, all logarithms are base $2$.

\begin{definition}[Shannon Entropy]
    For a random variable $\rvX$ with probability distribution $(p_{x})_{x}$, the Shannon entropy of $\rvX$ is defined as 
    \begin{equation*}
        H(\rvX)=-\sum_{x}p_{x}\log(p_{x}).
    \end{equation*}
\end{definition}

\begin{definition}[von Neumann Entropy]
    Let $\rho$ be a mixed quantum state with a set of eigenvalues $(p_{x})_{x}$, the von Neumann entropy of $\rho$ is defined as 
    \begin{equation*}
        S(\rho)=H(p)=-\sum_{x}p_{x}\log(p_{x}).
    \end{equation*}
\end{definition}

\begin{remark}\label{rem:maxEnt}
    Note that for both a distribution supported over $n$ elements and a quantum state of dimension $n$, the maximum value of the corresponding notion of entropy is $\log(n)$. This is achieved by the uniform distribution and the maximally mixed state respectively.
\end{remark}

We denote the joint entropy of two random variables $\rvX, \rvY$ by $H(\rvX, \rvY)$. We denote the conditional entropy by $H(\rvX\vert\rvY)=H(\rvX,\rvY)-H(\rvY)$.

\begin{definition}
    The mutual information between two random variables $\rvX,\rvY$ is $I(\rvX,\rvY)=H(\rvX)+H(\rvY)-H(\rvX,\rvY)=H(\rvX)-H(\rvX\vert\rvY)$. 
\end{definition}

\begin{lemma}[Holevo's Bound, Theorem 12.1 \cite{NC16}]\label{lem:Holevo}
    For an ensemble of quantum states $(\rho_{x})_{x\in [n]}$ and a probability distribution $\rvX=(p_{x})_{x\in[n]}$ and a POVM $\{L_{y}\}_{y\in [m]}$. If the output distribution of the POVM on the state $\rho=\sum_{x}p_{x}\rho_{x}$, is $\rvY=\{\mathrm{Tr}(L_{y}\rho)\}_{y\in[m]}$ The following holds, 
    \begin{equation*}
        I(\rvX,\rvY)\leq S(\rho)-\sum_{p_{x}}S(\rho_{x}).
    \end{equation*}
\end{lemma}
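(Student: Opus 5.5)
We will prove \cref{lem:Holevo} by the standard route: encode the ensemble as a single classical--quantum state, identify the right-hand side with its quantum mutual information, and observe that the POVM is a quantum channel, so mutual information can only decrease. (We read the sum in the statement as $\sum_x p_x S(\rho_x)$.)

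\textbf{Step 1: classical--quantum encoding.} Introduce a classical register $A$ with orthonormal basis $\{\ket{x}\}_{x\in[n]}$ and the joint state
\begin{equation*}
\sigma^{AQ}=\sum_{x} p_x \ket{x}\bra{x}\otimes \rho_x .
\end{equation*}
Define the quantum mutual information $I(A:Q)_\sigma = S(\sigma^A)+S(\sigma^Q)-S(\sigma^{AQ})$. We have $\sigma^A=\mathrm{diag}(p_x)$ and $\sigma^Q=\rho$. Since $\sigma^{AQ}$ is block diagonal, $S(\sigma^{AQ})=H(\rvX)+\sum_x p_x S(\rho_x)$. A direct computation therefore gives
\begin{equation*}
I(A:Q)_\sigma=S(\rho)-\sum_x p_x S(\rho_x).
\end{equation*}

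\textbf{Step 2: the measurement as a channel.} Let $\Phi(\tau)=\sum_y \mathrm{Tr}(L_y\tau)\ket{y}\bra{y}$. This map is completely positive and trace preserving because the $L_y$ are positive and sum to the identity. Applying $\mathrm{id}_A\otimes\Phi$ to $\sigma^{AQ}$ yields the diagonal state
\begin{equation*}
\sum_{x,y}p_x\,\mathrm{Tr}(L_y\rho_x)\ket{x}\bra{x}\otimes\ket{y}\bra{y}.
\end{equation*}
This is exactly the joint distribution of $(\rvX,\rvY)$, so its quantum mutual information equals the classical $I(\rvX,\rvY)$.

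\textbf{Step 3: data processing.} It remains to show $I(A:Y)\le I(A:Q)$ whenever a CPTP map acts on the second system. Write the mutual information as a relative entropy:
\begin{equation*}
I(A:Q)=D\bigl(\sigma^{AQ}\,\big\|\,\sigma^A\otimes\sigma^Q\bigr).
\end{equation*}
The map $\mathrm{id}\otimes\Phi$ sends the product $\sigma^A\otimes\sigma^Q$ to the product of the corresponding output marginals. Hence the desired inequality follows from monotonicity of quantum relative entropy under CPTP maps (Lindblad--Uhlmann).

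\textbf{Main obstacle.} The only nontrivial ingredient is this monotonicity, which is equivalent to strong subadditivity (Lieb--Ruskai). Our plan is to invoke it as a known result rather than reprove it. If a more self-contained argument is wanted, take a Stinespring dilation $\Phi(\tau)=\mathrm{Tr}_E(V\tau V^\dagger)$. Isometries preserve all entropies, so $I(A:Q)=I(A:YE)$. Then $I(A:YE)\ge I(A:Y)$, and this last inequality is precisely strong subadditivity $S(AYE)+S(Y)\le S(AY)+S(YE)$.
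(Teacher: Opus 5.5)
Your proof is correct and matches the standard argument: the paper gives no proof of its own but cites Nielsen--Chuang Theorem~12.1, whose proof likewise encodes the ensemble as a classical--quantum state, identifies $S(\rho)-\sum_x p_x S(\rho_x)$ with that state's mutual information, and shows via strong subadditivity that the measurement (a quantum operation on $Q$ plus a measurement register, followed by discarding $Q$) cannot increase mutual information. This is exactly your Stinespring fallback, and your relative-entropy monotonicity route is an equivalent packaging of the same step; you also correctly read the sum as $\sum_x p_x S(\rho_x)$ and correctly identified the post-measurement diagonal state with the joint law of $(\rvX,\rvY)$.
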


\begin{lemma}[Fano's Inequality, Box 12.2 \cite{NC16}]\label{lem:Fano}
    The conditional entropy between two classical random variables $\rvX,\rvY:H(\rvX|\rvY)=H(\rvX,\rvY)-H(\rvY)$ is upper bounded in the following way,
    \begin{equation*}
        H(\rvX|\rvY)\leq H(e,1-e)+e\log(|\rvX|-1).
    \end{equation*}
    Here, $e$ is the minimum error of any algorithm that estimates the value of $\rvX$ from $\rvY$.
\end{lemma}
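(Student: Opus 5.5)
The statement to prove is Fano's inequality (\cref{lem:Fano}). Write $\hat{\rvX}=g(\rvY)$ for an estimator of $\rvX$ from $\rvY$ achieving the minimum error $e=\mathbb{P}[\hat{\rvX}\neq\rvX]$, and let $E$ be the indicator of the event $\hat{\rvX}\neq\rvX$. The plan is to expand the joint conditional entropy $H(E,\rvX\mid\rvY)$ in two ways via the chain rule:
\begin{equation*}
H(E,\rvX\mid\rvY)=H(\rvX\mid\rvY)+H(E\mid\rvX,\rvY)=H(E\mid\rvY)+H(\rvX\mid E,\rvY).
\end{equation*}
Since $E$ is a deterministic function of $(\rvX,\rvY)$, we have $H(E\mid\rvX,\rvY)=0$. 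The left-hand expansion is therefore exactly $H(\rvX\mid\rvY)$.

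Next I bound the two terms on the right.
\begin{itemize}
\item Conditioning does not increase entropy, so $H(E\mid\rvY)\le H(E)=H(e,1-e)$.
\item For the second term, split on the value of $E$:
\begin{equation*}
H(\rvX\mid E,\rvY)=(1-e)\,H(\rvX\mid E=0,\rvY)+e\,H(\rvX\mid E=1,\rvY).
\end{equation*}
When $E=0$, $\rvX=g(\rvY)$ is determined by $\rvY$, so the first conditional entropy vanishes. When $E=1$, $\rvX$ ranges over the support minus the single value $g(\rvY)$, which has at most $|\rvX|-1$ elements. By \cref{rem:maxEnt}, $H(\rvX\mid E=1,\rvY=y)\le\log(|\rvX|-1)$ for each $y$, and averaging over $y$ preserves the bound.
\end{itemize}
Combining these gives $H(\rvX\mid\rvY)\le H(e,1-e)+e\log(|\rvX|-1)$, which is the claimed inequality.

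The only delicate step is justifying the bound of $|\rvX|-1$ elements conditioned on $E=1$ and $\rvY=y$. This requires the estimate $g(y)$ to lie in the support of $\rvX$; if it does not, replace $g(y)$ by a support element, which cannot increase the error. The rest is routine chain-rule bookkeeping. The bound holds for any estimator, so it holds in particular for the minimizing one.
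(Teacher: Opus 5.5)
Your argument is correct and is the standard chain-rule proof from Box~12.2 of Nielsen--Chuang, which the paper cites for this lemma without reproducing it. Your handling of the support issue is sound, and your implicit assumption that the minimum error over all estimators, randomized ones included, is attained by a deterministic estimator $g$ (the MAP estimator) is standard.
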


\section{Fundamentals of pseudo-deterministic quantum query complexity}\label{sec:Basic}

In this section, we establish two fundamental properties of pseudo-deterministic quantum algorithms: the relation between deterministic and quantum pseudo-deterministic algorithms for total search problems, and the completeness of the Find1 problem for quantum pseudo-determinism.

\subsection{Maximum quintic speedup for total search problems}
We show that for any total search problem, pseudo-deterministic quantum algorithms offer no super-polynomial speed-up over classical algorithms as stated in \cref{thm:TotalPoly} that we restate below. 

\TotalPoly*

This result is a generalization of the classical analogue in \cite[Theorem 4.1]{GGR13}, which proved that there is a \emph{quartic} relationship between the classical pseudo-deterministic query complexity and deterministic query complexity of any total search problem. This was proved by first showing that the number of pseudo-deterministic outputs is upper bounded by $2^{q(N)+1}$, then calculating each of the $q(N)+1$ bits of the canonical output separately using the pseudo-deterministic algorithm. Each of these bits --- a total boolean function --- can be calculated deterministically with a polynomial overhead given the relationship between the deterministic query complexity and the randomized query complexity for total boolean functions.

We use a similar strategy for proving \autoref{thm:TotalPoly}. We first need an upper bound on the number of possible outputs of a pseudo-deterministic quantum algorithm, after this we can encode the canonical output in a bounded number of bits.
This step turns out to be a more challenging task than the classical case. Using tools from classical and quantum information theory, we show the following lemma, which may be of independent interest. 

\begin{lemma}\label{lem:MaxOutput}
    Let $\alg$ be a pseudo-deterministic quantum query algorithm for a search problem $R\subseteq \{0,1\}^{N}\times\{0,1\}^{\ell}$ with query complexity $q$. Let $f_{\mathcal{A}}:\{0,1\}^{N}\rightarrow\{0,1\}^{\ell}$
    denote the function that maps an element of $\{0,1\}^{N}$ to the canonical output of $\alg$.
    Then, the number of possible outputs, $|\mathrm{Im}(f_{\mathcal{A}})|$, is $O((2N)^{3q/2})$.
\end{lemma}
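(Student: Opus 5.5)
The plan is to combine Holevo's bound (\cref{lem:Holevo}) with Fano's inequality (\cref{lem:Fano}). The point is that a $q$-query algorithm's final states all live in a low-dimensional subspace, so measuring them can reveal at most about $q\log(2N)$ bits. A pseudo-deterministic algorithm with many distinct canonical outputs, however, would reveal almost $\log|\mathrm{Im}(f_{\alg})|$ bits.

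\textbf{Setup.} Let $K=|\mathrm{Im}(f_{\alg})|$. Choose inputs $X^{(1)},\dots,X^{(K)}$ with pairwise distinct canonical outputs, and let $\rvX$ be uniform on $[K]$. By deferring intermediate measurements, we may assume $\alg$ is a sequence of unitaries $V_0 U_X V_1 U_X\cdots U_X V_q$ applied to a fixed initial state, followed by a final POVM whose outcome is the output. Let $\ket{\psi_X}$ be the pre-measurement state, and let $\rvY$ be the output when $\alg$ is run on $X^{(\rvX)}$.

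\textbf{Dimension bound.} This is the step I expect to need the most care. Write the oracle as $U_X=A_0+\sum_{i=1}^N X_iA_i$, where $A_0=\sum_i\ket{i}\bra{i}\otimes I$ and $A_i=\ket{i}\bra{i}\otimes(\mathrm{NOT}-I)$ do not depend on $X$. Expanding the product of $q$ oracle calls gives $\ket{\psi_X}=\sum_{T}\big(\prod_{i\in T}X_i\big)\ket{v_T}$, where $T$ ranges over subsets of $[N]$ of size at most $q$. Here we multilinearize using $X_i^2=X_i$, and the vectors $\ket{v_T}$ are independent of $X$. Hence every $\ket{\psi_X}$ lies in a fixed subspace of dimension $D\le\sum_{k\le q}\binom{N}{k}\le (N+1)^q\le(2N)^q$. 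So the average state $\rho=\frac1K\sum_j\ket{\psi_{X^{(j)}}}\bra{\psi_{X^{(j)}}}$ is supported on a space of dimension at most $D$, and by \cref{rem:maxEnt} we get $S(\rho)\le q\log(2N)$. Since each state in the ensemble is pure, \cref{lem:Holevo} gives $I(\rvX,\rvY)\le q\log(2N)$.

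\textbf{Fano step.} Decode $\rvX$ from $\rvY$ by inverting $f_{\alg}$ on its image. By pseudo-determinism this decoding errs with probability $e\le 1/3$. Applying \cref{lem:Fano},
\begin{equation*}
H(\rvX\mid\rvY)\le H(e,1-e)+e\log(K-1)\le 1+\tfrac13\log K.
\end{equation*}
Therefore $I(\rvX,\rvY)=\log K-H(\rvX\mid\rvY)\ge \tfrac23\log K-1$.

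\textbf{Conclusion.} Combining the two bounds gives $\tfrac23\log K\le 1+q\log(2N)$, so $K\le 2^{3/2}(2N)^{3q/2}=O((2N)^{3q/2})$. The only delicate points are (i) checking that deferring measurements and using ancilla workspace do not break the subspace argument, which they do not, since the $\ket{v_T}$ may live in any fixed finite space; and (ii) applying Fano with the monotonicity $H(e,1-e)\le 1$ and $e\le1/3$, since Fano's bound is increasing in $e$ on $[0,1/2]$.
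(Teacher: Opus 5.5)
Your proof is correct and uses the same Holevo-plus-Fano structure as the paper: a uniform distribution over $K$ inputs with distinct canonical outputs, the bound $I(\rvX,\rvY)\le\log\dim$, and Fano with decoding error at most $1/3$. The genuine difference is in how you obtain the dimension bound.

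The paper puts the algorithm into a normal form in which the $i$-th query index and answer are written to fresh registers $B_iC_i$. Each $V_i$ only prepares the next index register and leaves earlier registers untouched. The pre-measurement state then lies in the span of the $(2N)^q$ vectors $\ket{j_1}\ket{b_1}\cdots\ket{j_q}\ket{b_q}$. The paper simply asserts that adding ancillae or workspace does not change this.

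You instead write $U_X$ as an affine function of $X$ and multilinearize. This places every final state in a fixed span of at most $\sum_{k\le q}\binom{N}{k}\le(2N)^q$ vectors $\ket{v_T}$, independent of $X$. This is the polynomial-method observation. It covers arbitrary workspace and arbitrary interleaved unitaries with no normal-form assumption, so it supplies the justification that the paper's sentence about workspace leaves implicit. It also gives a marginally smaller dimension.

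The remaining difference, your $\tfrac23\log K-1$ versus the paper's $\tfrac23\log m-2$, only affects the hidden constant.
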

\begin{proof}
    Let $m=|\mathrm{Im}(f_{\mathcal{A}})|$
    and suppose the input is $X\in \{0,1\}^N$, and for all $i\in [q]$, registers $B_{i}$ and $C_{i}$ represent the input and the output to the oracle for $X$. Recall that the quantum oracle is expressed as a unitary $U_X$ with the following action,
    \begin{equation*}
        U_X:\ket{a}\otimes \ket{b}\mapsto \ket{a}\otimes \ket{b\oplus X_a}.
    \end{equation*} 
    The unitaries $(V_{i})_{i\in[q]}$ prepare the input to the oracle in register $B_{i}$ while leaving the previous registers unaffected,
    \begin{align*}
        V_{i}:&\ket{j_{1}}_{B_{1}}\otimes \ket{X_{j_{1}}}_{C_{1}}\otimes\ldots\otimes \ket{j_{i-1}}_{B_{i-1}}\otimes\ket{X_{j_{i-1}}}_{C_{i-1}}\otimes \ket{0}\mapsto \\
        &\ket{j_{1}}_{B_{1}}\otimes \ket{X_{j_{1}}}_{C_{1}}\otimes\ldots\otimes \ket{j_{i-1}}_{B_{i-1}}\otimes\ket{X_{j_{i-1}}}_{C_{i-1}}\otimes \sum_{j=1}^{N}a_{i,j_{1},\ldots,j_{i-1},X_{j_{1}},\ldots,X_{j_{i-1}},j}\ket{j}
    \end{align*}
    The algorithm can be expressed in general as \cref{alg:totalQPSD}.

    \begin{algorithm}
	\caption{General pseudo-deterministic quantum algorithm} \label{alg:totalQPSD}
    \raggedright\quad\textbf{Input}: $X\in\{0,1\}^N$.
	\begin{algorithmic}[1]
            \State Prepare the state $\ket{0}$ across the other registers.
		  \For {$k=1\ldots q$}
                \State Apply $V_{i}$ on registers $B_{1}C_{1}\ldots B_{k-1}C_{k-1}B_{k}$.
			    \State Apply $U_{X}$ to registers $B_{k}C_{k}$.
		\EndFor
            \State Perform POVM on registers $B_{1}C_{1}\ldots B_{q}C_{q}$ and output result.
	\end{algorithmic} 
\end{algorithm}
    
    The final state before the POVM has dimension $(2N)^{q}$ as it can be expressed as,
    \begin{equation*}
                \sum_{(j_{1},\ldots, j_{q})\in [N]^{q}}a_{1,j_{1}}\ket{j_{1}}\otimes\ket{X_{j_{1}}}\otimes\ldots a_{q,j_{1},\ldots,j_{q-1},X_{j_{1}},\ldots,X_{j_{q-1}},j_{q}}\ket{j_{q} }\otimes\ket{X_{j_{q}}}.
    \end{equation*} 

    This expression exists in a linear subspace that has at most $(2N)^q$ linearly independent basis states (defined by each $(j_1,\ldots,j_q)\in [N]^q$ and $X_{j_i}$). This does not change if we add ancillae or additional workspace.
    
    We now invoke Holevo's bound. Given that $|\mathrm{Im}(f_{\mathcal{A}})|=m$,
    let $\rvX$ be a random variable which is uniformly distributed over the set $L=\{X_{1},\ldots, X_{m}\}$ 
    whose outputs on applying $f_{\mathcal{A}}$ are different on each element of the set $\forall i,j\in [m]:f_{\mathcal{A}}(X_{i})=f_{\mathcal{A}}(X_{j})\implies i=j$. Let $\rvY$ be the random variable denoting the output of $\mathcal{A}$ for which the random variable $\rvX$ is the input. If the inputs are enumerated $(X_{t})_{t\in [m]}$ and the final state on input $X_{t}$ is $\rho_{t}$, then Holevo's bound (\cref{lem:Holevo} for ensemble $(\rho_{t})_{t\in [m]}$ and distribution uniform over $[m]$) states the following,
    \begin{align*}
        I(\rvX:\rvY)&\leq S\left(\sum_{t=1}^{m}\frac{1}{m}\rho_{t}\right)-\sum_{t=1}^{m}\frac{1}{m}S(\rho_{t})\\
        &\leq S\left(\sum_{t=1}^{m}\frac{1}{m}\rho_{t}\right)\\
        &\leq \log\left(\text{dim}\left(\sum_{t=1}^{m}\frac{1}{m}\rho_{t}\right)\right)\\
        &\leq q\log(2N).
    \end{align*}
    
    The second line follows as each $S(\rho_{t})$ is non-negative. The third follows by the maximum entropy of a state over dimension $d$ (\cref{rem:maxEnt}). 

    The left hand side can be bounded as follows,
    
    \begin{align*}
        I(\rvX:\rvY)&=H(\rvX)-H(\rvX\vert\rvY)\\
        &=\log(m)-H(\rvX\vert\rvY)\\
        &\geq \log(m) - H\left(\frac{1}{3},\frac{2}{3}\right)-\frac{1}{3} \log(m)\\
        &\geq \frac{2}{3}\log(m)-2.
    \end{align*}
    The third line follows from Fano's inequality (\cref{lem:Fano}).

    Therefore, $q\log(2N)\geq I(\rvX:\rvY)\geq \frac{2}{3}\log(m)-2$, which implies $m=O((2N)^{3q/2})$.
    
\end{proof}

Now we are ready to prove \autoref{thm:TotalPoly} with a similar search-to-decision argument as in the classical case (\cite[Theorem 4.1]{GGR13}). Note that the complexity differs from that in the classical case, as classically turning a decision problem from randomized to deterministic costs a cubic overhead in complexity ($\forall f:D(f)=O(R(f)^3)$), whereas in the quantum setting this is quartic ($\forall f:D(f)=O(Q(f)^4)$).

\begin{proof}[Proof of \cref{thm:TotalPoly} ]
    Fix a pseudo-deterministic quantum algorithm $\mathcal{A}$ with query complexity $\psQ(R)$. 
    For every $X$, there exists a canonical output $Y_{X}$ of $\mathcal{A}$. By \cref{lem:MaxOutput}, we can encode this output in $d=O(\psQ(R)\log(N))$ bits.
    We can then define $d$ total boolean functions $(f_{i})_{i}^{d}$, where for all $i\in [d]: f_{i}(X)=(Y_{X})_{i}$, i.e., $f_i$ outputs the $i$-th bit of $Y_{X}$.

    We use the following fact relating the deterministic query complexity with the quantum query complexity of a decision problem.
    
    \begin{theorem}[Theorem 2 \cite{ABKRT21}]
        For any total boolean function $f$, the deterministic and quantum query complexity are related as follows:
        \begin{equation*}
            \D(f_{i})=O(\Q(f_{i})^{4})
        \end{equation*}
    \end{theorem}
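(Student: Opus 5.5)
Since this is a previously published theorem, the paper will use it as a black box. Still, here is how I would reconstruct the argument for a total Boolean function $f$: chain together four complexity measures, each bounded either by $\Q(f)$ or by the previous one. The target chain is
\begin{equation*}
\D(f)\le \mathrm{bs}(f)\cdot \deg(f), \qquad \mathrm{bs}(f)=O(\Q(f)^2), \qquad \deg(f)\le \lambda(f)^2, \qquad \lambda(f)=O(\Q(f)).
\end{equation*}
Here $\mathrm{bs}$ is block sensitivity, $\deg$ is exact real degree, and $\lambda(f)$ is spectral sensitivity, i.e.\ the spectral norm of the sensitivity graph's adjacency matrix. Multiplying the bounds gives $\D(f)=O(\Q(f)^2\cdot \Q(f)^2)=O(\Q(f)^4)$.

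First, the classical step $\D(f)\le \mathrm{bs}(f)\deg(f)$. I would use the standard maxterm-querying decision tree. Repeatedly pick a certificate-like set of variables on which some maximal-degree monomial of the (restricted) polynomial lives. Querying it lowers the degree of every restriction by at least one. The number of variables queried per round is bounded by $\mathrm{bs}(f)$ via minimal sensitive blocks. After $\deg(f)$ rounds the function is constant.

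Second, $\mathrm{bs}(f)=O(\Q(f)^2)$ is the BBCMW bound. Embed an OR of size $\mathrm{bs}(f)$ into $f$ using disjoint sensitive blocks at a fixed input. Then $\Q(f)\ge \Q(\mathrm{OR}_{\mathrm{bs}})=\Omega(\sqrt{\mathrm{bs}(f)})$, by the approximate-degree lower bound for OR.

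Third, $\deg(f)\le \lambda(f)^2$ should follow from Huang's sensitivity theorem technique. Restrict to a maximal-degree monomial, so we may assume full degree. Then pass to $g(x)=f(x)\cdot\mathrm{parity}(x)$, whose induced "sensitive/insensitive" structure contains a large induced subgraph of the hypercube. Huang's signed-matrix eigenvalue interlacing argument gives a vertex of degree at least $\sqrt{n}$ in the appropriate graph. The spectral version lower-bounds the largest eigenvalue of the sensitivity graph by $\sqrt{\deg(f)}$.

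Finally, $\lambda(f)=O(\Q(f))$: I would show that $\lambda(f)$ is a feasible value of the (positive-weight) quantum adversary bound. Take the adversary matrix to be the sensitivity-graph adjacency matrix $A_f$, restricted to pairs $x,y$ with $f(x)\ne f(y)$. Its restriction to index $i$ is a matching, so $\|A_f\circ D_i\|\le 1$, and the adversary ratio is at least $\|A_f\|=\lambda(f)$. Hence $\Q(f)=\Omega(\lambda(f))$.

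The main obstacle is the third step, the degree-versus-spectral-sensitivity bound. It is where Huang's breakthrough is genuinely needed. The reduction from general $f$ to a full-degree function, and the check that the signed adjacency matrix dominates in spectral norm, must be done carefully; I would follow Huang's construction directly. The remaining steps are standard.
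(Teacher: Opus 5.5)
The paper gives no proof of this statement; it cites it from ABKRT21. Your chain $\D\le\mathrm{bs}\cdot\deg$, $\mathrm{bs}=O(\Q^2)$, $\deg\le\lambda^2$, $\lambda\le\mathrm{Adv}=O(\Q)$ is exactly the route taken there, and steps 2--4 are fine as sketches. Two small points on those steps. In step 3, the reduction to full degree needs $\lambda$ not to increase under restriction; this holds because the sensitivity graph of a restriction is an induced subgraph. In step 2, the embedded function is not literally $\mathrm{OR}$, but since $f$ is total the acceptance polynomial is bounded on the whole cube and separates $0$ from every unit vector, which still forces degree $\Omega(\sqrt{\mathrm{bs}(f)})$.

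The genuine problem is your justification of step 1, where the roles of $\mathrm{bs}(f)$ and $\deg(f)$ are reversed.
\begin{itemize}
\item Querying the variables of one maxonomial does not in general lower the degree. Take $g(x)=\tfrac12(x_1x_2+x_1x_3+x_2x_4-x_3x_4)$ on $\{\pm1\}^4$. It is Boolean, since it equals $x_1x_2$ if $x_2=x_3$ and $x_2x_4$ otherwise. Its maxonomials $\{1,2\}$ and $\{3,4\}$ are disjoint, and after fixing $x_1,x_2$ the restriction still contains $-\tfrac12x_3x_4$, so it has degree $2$.
\item A maxonomial does not have at most $\mathrm{bs}(f)$ variables; it has $\deg(f)$ of them. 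The degree can far exceed $\mathrm{bs}$: $\mathrm{OR}_{\sqrt n}\circ\mathrm{AND}_{\sqrt n}$ has $\deg=n$ and $\mathrm{bs}=\sqrt n$.
\item If you instead query a set meeting \emph{every} maxonomial, ``minimal sensitive blocks'' give no bound of $\mathrm{bs}(f)$ on its size. The standard such set, the union of a maximal family of disjoint maxonomials, has up to $\mathrm{bs}(f)\deg(f)$ variables. That only yields Nisan--Smolensky's $\D\le\mathrm{bs}\cdot\deg^2$, hence $O(\Q^6)$.
\item A certificate also meets every maxonomial, but the natural bound on its size is $\mathrm{C}(f)$, not $\mathrm{bs}(f)$.
\end{itemize}

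The correct (Midrijanis) accounting runs the other way. Each round queries one maxonomial $M_j$ of the current nonconstant restriction $f|_{\rho_{j-1}}$, at cost at most $\deg(f)$. You then bound the \emph{number of rounds} by $\mathrm{bs}(f)$, as follows.
\begin{itemize}
\item Take any input $x$ consistent with all answers after $k$ rounds.
\item Fix the free variables of $f|_{\rho_{j-1}}$ outside $M_j$ to their values in $x$. This leaves the coefficient of $\prod_{i\in M_j}x_i$ unchanged, because no monomial properly contains a maxonomial.
\item So the resulting function on $M_j$ is nonconstant, and some $B_j\subseteq M_j$ is a sensitive block of $f$ at $x$.
\item The $M_j$ are pairwise disjoint, so $k\le\mathrm{bs}(f,x)\le\mathrm{bs}(f)$, giving $\D(f)\le\mathrm{bs}(f)\deg(f)$.
\end{itemize}
With this repair, or by simply citing Midrijanis, your chain gives $\D(f)=O(\Q(f)^4)$.
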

    
    We also note that $\Q(f_{i})=O(\psQ(R))$ for all $i \in [d]$, from which we conclude that \[\D(R)=O\left(\sum_{i=1}^{d}\D(f_{i})\right)=O(d\cdot \psQ(R)^{4})=O(\psQ(\probR)^{5}\log(N)).\]
\end{proof}

\subsection{General pseudo-deterministic quantum query upper bound}

We present a general quantum pseudo-deterministic query upper bound for any problem via its randomized query complexity and verification complexity.

\FindoneQComplete*

Our proof is a straightforward generalization of \cite[Theorem 3]{GIPS21}, which shows the \emph{completeness} of \findone (for classical pseudo-deterministic query complexity) by building a general reduction from any problem to \findone. We observe that the same reduction also works in the quantum setting, and then combine it with the $O(\sqrt{N})$ upper bound of \findone~\cite{GIPS21}(see also \cref{thm:PsDGrov}). We present a full proof of \ref{thm:Find1QComplete} in \cref{sec:find1QComplete}.

\section{Query complexity separations}\label{sec:Sep}
In this section, we prove two query complexity separation results for pseudo-deterministic quantum algorithms.
In \cref{sec:AOES}, we present a problem that is trivial for classical randomized algorithms but maximally hard for pseudo-deterministic quantum algorithms. 
In \cref{sec:SimonHamming}, we provide a problem that is easy for both classical randomized algorithms and pseudo-deterministic quantum algorithms, but difficult for classical pseudo-deterministic algorithms. 

\subsection{A maximally hard problem for pseudo-deterministic quantum algorithms}\label{sec:AOES}

We define the Avoid One Encrypted String (AOES) problem and prove that its pseudo-deterministic quantum query complexity is $\Omega(N)$, while being completely trivial for standard randomized algorithms. We prove the separation of \cref{thm:IntroAOES} that we restate below.

\IntroAOES*

We use the $\XOR_N$ problem as our building block, since $\XOR_N$ is (asymptotically) {maximally hard} for quantum query algorithms. Here, for $a,b\in \{0,1\}: a\oplus b$ is the exclusive logical-or of the two bits, and for strings $X,Y\in \{0,1\}^r$, $X\bigoplus Y$ is the bitwise exclusive logical-or of the two strings. We write $\XORfunc{X}$ as the exclusive or of every bit in the string $X$. 

Here, the input of $\XOR_N$ is a string of length $N$ and the output is the $\XOR_N$ of each bit of the string, we write this problem as \XOR without the subscript when the length is clear from context. A quantum algorithm for \XOR cannot gain \emph{any} advantage over a random guess with fewer than $N/2$ queries, as captured by the following lemma. 

\begin{lemma}[\cite{farhi1998limit, BBCMW01}]\label{Lem:XOR_LB}
    Suppose $\alg$ is a quantum algorithm for the XOR problem which has advantage over a random guess. In other words, if there exists $p,\delta\in \mathbb{R}$ such that for any $X\in\{0,1\}^{N}$,
    \begin{equation*}
        \mathbb{P}[\alg^{X}=1 | (\XORfunc{X}=1)]\geq p+\delta,
    \end{equation*}
    and
    \begin{equation*}
        \mathbb{P}[\alg^{X}=1 | (\XORfunc{X}=0)]\leq p.
    \end{equation*}
    Then $\alg$ must have query complexity at least $\frac{N+1}{2}$.
\end{lemma}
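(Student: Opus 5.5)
The plan is to use the polynomial method of \cite{BBCMW01}. Let $\alg$ make $T$ queries to $U_X$. By the standard argument (induction over the queries, as in the register decomposition in the proof of \cref{lem:MaxOutput}), every amplitude of the final state is a multilinear polynomial in $X_1,\dots,X_N$ of degree at most $T$. Hence the acceptance probability $P(X)=\mathbb{P}[\alg^X=1]$ is a real multilinear polynomial of degree at most $2T$.

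The key step is to show that $P$ has full degree $N$. I will read off the coefficient of the top monomial $\prod_i X_i$ through its Fourier form. Writing $P$ in the $\pm1$ basis, the coefficient of the parity character $\chi_{[N]}(X)=(-1)^{|X|}$ is
\begin{equation*}
\hat P([N])=\frac{1}{2^N}\sum_{X}(-1)^{|X|}P(X)=\frac{1}{2^N}\Big(\sum_{\XORfunc{X}=0}P(X)-\sum_{\XORfunc{X}=1}P(X)\Big).
\end{equation*}
By the hypotheses, every even-parity input has $P(X)\le p$ and every odd-parity input has $P(X)\ge p+\delta$, with $\delta>0$ (the meaning of having an advantage over a random guess). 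Each parity class contains $2^{N-1}$ inputs, so $\hat P([N])\le -\delta/2<0$. Since the change of basis between $\{0,1\}$-monomials and $\pm1$-characters preserves the top-degree coefficient up to a nonzero factor, $\deg P=N$.

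Combining the two steps gives $2T\ge N$. Because $T$ is an integer, this means $T\ge\lceil N/2\rceil$, which is $(N+1)/2$ for odd $N$. This is the bound of \cite{farhi1998limit, BBCMW01} as stated in the lemma. I expect the only delicate point to be the first step, namely checking that the degree bound $2T$ survives intermediate unitaries, ancillas and a general final POVM. I would handle this by purifying the algorithm and noting that the projector onto the accepting outcome contributes no dependence on $X$. Everything else is the short computation above.
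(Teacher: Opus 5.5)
Your argument is correct, and it is precisely the polynomial-method proof of \cite{BBCMW01}, which the paper invokes by citation without giving a proof of its own. It uses only the gap between the \emph{average} acceptance probabilities of the two parity classes, so it also covers an average-case advantage.

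One correction to your last step: what you prove is $T\ge\lceil N/2\rceil$, which equals $\frac{N+1}{2}$ only for odd $N$. For even $N$ the lemma's bound is in fact false, since pairing the bits and running Deutsch's one-query algorithm on each pair computes $\XORfunc{X}$ exactly with $N/2$ queries. So you should state your conclusion as $T\ge N/2$ rather than claim it matches the lemma as written; this form is all that \autoref{thm: AOES} needs.
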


We now define our \AOESfull (\AOES) problem. 
Informally, the input of \AOES uses $m$ separate \XOR instances to encrypt a string $b \in \{0,1\}^m$, and the goal is to avoid outputting the encrypted string $b$.  The intuition is that since the XOR problem is maximally hard for quantum algorithms, the string $b$ cannot be decrypted, and hence the quantum algorithm cannot avoid it.
Note that the parameter $m$ controls a trade-off between the quantum pseudo-deterministic query complexity and the classical randomized query complexity. We obtain the maximal separation by choosing $m$ to be a large constant.

\begin{definition}[\AOESfull (\AOES)]\label{def: AOES}
    Let $m \in \mathbb{N}$. In the $\AOES_{m}$ problem, we are given oracle access to $m$ different strings $X_1,\ldots,X_m\in \{0,1\}^{N/m}$. For each $i\in[m]$ define $b_i = \XORfunc{X_i}$. 
    
    The goal is to output a string $x\in \{0,1\}^{m}$ such that $x \neq b$.
\end{definition}

We write \AOES without the subscript when the parameters of the problem are clear from context. We also assume that $m$ divides $N$ and otherwise we can use a standard padding argument. We begin by noting that, for small values of $m$, \AOES is easy for randomized algorithms and hard for deterministic algorithms

\begin{claim}
    The following holds for the complexity of $\AOES_{m}$.
    \begin{itemize}
        \item $R(\AOES_{m})=O(1)$;
        \item $D(\AOES_{m})=\Theta(N/m)$.
    \end{itemize}

\end{claim}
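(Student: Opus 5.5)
The claim has two parts, so the plan handles each in turn.

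For $R(\AOES_m)=O(1)$, the algorithm makes no queries at all. It outputs a uniformly random $x\in\{0,1\}^m$. For any fixed input, the hidden string $b$ is a single fixed element of $\{0,1\}^m$, so the algorithm fails only when $x=b$, which happens with probability $2^{-m}$. For every $m\geq 2$ the success probability is $1-2^{-m}\geq 3/4\geq 2/3$, so zero queries suffice. (If $m=1$ one can instead fall under the deterministic bound below, which is then $O(N)$; for the constant $m$ we care about, the randomized cost is $O(1)$.)

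For the deterministic upper bound $D(\AOES_m)=O(N/m)$, the algorithm queries all $N/m$ bits of $X_1$, computes $b_1=\XORfunc{X_1}$, and outputs $x=(1-b_1,0,\ldots,0)$. This $x$ differs from $b$ in its first coordinate, so it is always correct.

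The substantive part is the lower bound $D(\AOES_m)=\Omega(N/m)$, which I would prove by an adversary argument. Fix a deterministic algorithm making fewer than $N/m$ queries. The adversary answers every query with $0$. When the algorithm stops, each block $X_i$ contains at least one unqueried position, because the total number of queries is below the size of any single block. The algorithm's output $x$ is then fixed by the answers it saw. The adversary now sets, for each $i$, one unqueried bit of $X_i$ so that $\XORfunc{X_i}=x_i$, and all remaining unqueried bits to $0$. The resulting input is consistent with every answer the algorithm received, yet it has $b=x$, so the algorithm fails.

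The only care needed is the block-by-block freedom. It is enough that \emph{each} block has at least one free bit, and this is guaranteed by the algorithm making fewer than $N/m$ queries in total. In fact the argument gives the sharper statement that the algorithm must fully query at least one block. Together with the upper bound this yields $D(\AOES_m)=\Theta(N/m)$.
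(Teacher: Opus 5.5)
Your proposal is correct and follows essentially the same argument as the paper: zero-query random sampling for the randomized bound, computing $b_1$ and flipping it for the deterministic upper bound, and the all-zeros adversary that sets one free bit per block for the lower bound. Your caveat that the zero-query sampler needs $m \geq 2$ is also correct, since for $m=1$ it succeeds only with probability $1/2$; the paper leaves this implicit.
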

\begin{proof}
    The first item follows by uniformly randomly sampling a string $x\in\{0,1\}^m$ which will be a valid solution with probability $1-1/2^m$. The upper bound for the second item follows by calculating $b_1=\XORfunc{X_1}$ using $N/m$ queries and outputting $(1\oplus b_1)0^{m-1}$. 
    
    We prove the lower bound using a standard adversary argument. Suppose there exists a deterministic algorithm $\alg$ for \AOES that takes $q=N/m-1$ deterministic queries. Whenever $\alg$ queries a bit, we set that bit equal to $0$. After $\alg$ outputs a solution $x \in \{0,1\}^m$, there is at least one bit in each string $X_i$ that is not queried by $\alg$. Therefore, one can set those remaining bits such that $\XORfunc{X_i} = x_i$ for all $i \in [m]$, which implies $\alg$ is wrong.
\end{proof}

We show that any pseudo-deterministic quantum algorithm for \AOES cannot do much better than the trivial deterministic algorithm.

\begin{theorem}[psQ lower bound for AOES]\label{thm: AOES}
Let $m\in [N]$. Any pseudo-deterministic quantum algorithm for $\AOES_m$ requires $\Omega(N/m^2)$ queries.
\end{theorem}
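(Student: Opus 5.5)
The plan is to reduce $\XOR_{N/m}$ to $\AOES_m$ with the randomized embedding sketched in the Techniques section, and then invoke \cref{Lem:XOR_LB}.

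\textbf{Step 1: amplification.} Let $\mathcal{B}_0$ be a pseudo-deterministic quantum algorithm for $\AOES_m$ with $q$ queries and canonical function $f$. Run $\mathcal{B}_0$ independently $O(m)$ times and output the plurality answer. A Chernoff bound shows the resulting algorithm $\mathcal{B}$ uses $O(mq)$ queries and outputs $f(Z)$ with probability at least $1-2^{-2m}$ for every input $Z$. Since $f(Z)\neq b(Z)$ always, $\mathcal{B}$ outputs the encrypted string $b(Z)$ with probability at most $2^{-2m}$.

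\textbf{Step 2: the reduction.} Given $X\in\{0,1\}^{N/m}$, the algorithm $\mathcal{A}$ does the following.
\begin{enumerate}
\item Sample uniform $Y_1,\ldots,Y_m\in\{0,1\}^{N/m}$ and a uniform $S\subseteq[m]$.
\item Set $Z_i=Y_i\oplus X$ for $i\in S$ and $Z_i=Y_i$ otherwise. Each query to $Z$ costs one query to $X$, with the XOR with $Y_i$ applied in the phase or target register.
\item Run $\mathcal{B}$ on $Z$ to obtain $x$.
\end{enumerate}
Write $c_i=\XORfunc{Y_i}$ and $\beta=\XORfunc{X}$, so that $b_i=c_i\oplus[i\in S]\beta$.

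Define $u=x\oplus c$. For $i\notin S$, the condition $x\neq b$ says exactly that $u_i\neq 0$ for some coordinate, or $u_S\neq\beta\mathbf{1}_S$. The natural decoding rule is as follows: if $u_{[m]\setminus S}\neq 0$, output a random bit; otherwise, since $u_S\neq\beta\mathbf{1}_S$, output $1$ if $u_S=0$ and $0$ if $u_S=\mathbf{1}$, and a random bit otherwise. The information-theoretic fact needed is that $Z$ is uniform and independent of $(S,X)$. Hence $\mathcal{B}$'s output $x$, and so $u'=x\oplus\XORfunc{Z_i}$, is independent of $S$. Conditioned on the pair $(Z,x)$, $S$ is uniform, and $u=u'\oplus\beta\mathbf{1}_S$.

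The key computation is the event $u'=\beta\mathbf{1}_S$, i.e.\ $x=b$, which has probability at most $2^{-2m}$. When $x\neq b$, the subset $S$ determining where $u$ is constant reveals $\beta$. I expect the main obstacle to be designing the decoding rule so that its success probability is at least $1/2+\Omega(2^{-m})-2^{-2m}$ uniformly in $X$. Concretely, I would guess $\beta$ as the bit $\gamma$ such that $u'\oplus\gamma\mathbf{1}_S$ vanishes off $S$ and is constant on $S$ with a consistency check, and guess randomly when no such $\gamma$ exists. For each fixed $u'$, exactly one $S$ (namely $S=\mathrm{supp}(u')$ or its complement pattern) makes the check decisive. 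Averaging over $S$ gives a $2^{-m}$ boost for the correct bit, minus at most $2^{-2m}$ for the event that $\mathcal{B}$ errs.

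\textbf{Step 3: conclusion.} The resulting algorithm satisfies the hypotheses of \cref{Lem:XOR_LB} with $\delta=\Omega(2^{-m})>0$. Therefore $O(mq)\ge (N/m+1)/2$, which gives $q=\Omega(N/m^2)$.
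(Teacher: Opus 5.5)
Your proposal is correct and is essentially the paper's proof. Your $S$ is the complement of the paper's index set $I$, and your ``natural decoding rule'' is exactly the test in \cref{alg:AOEStoXOR} (output $1\oplus t$). The independence of $S$ from $(Z,x)$, the $O(m)$-fold plurality amplification and \cref{Lem:XOR_LB} are used the same way. The only difference is that you analyze the amplified algorithm directly, where the paper first proves a zero-error lemma and then replaces the algorithm.

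One slip to fix: the error event $x=b$ is $u'=0$ (equivalently $u=\beta\mathbf{1}_S$), not $u'=\beta\mathbf{1}_S$. Also, your second ``concrete'' rule is stated in terms of $u'$, which $\mathcal{A}$ cannot compute; the decoder should stay in terms of $u$, as your natural rule already is.
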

Note that by setting $m$ in \autoref{thm: AOES} to be a sufficiently large constant, we obtain \autoref{thm:IntroAOES}.
We prove \autoref{thm: AOES} in two steps. We first transform a {zero-error} quantum algorithm for \AOES into a quantum algorithm for the \XOR problem with some \emph{non-zero} advantage. We then replace the zero-error algorithm with a pseudo-deterministic one.

The following lemma demonstrates that access to a single sample from an algorithm for \AOES is enough to extract non-zero information about the \XOR of a string.
\begin{lemma}\label{lem: AOES to XOR}
    If there exists a zero-error quantum algorithm $\mathcal{B}$ for $\AOES_{N,m}$ with query complexity $q$, then there exists a quantum algorithm $\mathcal{A}$ for $\XOR_{N/m}$ which calls $\mathcal{B}$ once such that for every input $ X\in \{0,1\}^{N/m}$,
    \begin{equation*}
        \mathbb{P}\left[\mathcal{A}^{X}=\XORfunc{X} \right]\geq \frac{1}{2} +\frac{1}{2^{m+1}}.
    \end{equation*}
\end{lemma}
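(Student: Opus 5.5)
The plan is a randomized embedding of the \XOR instance $X$ into a uniformly random \AOES instance, using one-time pads so that the output of $\mathcal{B}$ carries no information about where $X$ was planted. Let $\beta=\XORfunc{X}$ be the bit we want.

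\textbf{Construction of $\mathcal{A}$.} Algorithm $\mathcal{A}$ proceeds as follows.
\begin{enumerate}
\item Sample uniformly random strings $Y_1,\ldots,Y_m\in\{0,1\}^{N/m}$ and a uniformly random $c\in\{0,1\}^m$, all independent. Record $a_i=\XORfunc{Y_i}$ for each $i$; this costs no queries.
\item Define the \AOES instance $Z=(Z_1,\ldots,Z_m)$ by $Z_i=Y_i\oplus c_iX$.
\item Run $\mathcal{B}$ on $Z$. Each query $\mathcal{B}$ makes to a bit of $Z_i$ is simulated with at most one query to $X$, because $(Z_i)_j=(Y_i)_j\oplus c_iX_j$ and $Y_i,c_i$ are known. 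For a quantum query the controlled XOR with known data is applied in superposition. So $\mathcal{A}$ uses $q$ queries.
\item Note that $b(Z)_i=\XORfunc{Z_i}=a_i\oplus c_i\beta$, i.e.\ $b(Z)=a\oplus c\beta$.
\item Let $x$ be the output of $\mathcal{B}$. If $x\oplus a=c$, output $0$. If $x\oplus a=0^m$, output $1$. Otherwise output a uniformly random bit.
\end{enumerate}

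\textbf{Key independence step.} Set $w=x\oplus b(Z)$. Since $\mathcal{B}$ is zero-error, $w\neq 0^m$ always. For any fixed $X$, the tuple $Z$ is uniformly distributed and independent of $c$, because the $Y_i$ act as one-time pads. The output $x$ depends only on $Z$ and $\mathcal{B}$'s internal randomness, and $b(Z)$ is a function of $Z$. Hence $w$ is independent of $c$, and the identity $x\oplus a=w\oplus c\beta$ holds.

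\textbf{Case analysis.}
\begin{itemize}
\item If $\beta=0$, then $x\oplus a=w\neq 0^m$, so the rule never outputs $1$ deterministically. It outputs the correct answer $0$ deterministically exactly when $w=c$. By independence and uniformity of $c$, this happens with probability exactly $2^{-m}$.
\item If $\beta=1$, then $x\oplus a=w\oplus c$. This never equals $c$, since $w\neq 0^m$, so the rule never outputs $0$ deterministically. It equals $0^m$ exactly when $c=w$, again with probability $2^{-m}$, and then the rule outputs the correct $1$.
\end{itemize}
The two deterministic branches cannot fire simultaneously, by the same observations. In both cases the success probability is therefore
\[
2^{-m}+(1-2^{-m})\cdot\tfrac12=\tfrac12+2^{-m-1}.
\]

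\textbf{Main obstacle.} The hard part is choosing the decoding rule correctly. A naive rule that compares $x$ only with $a$ fails, because $a$ itself depends on $c\beta$. The fix is to express everything through $w=x\oplus b(Z)$, the quantity that is genuinely independent of $c$, and then check that the zero-error guarantee $w\neq 0^m$ rules out exactly the wrong deterministic answer in each case.
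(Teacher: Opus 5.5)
Your proof is correct and is essentially the paper's argument. Your mask $c$ is the indicator of the complement of the paper's random set $I$, so your two decoding branches coincide with the paper's test (find $t$ with $a_j = y_j \oplus t\cdot[j\notin I]$ for all $j$, then output $1\oplus t$), and your success event $w=c$ is exactly the paper's event $I=J$, with probability $2^{-m}$ by the same one-time-pad independence.
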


    \paragraph{Reduction.}
    We use $\mathcal{B}$ to construct an algorithm $\mathcal{A}$ for $\XOR_{N/m}$. In this algorithm, we encode the \XOR instance $X$ into a randomized instance of $\AOES_m$, which is  information theoretically indistinguishable to a uniformly random instance from $\mathcal{B}$'s perspective. If $\mathcal{B}$'s output meets certain conditions, we can recover the parity of $X$ with certainty.
    The pseudocode of algorithm $\alg$ is described in \cref{alg:AOEStoXOR}.
    \begin{algorithm}
	\caption{Algorithm $\mathcal{A}^X$ for $\XOR_{N/m}$ induced from algorithm $\mathcal{B}$ for $\AOES_{N,m}$} \label{alg:AOEStoXOR}
	\begin{algorithmic}[1]
            \State Uniformly sample an index set $I\subseteq [m]$ and strings $Y_{1},\cdots Y_{m}\in \{0,1\}^{N/m}$.
		\State  For all $j\in [m]$: Let $Z_{j}=\begin{cases}
        Y_{j} & \text{if } j\in I\\
        Y_{j}\oplus X & \text{otherwise }
    \end{cases}$
            \State Run algorithm $\mathcal{B}$ on $(Z_{1}, \cdots, Z_{m})$, get the outcome $a=(a_{1}, \cdots, a_{m})$.
            \If{ $\exists t\in \{0,1\}, \forall j\in [m]:a_{j}=\begin{cases}
        \XORfunc{Y_{j}} & \text{if } j\in I\\
        \XORfunc{Y_{j}}\oplus t & \text{otherwise }\end{cases}$ \label{line: condition}
            }
                    \State Output $1\oplus t$. \label{line: output}
                \Else
                    \State Output a uniformly random bit.
                \EndIf
	\end{algorithmic} 
\end{algorithm}

    \paragraph{Analysis.}
    We now show the correctness of algorithm $\alg$.
    For simplicity, we define the following bits 
    \begin{align*}
        x^*&=\XORfunc{X}, \,
       \forall j\in [m]: y_{j}=\XORfunc{Y_j}, \, z_{j}=\XORfunc{Z_j}.
    \end{align*}

    After the first three steps of algorithm $\alg$, we get $(a_{1}, \cdots, a_{m})$ as a valid solution to the \AOES instance $(Z_1, \ldots, Z_m)$. Now, we define the following set \[J \coloneqq \{j \in [m]: a_j =z_j  \}.\]
    Note that $\alg$ cannot calculate $J$ directly, because $\alg$ does not know the value of $z_j$ when $j \not\in I$. Nevertheless, $\alg$ can always test whether $J$ is equal to $I$, as shown in the following claim.

    \begin{claim}\label{claim: condition}
        The condition at \cref{line: condition} is satisfied if and only if $I = J$. 
    \end{claim}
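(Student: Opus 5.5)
We prove \cref{claim: condition} by rewriting the condition at \cref{line: condition} in terms of the bits $z_j$, using the linearity of parity. The first step is to compute $z_j$ for each $j$: $\XORfunc{\cdot}$ is additive under bitwise XOR, so
\begin{equation*}
    z_j = \begin{cases} y_j & \text{if } j\in I,\\ y_j\oplus x^* & \text{if } j\notin I.\end{cases}
\end{equation*}

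For the ``if'' direction, suppose $I=J$ and take $t = x^*\oplus 1$. For $j\in I=J$ we have $a_j=z_j=y_j$, which is the required value on $I$. For $j\notin I$ we have $j\notin J$, so $a_j\neq z_j$, and hence
\begin{equation*}
    a_j = z_j\oplus 1 = y_j\oplus x^*\oplus 1 = y_j\oplus t.
\end{equation*}
So the condition holds with this $t$. This is consistent with the output rule at \cref{line: output}, which returns $1\oplus t = x^*$.

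For the ``only if'' direction, suppose some $t\in\{0,1\}$ satisfies the condition.
\begin{itemize}
    \item For $j\in I$: $a_j=y_j=z_j$, so $j\in J$.
    \item For $j\notin I$: $a_j = y_j\oplus t = z_j\oplus x^*\oplus t$. So $j\in J$ exactly when $t=x^*$, and this is the same for every $j\notin I$.
\end{itemize}
If $I=[m]$, these points already give $J=I$.

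The main obstacle is the case $I\neq[m]$ with $t=x^*$. In that case every $j\notin I$ is also in $J$, so together with the first point $J=[m]$, that is, $a=z$. Here we use that $\mathcal{B}$ is zero-error: its output must differ from the encrypted string $z=(z_1,\ldots,z_m)$, so $a\neq z$, a contradiction. Hence $t=x^*\oplus1$, every $j\notin I$ lies outside $J$, and $J=I$.

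We also record that $t$ is then forced to equal $x^*\oplus1$ whenever the condition holds. The analysis of \cref{lem: AOES to XOR} uses this fact afterwards.
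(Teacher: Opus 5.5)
Your proof is correct and follows the paper's argument. As in the paper, you rewrite the condition in terms of $z_j$ using linearity of parity, take $t=x^*\oplus 1$ for the ``if'' direction, and use the zero-error property ($a\neq z$) to rule out $t=x^*$ in the converse. The only difference is your separate handling of $I=[m]$, which is unnecessary: the implication $t=x^*\Rightarrow a=z$ holds for every $I$, and applying it uniformly is exactly what justifies your closing remark that $t=x^*\oplus 1$ also when $I=[m]$ (in that case the condition can in fact never be satisfied).
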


    \begin{proof}
        For one direction, if $I=J$, then for any $j \in [m]$, it holds that \[a_{j}=\begin{cases}
        z_j & \text{if } j\in I\\
        z_j \oplus1 & \text{otherwise }\end{cases}.\]
        
        Now replace $z_j$ with $y_j$ if $j \in I$, or with $(y_j \oplus x^*)$ otherwise, we get 
        \[a_{j}=\begin{cases}
        y_j & \text{if } j\in I\\
        y_j \oplus x^* \oplus1 & \text{otherwise }\end{cases}.\]
        
        Therefore, the condition at \cref{line: output} is satisfied by taking $t = x^* \oplus 1$.

        For the other direction, if the condition at \cref{line: condition} holds, we can rewrite this condition by replacing $y_j$ with $z_j$ or $(z_j \oplus x^*)$ for $j \in I$ or $j \not\in I$ respectively; it holds that
        \[ a_{j}=\begin{cases}
        z_j & \text{if } j\in I\\
        z_j \oplus x^* \oplus t & \text{otherwise }\end{cases}
        .\]
        
        We claim that $x^* \oplus t = 1$; otherwise, we have $a_j = z_j$ for all $j \in [m]$, which contradicts to the fact that $a$ is a valid solution to the \AOES instance. Therefore, we have $I=J$ by the definition of $J$.
    \end{proof}

    The proof of \autoref{claim: condition} immediately implies the following corollary.

    \begin{claim}\label{claim: certainty}
        Algorithm $\alg$ always outputs the correct answer at \cref{line: output}.
    \end{claim}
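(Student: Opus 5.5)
The claim is that whenever \cref{line: output} is reached, the output $1\oplus t$ equals $x^*=\XORfunc{X}$. The plan is to reuse the second direction of the proof of \autoref{claim: condition} directly. Suppose the condition at \cref{line: condition} holds for some $t\in\{0,1\}$. Note that the $t$ witnessing it is unique. For any $j\notin I$, the value $a_j$ determines $t$ as $a_j\oplus y_j$. If $I=[m]$ there is no such $j$, and we handle this case separately below.

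Rewriting the condition as in that proof: for $j\in I$ we have $z_j=y_j$, and for $j\notin I$ we have $z_j=y_j\oplus x^*$. This uses that XOR is linear, so $\XORfunc{Y_j\oplus X}=y_j\oplus x^*$. Hence
\[
a_j=\begin{cases} z_j & j\in I,\\ z_j\oplus x^*\oplus t & j\notin I.\end{cases}
\]
Since $\mathcal{B}$ is zero-error, $a\neq z$. This forces $I\neq[m]$ and $x^*\oplus t=1$, i.e.\ $1\oplus t=x^*$. So the output at \cref{line: output} is exactly $\XORfunc{X}$.

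The only delicate point is the degenerate case $I=[m]$. In that case the condition at \cref{line: condition} reads $a_j=y_j=z_j$ for all $j$, independently of $t$. This would mean $a=z$, which contradicts the zero-error guarantee of $\mathcal{B}$, so the branch is never entered with $I=[m]$. Consequently the value of $t$ is well-defined whenever \cref{line: output} is reached, and the argument above applies. No further obstacle is expected; the claim is an immediate consequence of the computation already carried out in \autoref{claim: condition}.
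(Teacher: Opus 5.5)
Your proposal is correct and follows essentially the same route as the paper: reuse the rewriting from the second direction of the proof of \cref{claim: condition}, then use $a\neq z$ (zero-error of $\mathcal{B}$) to force $x^*\oplus t=1$, so the output $1\oplus t$ equals $x^*$. Your separate treatment of the case $I=[m]$ and of the uniqueness of $t$ is harmless extra care that the paper's argument already covers implicitly, since any witnessing $t$ is forced to equal $x^*\oplus 1$, and when $I=[m]$ the condition would already give $a=z$.
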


    \begin{proof}
        In the proof of \autoref{claim: condition}, we show that if the condition at \cref{line: condition} holds, it must be the case that $x^* \oplus t = 1$. Thus, $x^* = t \oplus 1$ is the correct answer.
    \end{proof}

    \begin{claim} \label{claim: prob}
        The condition at \cref{line: condition} is satisfied with probability $1/2^m$.
    \end{claim}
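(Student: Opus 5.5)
By \autoref{claim: condition}, the condition at \cref{line: condition} holds exactly when $I=J$. So it suffices to show $\mathbb{P}[I=J]=1/2^m$. The plan is to show that $I$ is uniform over subsets of $[m]$ and independent of $J$. Then, whatever the distribution of $J$ is, we get $\mathbb{P}[I=J]=\sum_{S}\mathbb{P}[J=S]\cdot 2^{-m}=2^{-m}$.

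The key step is the one-time-pad observation. Fix the input $X$ and any choice of $I$. Each $Z_j$ is either $Y_j$ or $Y_j\oplus X$, where $Y_j$ is uniform on $\{0,1\}^{N/m}$ and independent of everything else. Hence $(Z_1,\ldots,Z_m)$ is uniformly distributed over $(\{0,1\}^{N/m})^m$ regardless of $I$, so $Z$ is independent of $I$. The outcome $a$ is produced by running $\mathcal{B}$ on $Z$ with its own internal randomness, so $a$ depends only on $Z$ and that randomness. The bits $z_j=\XORfunc{Z_j}$ are functions of $Z$. Therefore $J=\{j: a_j=z_j\}$ is a (randomized) function of $Z$ and $\mathcal{B}$'s coins alone, and is independent of $I$.

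It remains to combine these facts. By the sampling step, $I$ is a uniformly random subset of $[m]$, independent of the $Y_j$'s and hence of $Z$. Conditioning on $(Z,\text{coins of }\mathcal{B})$ fixes $J$ while $I$ remains uniform, so $\mathbb{P}[I=J]=2^{-m}$ exactly. The only point needing care is the joint independence of $I$ and $Z$, not just their pairwise uniformity. I would verify it by noting that for every fixed $I$ the map $(Y_1,\ldots,Y_m)\mapsto Z$ is a bijection, so the conditional law of $Z$ given $I$ is uniform and does not depend on $I$. Note also that the zero-error property of $\mathcal{B}$ is not needed for this claim; it is only used in \autoref{claim: condition}.
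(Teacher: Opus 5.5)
Your proposal is correct and follows the paper's proof essentially step for step. You reduce to $\mathbb{P}[I=J]$ via the preceding claim, use the one-time pad to see that $I$ is independent of $(Z_1,\ldots,Z_m)$, and conclude that $J$, being determined by $Z$ and $\mathcal{B}$'s internal randomness, is independent of the uniform set $I$; your bijection remark just spells out the joint independence that the paper asserts in one line.
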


    \begin{proof}
        By \cref{claim: condition}, it suffices to show that $I=J$ happens with probability $1/2^m$.
        Note that $I$ is chosen independently with $Y_1, \ldots, Y_m$, and $Y_1, \ldots, Y_m$ is uniformly random. Therefore, $I$ is also independent with $(Z_1, \ldots, Z_m)$. 
        Since the set $J$ only depends on $(Z_1, \ldots, Z_m)$ and the internal randomness of $\mathcal{B}$, $I$ and $J$ are also independent. Therefore, $I$ is equal to $J$ with probability $1/2^m$.
    \end{proof}

    Combining \autoref{claim: certainty} and \autoref{claim: prob}, the overall success probability of $\alg$ is 
    \[\frac{1}{2^m} + \frac{1}{2}(1 - \frac{1}{2^m}) = \frac{1}{2} +\frac{1}{2^{m+1}},\]
    which concludes \autoref{lem: AOES to XOR}.

Finally, we show that we can replace the zero-error algorithm for \AOES in \autoref{lem: AOES to XOR} with a pseudo-deterministic algorithm with a small overhead on the query complexity.

\begin{lemma}\label{cor:zeroErrorPsD to PsD}
    If there exists a pseudo-deterministic quantum algorithm $\mathcal{B}$ for $\AOES_{N,m}$ with query complexity $q$, then there exists a quantum algorithm $\mathcal{A}$ for $\XOR_{N/m}$ with query complexity $O(q\cdot m)$ such that for every input $ X\in \{0,1\}^{N/m}$,

    \begin{equation*}
        \mathbb{P}\left[\mathcal{A}^{X}=\XORfunc{X}\right]\geq \frac{1}{2} +\frac{1}{2^{m+2}}.
    \end{equation*}
\end{lemma}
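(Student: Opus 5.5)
The plan is to amplify $\mathcal{B}$ into an algorithm $\mathcal{B}'$ whose error is at most $2^{-2m}$, and then run \cref{alg:AOEStoXOR} with $\mathcal{B}'$ in place of the zero-error solver. We then redo the analysis of \cref{lem: AOES to XOR} while keeping track of the small error.

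\textbf{Amplification.} Let $\mathcal{B}'$ run $\mathcal{B}$ independently $k=O(m)$ times and output the most frequent outcome (plurality vote). Since $\mathcal{B}$ is pseudo-deterministic, each run outputs the canonical solution $f(Z)$ with probability at least $2/3$. By a Chernoff bound, more than half of the runs return $f(Z)$ except with probability $2^{-\Omega(k)}$, and whenever this happens the plurality outcome is $f(Z)$. Choosing the constant in $k$ appropriately gives error at most $2^{-2m}$, with $O(qm)$ queries. Because $(Z,f(Z))\in\probR$, on the good event $\mathcal{B}'$ outputs a valid \AOES solution.

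\textbf{Coupling with an ideal solver.} Define the deterministic ideal solver $\mathcal{B}^*(Z)=f(Z)$. It is zero-error and uses no internal randomness, so \cref{lem: AOES to XOR} applies verbatim. The proofs of \cref{claim: condition}, \cref{claim: certainty} and \cref{claim: prob} only use that the output is a valid solution and depends only on $Z$ and internal randomness independent of $I$. Hence the algorithm $\mathcal{A}^*$ built from $\mathcal{B}^*$ succeeds with probability at least $\tfrac12+2^{-(m+1)}$ on every input $X$.

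\textbf{Transfer to $\mathcal{B}'$.} The actual algorithm $\mathcal{A}$, which uses $\mathcal{B}'$, behaves identically to $\mathcal{A}^*$ whenever $\mathcal{B}'$ outputs $f(Z)$; couple the two so that they share $I$, the $Y_j$, and the final random bit. The disagreement event has probability at most $2^{-2m}$ for every fixed $Z$, and hence also after averaging over the random $Z$. Therefore
\[
\mathbb{P}[\mathcal{A}^X=\XORfunc{X}]\ \ge\ \tfrac12+2^{-(m+1)}-2^{-2m}\ \ge\ \tfrac12+2^{-(m+2)}
\]
for $m\ge 2$. For $m=1$, take error at most $2^{-3}$ instead, giving $\tfrac12+\tfrac14-\tfrac18$, which still suffices. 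The main subtlety is this transfer step. A direct reuse of the analysis of \cref{lem: AOES to XOR} would fail because an erroneous output of $\mathcal{B}'$ might equal $b$, which breaks \cref{claim: certainty}. Comparing against the ideal deterministic solver $f$ avoids this, and the error is bounded uniformly in $Z$, so the averaging over $Z$ is harmless. The query count is $O(qm)$, as required.
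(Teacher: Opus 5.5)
Your proposal is correct and follows the paper's proof: amplify $\mathcal{B}$ by $O(m)$ repetitions with a plurality vote, plug the amplified solver into \cref{alg:AOEStoXOR}, and pay only its error probability. The paper amplifies directly to error below $2^{-(m+2)}$, so no separate $m=1$ case is needed, and it leaves implicit the coupling with the ideal canonical solver $f$ that you spell out.
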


\begin{proof}
    Given such a pseudo-deterministic algorithm $\mathcal{B}$, we can construct a new algorithm $\mathcal{B}'$ that repeatedly runs $\mathcal{B}^X$ for $100m$ times and outputs the most common outcomes from these repetitions (breaking ties arbitrarily). 
    By the Chernoff bound, the error probability of $\mathcal{B}'$ is smaller than $\frac{1}{2^{m+2}}$. 

    Since the zero-error algorithm for \AOES is only called once in \cref{lem: AOES to XOR}, we can replace it by $\mathcal{B'}$, and only increase the overall error probability by at most $\frac{1}{2^{m+2}}$.
\end{proof}

Combining \autoref{cor:zeroErrorPsD to PsD} with \cref{Lem:XOR_LB}, we conclude \autoref{thm: AOES}.

\begin{remark}
    The only place we use the pseudo-deterministic property for proving \autoref{thm: AOES} is the repetition trick in \autoref{cor:zeroErrorPsD to PsD}. In contrast, the trivial random sampling algorithm for \AOES has error probability $1/2^m$; however, we cannot reduce this error via repetition. 
    
    Indeed, our proof of \autoref{thm: AOES} implies that achieving an error probability that is much lower than $1/{2^m}$ for \AOES is extremely difficult, even for a standard quantum algorithm.
\end{remark}

\subsection{Exponential pseudo-deterministic quantum advantage}\label{sec:SimonHamming}

In this section, we define a problem, {Quantum-Locked Estimation} (\SimonHamming), which is hard for classical algorithms to solve pseudo-deterministically, but is easy for pseudo-deterministic quantum algorithms and randomized classical algorithms. 

\IntroSH*

The \SimonHamming problem can be seen as
 a \emph{lock-and-key} construction. We first pick the \Hamming problem --- calculating an estimate of the Hamming weight of a string --- as the `lock', since it is easy for classical randomized algorithms but hard for any pseudo-deterministic classical algorithm. We then add in a `key', the Simon's problem~\cite{Simon97}, as a search problem with exponential quantum advantage, and the unique solution of the `key' problem is a valid solution of the `lock' problem. 
 Combining these, the input to the lock-and-key promise problem is a pair $(f,X)$. An input of this form satisfies the promise (and is therefore a valid input) if the solution of the key problem on input $f$ is a solution to the lock problem on input $X$ and any valid solution overall is a solution to the lock problem on input $X$. This implies that a quantum algorithm can obtain the secret string that is the pseudo-deterministic solution to this problem that we know is classically pseudo-deterministically hard. 

In this section, unless otherwise stated, $N=2^n$. 
We start by introducing the following necessary definitions.

\begin{definition}[Simon's Problem]
    The input to this problem is a function $f:\{0,1\}^n\rightarrow \{0,1\}^n$ promised to be a 2-to-1 or 1-to-1 function with a secret string $s\in \{0,1\}^n$ such that 
    \begin{equation*}
        \forall x,y\in\{0,1\}^n:f(x)=f(y)\iff x=y\oplus s \; \text{ or } \; x=y.
    \end{equation*}
    Given oracle access to this function, the goal is to output the secret string $s$. The special case $s=0^n$ implies that $f$ is a permutation of the strings of length $n$. 
\end{definition}

With some abuse of notation, we interpret $s\in\{0,1\}^n$ to be either a string or an $n$-bit binary number interchangeably. For $s\in \{0,1\}^n$, we denote by $\setS_{s}$ the set of Simon's functions on $\{0,1\}^n$ that encode the secret string $s$, that is all functions $f:\{0,1\}^n\rightarrow \{0,1\}^n$ such that for every distinct $x,y\in\{0,1\}^n$ it holds that $f(x)=f(y)$ if and only if $x=y$ or $ x=y\oplus s$. We shall use the following standard bounds of the query complexity of Simon's problem.
\begin{theorem}[\cite{Simon97}]
    Any classical randomized algorithm for Simon's problem requires $\Omega(2^{n/2})$ queries, whereas there exists a quantum algorithm for Simon's problem using $O(n)$ queries.
\end{theorem}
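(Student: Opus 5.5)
The statement has two parts, and I would prove them separately: an explicit quantum algorithm for the upper bound, and Yao's minimax principle with a hard input distribution for the classical lower bound.

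\textbf{Quantum upper bound.} I would run Simon's sampling subroutine. Prepare $\frac{1}{\sqrt{2^n}}\sum_x \ket{x}\ket{0^n}$ and make one query to get $\frac{1}{\sqrt{2^n}}\sum_x\ket{x}\ket{f(x)}$. Measure the second register, which leaves $\frac{1}{\sqrt2}(\ket{x_0}+\ket{x_0\oplus s})$, or a single basis state when $s=0^n$. Apply $H^{\otimes n}$ and measure. The outcome is uniform over $\{y: y\cdot s=0 \bmod 2\}$.

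Each run costs one query to the $n$-bit-output oracle. If the oracle is bitwise, this becomes $O(n)$ queries, or one query per output bit. In the paper's convention, $N$ is the input length, so $O(\log N)$ overall is what matters.

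After $O(n)$ independent runs, the samples span the orthogonal complement $s^{\perp}$ with probability at least $2/3$. This holds because each new sample escapes the current proper subspace of $s^\perp$ with probability at least $1/2$. Solving the linear system over $\mathrm{GF}(2)$ then yields a candidate set $\{0^n, s'\}$. Two further queries, checking whether $f(0^n)=f(s')$, decide whether $s=s'$ or $s=0^n$.

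\textbf{Classical lower bound.} By Yao's principle it suffices to fix a hard distribution and bound deterministic algorithms making $q$ queries. I would take $s$ uniform over nonzero strings, and $f$ a uniformly random $2$-to-$1$ function with period $s$, obtained by assigning distinct random labels to the cosets $\{x,x\oplus s\}$. I would first argue that queries which have not yet produced a collision return fresh, uniformly random distinct labels. So until a collision occurs, the answers carry no information about $s$ beyond ruling out the values $x_i\oplus x_j$ for pairs already queried.

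The key step is the conditioning argument, which I expect to be the only delicate point. Conditioned on no collision among the first $k$ queries, $s$ is uniform over the nonzero strings outside the excluded set $\{x_i\oplus x_j\}_{i<j\le k}$. That excluded set has size at most $\binom{k}{2}$, so the $(k+1)$-st query creates a collision with probability at most $k/(2^n-1-\binom{k}{2})$. Summing over $k$ gives
\[
\Pr[\text{some collision}]\le q^2/(2^n-q^2).
\]
If no collision occurs, the output equals $s$ with probability at most $1/(2^n-1-q^2)$. Hence the success probability is $O(q^2/2^n)$, and achieving $2/3$ forces $q=\Omega(2^{n/2})=\Omega(\sqrt N)$.
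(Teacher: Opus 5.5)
Your proof is correct and is the standard one. The paper does not reprove this statement; it simply cites \cite{Simon97}. Your classical half is the same Yao-plus-collision argument the paper gives for \cref{lem: hardness of Simons}: random $s$, random $f\in\setS_s$, and the $(k+1)$-st query rules out at most $k$ candidate secrets. You make the uniform-posterior step more explicit than the paper's appeal to symmetry. Your quantum half is Simon's usual sampling plus $\mathrm{GF}(2)$ linear algebra.

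One aside needs tidying. With a bitwise oracle, the naive implementation costs $O(n)$ queries per run and so $O(n^2)$ overall. The $O(n)=O(\log N)$ bound therefore relies on the standard $n$-bit-output oracle for $f$, which is the model the statement (and the paper's \SimonHamming) intends.
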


Indeed, we require the following stronger statement regarding the hardness of Simon's problem.

\begin{lemma}\label{lem: hardness of Simons}
    Given a function $f \in \setS_{s}$
    where the secret string $s$ is promised to be in one of two disjoint sets $S_1, S_2\subseteq \{0,1\}^n$, then any classical randomized algorithm needs $\Omega(\sqrt{\min(|S_1|,|S_2|)})$ queries to determine whether $s$ is in $S_1$ or $S_2$.
\end{lemma}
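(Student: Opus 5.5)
We prove the lower bound with Yao's minimax principle and the standard ``no collision found'' argument for Simon's problem. Assume $|S_1|\le|S_2|$ and set $k=|S_1|$. We may assume $k\ge 2$, since otherwise the bound is trivial. Define a hard distribution $\mu$ as follows: with probability $1/2$ pick $s$ uniformly from $S_1$, and otherwise pick $s$ uniformly from a fixed subset $S_2'\subseteq S_2$ of size $k$. Then pick $f$ uniformly from $\setS_s$. Concretely, a uniformly random $f\in\setS_s$ is obtained by taking a uniformly random injection from the $2^{n-1}$ cosets of $\{0,s\}$ into $\{0,1\}^n$; when $s=0^n$ this is a uniformly random permutation. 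By Yao's principle, it suffices to show that every deterministic algorithm making $q=o(\sqrt{k})$ queries decides between $S_1$ and $S_2$ on $\mu$ with success probability at most $1/2+o(1)$. We may also assume queries are distinct and adaptive.

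The key step is to show that, as long as no collision $f(x_a)=f(x_b)$ with $x_a\neq x_b$ has been seen, the answers carry essentially no information about $s$. Run the algorithm on the ``collision-free simulation'' in which every new query receives a fresh uniformly random value among those not yet returned. Conditioned on $s$ and on the first $t$ query points $x_1,\dots,x_t$ being in distinct cosets of $\{0,s\}$, the next answer under $\mu$ is distributed exactly as in the simulation, provided the next point also lies in a new coset. Hence the transcript under $\mu$ equals the simulated transcript, which is independent of $s$, unless some pair of queried points satisfies $x_a\oplus x_b=s$. Call this event $\mathrm{Bad}$. The success probability is then at most $1/2+\Pr[\mathrm{Bad}]$.

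To bound $\Pr[\mathrm{Bad}]$, use the simulation: its query sequence $x_1,\dots,x_q$ is a random variable independent of $s$. It produces at most $\binom{q}{2}$ differences $x_a\oplus x_b$. The secret $s$ is uniform over a set of size $k$ within its half of the mixture and independent of these differences, so a union bound gives $\Pr[\mathrm{Bad}]\le \binom{q}{2}/k$. A hybrid or coupling argument, comparing the real process with the simulated one up to the first bad pair, transfers this bound to the real process. This gives advantage at most $q^2/(2k)$, so constant advantage requires $q=\Omega(\sqrt{k})=\Omega(\sqrt{\min(|S_1|,|S_2|)})$.

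The main obstacle is the coupling step: making rigorous that the real and simulated processes agree until the first pair with $x_a\oplus x_b=s$. For $s\neq 0^n$ this is clean, because conditioned on no bad pair, each new query hits a new coset and its answer is uniform over the unused values. The case $0^n\in S_1\cup S_2$ is also fine, since a uniformly random permutation behaves exactly like the simulation and can never produce a bad pair. I would handle this step by writing both processes as sequential sampling and noting that they are identical on transcripts avoiding $\mathrm{Bad}$. The total variation distance between them is then at most $\Pr[\mathrm{Bad}]$, which is computed in the simulation.
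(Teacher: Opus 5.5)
Your proposal is correct and follows essentially the same route as the paper: Yao's principle on the same mixture distribution (uniform $s$ from $S_1$ or $S_2$, then uniform $f\in\setS_s$), together with two facts. Collision-free answers reveal nothing about $s$, and $q$ queries can rule out or hit at most $\binom{q}{2}$ candidate secrets; this is the paper's ``a query at depth $d$ kills at most $d-1$ strings.'' Your coupling with an $s$-independent collision-free simulation is a more rigorous rendering of the paper's informal ``by symmetry / most strings still alive'' step, and it yields the explicit advantage bound $q^2/(2k)$. Balancing $S_2$ down to size $k$ is harmless but not needed.
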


\begin{proof}
    With half the probability, let the secret string $s$ be uniformly drawn from $S_1$; and with half the probability, let $s$ be uniformly drawn from $S_2$. Then, take $f$ to be uniformly drawn from $\setS_s$.
    
    By Yao's minimax principle, it suffices to consider any deterministic algorithm $\alg$ making at most $t = o(\sqrt{\min(|S_1|,|S_2|)})$ queries on the input distribution above. Without loss of generality, we assume $\alg$ will stop and output the answer immediately once it finds a collision of $f$.
    By symmetry, we can simplify the decision tree of $\alg$ by merging nodes in a way that each non-leaf node has only two children: one represents that a collision is found, and the other represents that no collision is found.

    We call a string $t$ at node $u$ \emph{alive} if the queries to $u$'s ancestors have not ruled out the possibility that the secret string of $f$ being $t$. All the strings in $S_1$ and $S_2$ are alive at the root, and a query at depth $d$ can \emph{kill} at most $d-1$ strings. By symmetry, the probability that $\alg$ never finds a collision is $1 - o(1)$. And conditioned on reaching a leaf $u$ without collision, most of the strings in both $S_1$ and $S_2$ are still alive at $u$. Thus, no matter $\alg$ outputs $S_1$ or $S_2$ at $u$, $\alg$ will be wrong with probability at least $\frac{1}{2} - o(1)$. This concludes that $\alg$ is not correct, and thus $\Omega(\sqrt{\min(|S_1|,|S_2|)})$ queries are necessary.
\end{proof}

We now formally state the version of the Hamming weight approximate problem that we will denote by \Hamming.

\begin{definition}[The Hamming Problem]
    Given a bit string $X\in\{0,1\}^N$, the goal is to output an approximation of the Hamming weight of $X$ within an additive error range of $N/10$; that is,
    \begin{align*}
        \Hamming=\{(X,t) \in\{0,1\}^N \times \mathbb{N}: \big| \|X\|_{\mathsf{hw}}-t \big| \leq N/10\} \;.
    \end{align*}
\end{definition}

Since we allow a solution to \Hamming to be within a range of the actual Hamming weight, it can be easily solved by a random sampling algorithm in $O(1)$ queries.
However, \Hamming was proven to be maximally hard for classical pseudo-deterministic algorithm in \cite{GGR13}. We reproduce the proof here, as it includes some useful ideas for proving the lower bound of the \SimonHamming problem later.

\begin{theorem}[Corollary 3.2 of \cite{GGR13}]\label{thm: hardness of hamming}
    $\psR(\Hamming) = \Omega(N)$.
\end{theorem}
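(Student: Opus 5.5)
We will use the sensitivity argument sketched in the introduction. Fix a classical pseudo-deterministic algorithm $\alg$ for \Hamming with $q$ queries, and let $f$ be its canonical-output function. First we amplify: running $\alg$ $O(\log N)$ times and taking the majority gives an algorithm $\alg'$ with $O(q\log N)$ queries that outputs $f(X)$ with probability at least $1-1/N^2$ on every input. This will show $q\log N=\Omega(N)$. To get exactly $\Omega(N)$ we will instead amplify only to a small constant error, say $1/100$, which costs $O(q)$ queries and is enough for the argument below.

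\textbf{Locating a sensitive input.} Let $t=f(1^{N/2}0^{N/2})$. Since $t$ is within $N/10$ of $N/2$, it is not a valid answer for weight-$0$ strings. Among all strings $X$ with $f(X)=t$, choose $X'$ of minimal Hamming weight. Then $\|X'\|_{\mathsf{hw}}\ge t-N/10\ge 0.4N-o(N)$, so $X'$ has $\Omega(N)$ one-bits. For each one-bit $i$ of $X'$, the string $X'^{(i)}$ obtained by flipping bit $i$ has smaller weight, so by minimality $f(X'^{(i)})\neq t$.

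\textbf{Block-sensitivity bound.} On $X'$, the amplified algorithm outputs $t$ with probability at least $0.99$. On each $X'^{(i)}$, it outputs $t$ with probability at most $0.01$. So for every $i$ in a set $S$ of size $\Omega(N)$, the algorithm must detect the change at bit $i$ with probability at least $0.98$. A query only distinguishes $X'$ from $X'^{(i)}$ if it actually reads position $i$. Hence, running the algorithm on $X'$, the probability $p_i$ that position $i$ is queried is at least $0.98$, because the two executions coincide unless $i$ is read. Summing over $i\in S$, the expected number of queries is at least $\sum_{i\in S}p_i\ge 0.98|S|=\Omega(N)$. Since the number of queries is at most $O(q)$, we get $q=\Omega(N)$.

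The only delicate point is making sure $X'$ has linearly many ones. This follows from the promise that $f$ always outputs a valid solution, which forces $t\ge 0.4N$ and therefore $\|X'\|_{\mathsf{hw}}\ge t-N/10\ge 0.3N$. The rest is the standard randomized sensitivity bound, $\R\ge\Omega(\mathrm{s})$, applied to a distinguishing task that need not be Boolean-valued. We expect no real obstacle beyond handling this carefully.
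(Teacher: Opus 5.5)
Your proof is correct and follows the paper's argument: take a minimum-weight string whose canonical output equals that of a weight-$N/2$ string, note that it has at least $0.3N$ ones, and observe that flipping any one of those ones changes the canonical output. (Your intermediate bound ``$0.4N-o(N)$'' should read $0.3N$, as you state later.) Your coupling argument, that each such bit must be read with constant probability and so the expected number of queries is $\Omega(N)$, makes explicit the ``$\Omega(N)$ queries to distinguish $Y$ from all of $\Flip_Y$'' step that the paper leaves implicit; it also works without amplification, giving $2/3-1/3$ in place of $0.98$.
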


\begin{proof}
    
    Now let $\alg$ be a pseudo-deterministic algorithm for \Hamming.
    Let $X^*\in\{0,1\}^N$ be some fixed string whose Hamming weight is $N/2$, and take
     $t \coloneqq \alg^{X^*}$. We then pick a string $Y\in\{0,1\}^N$ whose Hamming weight is the smallest among those for which $\alg$ outputs $t$ with probability at least $2/3$ (break ties arbitrarily).
    The Hamming weight of $Y$ is at least $3N/10$.
    
    Now consider the set $\Flip_Y$ of strings by flipping a ``1'' to ``0'' in $Y$, i.e., 
    \[\Flip_Y \coloneqq \{X \in \{0,1\}^N: \exists i \in [N], X_i = 0 \land Y_i = 1 \land (X_j = Y_j,  \forall j \neq i) \}.  \]
    By definition, we have $\alg^Y \neq \alg^{Y'}$ for any string $Y' \in \Flip_Y$; we also know $|\Flip_Y| \geq 3N/10$.
    Therefore, $\alg$ requires $\Omega(N)$ queries to distinguish $Y$ from all the strings in $\Flip_Y$. 
\end{proof}

We now state the \SimonHamming problem, where the Simon's problem is the `key' and \Hamming is the `lock'. 

\begin{definition}[The \SimonHamming Problem]
    The input is a pair $(f,X)$ for $f:\{0,1\}^n\rightarrow \{0,1\}^n$ and $X\in \{0,1\}^N$ where $N=2^n$. Here, $(f,X)$ is promised to satisfy the following.
    \begin{itemize}
        \item $f \in \setS_{s}$ for a string $\bv{s}\in \{0,1\}^n$, i.e., $f$ is a valid Simon's instance encoding the secret $s$;
        \item $\big| \|X\|_{\mathsf{hw}} - \bv{s}\big| \leq 0.09\cdot N$, i.e., the value of $\bv{s}$ is a valid solution to the \Hamming instance $X$.
    \end{itemize}
    
    The goal is to output an integer $t\in[N]$ within $N/10$ distance of the Hamming weight of $X$. 
\end{definition}

In this case, a query to an input $(f,X)$ of \SimonHamming is either a query to $f$ or to $X$ and the query complexity of an algorithm for \SimonHamming is the total number of queries to either part of the input. We show that \SimonHamming is easy for classical randomized and a quantum pseudo-deterministic algorithm exists that canonizes the classical one, but the problem is hard for classical pseudo-deterministic algorithm; this immediately implies \autoref{thm:IntroSH}. 

\begin{theorem}\label{thm: SHComp}
    Let $N = 2^n$ be the length parameter in \SimonHamming, then the following holds for the complexity of \SimonHamming.
    \begin{enumerate}
        \item $\R(\SimonHamming) = O(1)$;
        \item $\psQ(\SimonHamming) = O(\log N)$;
        \item $\psR(\SimonHamming) = \Omega(\sqrt{N})$. \label{item: psR}
    \end{enumerate} 
    Furthermore, there exists a classical randomized algorithm $\mathcal{A}$ for \SimonHamming with query complexity $O(\log N)$ and a pseudo-deterministic quantum algorithm $\mathcal{B}$ with query complexity $O(\log(N))$  such that $\mathcal{B}$ canonizes $\mathcal{A}$.
\end{theorem}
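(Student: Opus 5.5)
Items 1 and 2 and the canonization claim are routine, so I will dispatch them first; Item~\ref{item: psR} is the real content. For Item 1, the algorithm samples $O(1)$ uniform positions of $X$ and outputs $N$ times the empirical fraction of ones, rounded. A Chernoff bound gives error at most $N/10$ with probability $2/3$.

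For Item 2, $\mathcal{B}$ runs Simon's algorithm on $f$ with $O(n)=O(\log N)$ queries. Repeating enough times drives the failure probability below a small constant, and $\mathcal{B}$ outputs $s$. The promise $|\|X\|_{\mathsf{hw}}-s|\le 0.09N$ makes $s$ a valid solution, and $s$ is determined by the input, so $\mathcal{B}$ is pseudo-deterministic.

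For canonization, let $\mathcal{A}$ sample $O(\log N)$ positions and output the estimate rounded to an integer. Its output lies within $0.01N$ of $\|X\|_{\mathsf{hw}}$ except with probability $1/\mathrm{poly}(N)$. The difficulty is that a single value $s$ still receives only about $1/\sqrt{\log N}$ or less mass under this output distribution. So I will instead use a smoothed $\mathcal{A}$: first compute the estimate $\hat t$, then with probability $1/2$ output a uniformly random integer in $[\hat t-0.01N,\hat t+0.01N]$, and otherwise output $\hat t$. Any $s$ within $0.09N$ then gets probability at least $\Omega(1/N)$. Combined with an error probability of $N^{-2}$, obtained by taking $O(\log N)$ samples and restricting outputs to the window $|\,\cdot-\|X\|_{\mathsf{hw}}|\le 0.1N$, this gives $\mathbb{P}[\mathcal{A}=s]>\mathbb{P}[\text{error}]$. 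The exact parameters need care, but this is the footnoted adjustment.

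For Item~\ref{item: psR}, I follow the proof of \cref{thm: hardness of hamming}, combined with \cref{lem: hardness of Simons}. Fix a pseudo-deterministic classical $\alg$ with $q=o(\sqrt N)$ queries, amplified by $O(1)$ repetitions so that its error is small. I fix a Simon function $f\in\setS_s$ with $s\approx N/2$ and consider all $X$ compatible with $s$. Among compatible inputs $(f,X)$ on which $\alg$ outputs a given $t$, I take one with $X$ of minimal Hamming weight, $\|X\|_{\mathsf{hw}}\approx s-0.09N$.

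Flipping any 1-bit of $X$ then gives one of two outcomes. Either the flipped input violates the promise for this $f$, or it keeps the promise and produces a different output. In the second case, $\alg$ must distinguish $(f,X)$ from $(f,X')$, and since $X$ has $\Omega(N)$ ones, the $X$-sensitivity argument forces $\Omega(N)$ queries. So a cheap $\alg$ must sit at the promise boundary, where lowering the weight of $X$ is only possible together with changing $s$.

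This yields \cref{lem: reduce Simon to SH}: the output of a low-sensitivity $\alg$ must approximately track $s$. I would formalize it as follows. Consider two families of inputs that share the same $X$, of weight $w$, but have secrets $s_1\in S_1=[w+0.05N,w+0.09N]$ and $s_2\in S_2=[w-0.09N,w-0.05N]$. A walk argument on the weight of $X$ shows that the canonical output must differ between these families: otherwise some minimal-weight boundary input would have $\Omega(N)$ insensitive-looking neighbors, forcing many queries. Hence $\alg$ distinguishes $s\in S_1$ from $s\in S_2$ with constant advantage, and \cref{lem: hardness of Simons} with $|S_1|,|S_2|=\Omega(N)$ gives $\Omega(\sqrt N)$ queries.

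The main obstacle is making this dichotomy rigorous given the coupling between $f$ and $X$, in particular choosing the minimal element jointly over $(s,X)$ so that every flip lands either in a promise-valid input with a different canonical output or in a configuration that certifies approximate knowledge of $s$. I would try first to define the minimum over pairs ordered by $\|X\|_{\mathsf{hw}}-s$.
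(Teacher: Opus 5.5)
\paragraph{Main gap: Item~3 is not proved.} Your Items 1 and 2 are fine. Your overall plan for Item~3 also matches the paper's: a sensitivity argument with $f$ fixed, followed by a reduction to distinguishing $s\in S_1$ from $s\in S_2$ via \cref{lem: hardness of Simons}. However, the step that carries the content is never proved, namely that the canonical output $t$ on $(f,\Xh)$ satisfies $|t-s|\le 0.01N$. You assert it via an unspecified ``walk argument'', flag it as the main open obstacle, and propose a joint minimization over $(s,X)$ ordered by $\|X\|_{\mathsf{hw}}-s$, which is an unnecessary detour.

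The coupling between $f$ and $X$ causes no difficulty once $f\in\setS_s$ is held fixed. The promise then only restricts $\|X\|_{\mathsf{hw}}$ to $[s-0.09N,s+0.09N]$. Every bit flip of $X$ that stays in this range is again a valid input with the same $f$, and the sensitivity argument involves only queries to $X$.

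What is missing is the symmetric second half of the extremal argument. Your minimal-weight step shows that the lightest $\Ylow$ with canonical output $t$ has weight $s-0.09N$; otherwise all of its at least $0.3N$ one-bits would be sensitive. By itself this only gives $t\le s+0.01N$, which cannot separate $S_1$ from $S_2$. You also need the heaviest $\Yhigh$ with canonical output $t$. Taking $w=N/2$ as in the paper, $\|\Yhigh\|_{\mathsf{hw}}\le t+0.1N\le 0.7N$, so $\Yhigh$ has at least $0.3N$ zero-bits, and the same argument forces its weight to be $s+0.09N$. Since $t$ is a valid estimate for both strings, $|t-(s\mp 0.09N)|\le N/10$, hence $|t-s|\le 0.01N$. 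With this two-sided bound, your sets $S_1=[w+0.05N,w+0.09N]$ and $S_2=[w-0.09N,w-0.05N]$ give $t\ge w+0.04N$ versus $t\le w-0.04N$. The answer can then be read off from $t$, and \cref{lem: hardness of Simons} finishes the proof.

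\paragraph{Second gap: the canonization window is too narrow.} You only guarantee $|\hat t-\|X\|_{\mathsf{hw}}|\le 0.01N$, so your window $[\hat t-0.01N,\hat t+0.01N]$ stays within about $0.02N$ of the true weight. The promise, however, allows $|s-\|X\|_{\mathsf{hw}}|$ up to $0.09N$ (e.g.\ $\|X\|_{\mathsf{hw}}=N/2$, $s=0.59N$). Such an $s$ is output only when the estimate itself has failed badly. So your claim that every such $s$ receives probability $\Omega(1/N)$ is false, and $\mathbb{P}[\mathcal{A}=s]>\mathbb{P}[\text{error}]$ is no longer guaranteed.

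The window must cover all of $[\|X\|_{\mathsf{hw}}-0.09N,\|X\|_{\mathsf{hw}}+0.09N]$ while staying inside $[\|X\|_{\mathsf{hw}}-0.1N,\|X\|_{\mathsf{hw}}+0.1N]$. The paper achieves this as follows. It uses $O(\log N)$ samples so that $|\hat t-\|X\|_{\mathsf{hw}}|\le 0.005N$ except with probability $o(1/N)$. It then outputs a uniform integer in $[\hat t-0.095N,\hat t+0.095N]$. This gives $s$ probability $\Omega(1/N)$ against an error probability of $o(1/N)$.
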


\paragraph{Upper bounds and canonization.}
    The first upper bound follows as a classical randomized algorithm can solve the \Hamming part directly via sampling and ignore the Simon's problem with $O(1)$ queries via random sampling. For the second upper bound, note that a pseudo-deterministic quantum algorithm can also solve \SimonHamming using $O(\log N)$ queries by first solving for the secret string $s$ given $f$ in the Simon's problem, and then outputting $\bv{s}$ as the estimate of the Hamming weight.

    For the last remark on canonization (\cref{def: Elevate}), we take the quantum algorithm $\mathcal{B}$ to be one that solves and outputs the secret string $s$ as described above.
    However, we need to construct a modified randomized algorithm $\alg$, described in below:
    \begin{enumerate}
        \item Make $10^6\cdot \log N$ queries to $X$ and let $\hat{t}$ be its best estimation of the Hamming weight of $X$.
        \item Output a uniformly random integer in the range $I\coloneqq[\hat{t}-0.095N,\hat{t}+0.095N]$.
    \end{enumerate}

    By the Chernoff bound, the probability that $\big|\hat{t} - \|X\|_{\mathsf{hw}} \big| \leq 0.005N$ is $\big(1 - o(1/N)\big)$. In that case, $\alg$ always outputs a correct solution, and we have $s \in I$ by the promise of the \SimonHamming problem.
    Thus, the canonical output $s$ of $\mathcal{B}$ will be outputted by $\alg$ with probability $\Omega(1/N)$. This concludes that $\mathcal{B}$ canonizes $\alg$

\paragraph{Pseudo-deterministic lower bound.}
    To show the $\Omega(\sqrt{N})$ lower bound for any classical pseudo-deterministic algorithm, we claim that any sub-linear pseudo-deterministic classical algorithm for \SimonHamming must \emph{approximately} solve the underlying Simon's problem, implying a lower bound from the classical hardness of Simon's.

\begin{lemma}\label{lem: reduce Simon to SH}
    Let $\alg$ be a pseudo-deterministic classical algorithm for \SimonHamming that uses $o(N)$ queries. For any string $\Xh\in \{0,1\}^N$ such that $\|\Xh\|_{\mathsf{hw}}=N/2$ and any string $s\in \{0,1\}^n$ whose value is between $0.41N$ and $0.59N$, it holds that for any Simon's function $f\in\setS_s$, the canonical output of $\alg$ on input $(f,\Xh)$  is some $t$ for which $|s-t|\leq0.01N$. 
\end{lemma}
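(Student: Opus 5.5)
We argue by contradiction, adapting the sensitivity argument from the proof of \cref{thm: hardness of hamming}. Fix $s$ with $0.41N\le s\le 0.59N$, a function $f\in\setS_s$, and $\Xh$ with $\|\Xh\|_{\mathsf{hw}}=N/2$. The pair $(f,\Xh)$ is a valid input, since $|N/2-s|\le 0.09N$. Let $t$ be the canonical output of $\alg$ on $(f,\Xh)$, and suppose $|s-t|>0.01N$; by symmetry assume $t<s-0.01N$. The key point is that $f$ stays fixed while only the Hamming part varies. For every $X$ with $|\|X\|_{\mathsf{hw}}-s|\le 0.09N$, the pair $(f,X)$ satisfies the promise, so $\alg$ has a well-defined canonical output on it.

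\textbf{Step 1: choose a minimal-weight string.} Among all $X$ with $\|X\|_{\mathsf{hw}}\ge s-0.09N$ and canonical output $t$ on $(f,X)$, pick one, $Y$, of minimal Hamming weight. Such strings exist, since $\Xh$ is one. Correctness gives $\|Y\|_{\mathsf{hw}}\le t+N/10$. The lower end of the admissible window is also feasible for $t$, because $(s-0.09N)-t>0.01N-0.09N$. This alone is not the right comparison, so we instead lower-bound the window: every string of weight exactly $s-0.09N$ must have a canonical output within $N/10$ of its weight.

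\textbf{Step 2: force $Y$ strictly above the window.} We want $Y$ to have weight strictly greater than $s-0.09N$, so that every one-bit decrement $Y'\in\Flip_Y$ is still a valid input with the same $f$. The natural move is to exploit the $0.01N$ gap. If $t<s-0.01N$, then $t$ is within $N/10$ of weight $s-0.09N$ (the difference is at most $0.08N$) but also within range of weights as low as $t-0.1N<s-0.11N$, and those lie outside the window. This alone does not give strictness, so the choice of constants must do the work. The main obstacle is arguing that the minimizer cannot sit exactly at the window boundary $s-0.09N$. The first thing I would try is to carry out Step 1 instead with $\Xh$-based thresholds, or to re-choose the comparison point so that the window edge forces a different output.

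Here is the concrete way around it. If the minimal $Y$ sits at the boundary, compare with the Hamming-only argument: the algorithm's behavior on the family $\{(f,X)\}$ resembles a pseudo-deterministic Hamming algorithm restricted to the window. Within that window, $t$ is a valid answer exactly for weights in $[t-0.1N,\,t+0.1N]\cap[s-0.09N,\,s+0.09N]$. When $t<s-0.01N$, this interval starts at $s-0.09N$, so boundary minimizers are possible, and one must go downward differently. To handle this, I would modify $f$ as well. Change $f$ to $f'\in\setS_{s'}$ with $s'$ slightly smaller, which requires $\Omega(\sqrt N)$ queries to detect by \cref{lem: hardness of Simons} (the sets of nearby secrets have size $\Theta(N)$). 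This extends the window downward while $\alg$'s output is unchanged with good probability. Iterating, $Y$ can then be decremented until its output must change.

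\textbf{Step 3: conclude.} At a minimizer $Y$ strictly inside the admissible region, all $\Omega(N)$ strings in $\Flip_Y$ have canonical output different from $t$. Hence $\alg$ distinguishes $(f,Y)$ from each of them, and the sensitivity argument, as in \cref{thm: hardness of hamming}, forces $\Omega(N)$ queries. This contradicts $o(N)$.
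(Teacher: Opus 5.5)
Your Step 2 has a genuine gap, and it comes from working at the wrong end of the weight window. Under your assumption $t < s-0.01N$, the value $t$ \emph{is} a valid answer at weight $s-0.09N$. Correctness on $(f,\Xh)$ gives $t\ge 0.4N$, so $|(s-0.09N)-t|\le 0.1N$. Hence the minimal-weight string with canonical output $t$ can legitimately sit exactly at the boundary $s-0.09N$. Its one-bit decrements then violate the promise, and the sensitivity argument yields nothing.

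Your proposed repair does not close this gap. Replacing $f$ by some $f'\in\setS_{s'}$ changes the input, and \cref{lem: hardness of Simons} only gives an $\Omega(\sqrt N)$ bound. The hypothesis says only that $\alg$ makes $o(N)$ queries (say $N^{0.9}$), which is enough to find a collision of $f$ and learn $s$ exactly. So nothing forces the canonical outputs on $(f,\cdot)$ and $(f',\cdot)$ to agree. Even below $\sqrt N$ queries, that lemma is an average-case statement about distinguishing two sets of secrets, not a per-input statement about canonical outputs. Iterating the shift until $s'$ has moved by $\Theta(N)$ only compounds this. Step 3 therefore relies on a minimizer ``strictly inside the admissible region'' that you never obtain.

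The missing idea is to take the extremal string on the side where $t$ is \emph{not} a valid answer at the edge of the window. If $t<s-0.01N$, let $\Yhigh$ be a string of \emph{largest} weight among all $X$ for which $(f,X)$ satisfies the promise and has canonical output $t$. Since $\Xh$ is a candidate, $\|\Yhigh\|_{\mathsf{hw}}\ge N/2\ge s-0.09N$. Correctness gives $\|\Yhigh\|_{\mathsf{hw}}\le t+N/10<s+0.09N$. So flipping any of the at least $0.3N$ zeros of $\Yhigh$ yields a valid input with the same $f$ (up to rounding), and by maximality its canonical output differs from $t$. The sensitivity argument of \cref{thm: hardness of hamming} then forces $\Omega(N)$ queries, a contradiction. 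The case $t>s+0.01N$ is symmetric, using the minimal-weight string.

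This is exactly the paper's proof, stated positively. Sensitivity forces $\|\Ylow\|_{\mathsf{hw}}=s-0.09N$ and $\|\Yhigh\|_{\mathsf{hw}}=s+0.09N$, and since $t$ is a valid answer for both extremes, $|s-t|\le 0.01N$. No change to $f$ and no appeal to the hardness of Simon's problem is needed inside this lemma; that hardness is used only afterwards, in the proof of \cref{thm: SHComp}.
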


\begin{proof}
    Take $\alg, \Xh, s$ as objects that satisfy the conditions in the statement of \autoref{lem: reduce Simon to SH} and consider an arbitrary $f \in \setS_s$ and let $t$ be the canonical output of $\alg$ on input $(f, \Xh)$. By the correctness of $\alg$, we have $2N/5 \leq t \leq 3N/5$. 
    
    We then pick a string $\Ylow$ whose Hamming weight is the smallest given that $\alg^{(f, \cdot)}$ outputs $t$ (break ties arbitrarily). The Hamming weight of $\Ylow$ is at least $2N/5-N/10=3N/10$. We claim that $\|\Ylow\|_{\mathsf{hw}} = s - 0.09N$.
    
    Suppose otherwise, i.e., $\|\Ylow\|_{\mathsf{hw}} > s - 0.09N$, we define the set $\Fliplow$ of strings reached by flipping any ``1'' to a ``0'' in $\Ylow$. For any string $Y' \in \Fliplow$, we have $\|Y'\|_{\mathsf{hw}} \geq s - 0.09N$, so $(f, Y')$ is a valid input to the \SimonHamming problem. 
    Recall that $t$ is the canonical output of $\alg$ on input $(f, \Ylow)$ and let $t'$ be the canonical output of $\alg$ on input $(f, Y')$. By definition of $Y_{\mathsf{low}}$, we have that $t \neq t'$. We know $|\Fliplow| \geq 3N/10$, which means $\alg$ needs at least $\Omega(N)$ queries to distinguish $\Ylow$ with all the strings in $\Fliplow$. This contradicts the assumption that $\alg$ uses $o(N)$ queries. Thus, we conclude that $\|\Ylow\|_{\mathsf{hw}} = s - 0.09N$.

    Similarly, we can take $\Yhigh$ to be the string with largest Hamming weight such that $\alg^{(f, \cdot)}$ outputs $t$. By symmetry, we have $\|\Yhigh\|_{\mathsf{hw}} = s + 0.09N$. Since $t$ is a valid solution for both $\Yhigh$ and $\Ylow$, we have that $|t-(s-0.09N)|\leq N/10$ and $|t-(s+0.09N)|\leq N/10$. This implies $|s-t|\leq 0.01N$.
\end{proof}

    Now we are ready to prove the lower bound for classical pseudo-deterministic algorithms.

\begin{proof}[Proof of \autoref{thm: SHComp}, Item \ref{item: psR}]
    Define the sets \[S_1 = \{s: 0.4N\leq s \leq 0.45N\},\; S_2 = \{s: 0.55N\leq s \leq 0.6N\}.\] We know $|S_1|=|S_2| = N/20=\Omega(N)$.
    
    Suppose the secret string $s$ is promised to be in one of $S_1, S_2$ and the goal is to distinguish between these two cases, \cref{lem: hardness of Simons} shows that the randomized query complexity of this task is still $\Omega(\sqrt{N})$.
    
    Now, let $\alg$ be any classical pseudo-deterministic algorithm for \SimonHamming with $o(N)$ query complexity. If given a Simon's function $f \in \mathcal{F}_s$ such that $s$ is in $S_1$ or $S_2$, we can run $\alg^{(f,\Xh)}$ for some $\Xh$ with $\|\Xh\|_{\mathsf{hw}}=N/2$, and take $t$ to be its pseudo-deterministic solution. By \autoref{lem: reduce Simon to SH}, we must have $t \leq 0.46N$ or $t \geq 0.54N$, which allow us to determine whether $s \in S_1$ or $s \in S_2$. Therefore, the query complexity of $\alg$ is $\Omega(\sqrt{N})$.

\end{proof}

\section{Pseudo-deterministic quantum advantage with small overhead}\label{sec:GrovWit}
In this section, we show a large class of quantum query algorithm can be made pseudo-deterministic with small overhead with a \emph{binary search} based approach. We first present a pseudo-deterministic Grover algorithm with no overhead asymptotically in \autoref{sec:GrovWit:Gro}. 
We then generalize the idea to a broader class of algorithms in \autoref{sec:GrovWit:Wit}.

\paragraph{Notation.} For any index set $I \subseteq [N]$, let $X\vert_{I} \in \{0,1\}^{|I|}$ be the subsequence of $X$ restricted to indices in $I$.
For any two tuples of elements $X=(x_{1},\ldots,x_{t}), Y=(y_{1},\ldots,y_{\ell})$, we write $\concat{X}{Y}=(x_{1},\ldots,x_{t},y_1, \ldots, y_{\ell})$ as their concatenation.

\subsection{Warm-up: Grover's algorithm}\label{sec:GrovWit:Gro}

We start with an informal result of \cite{GIPS21} that Grover's algorithm can be made pseudo-deterministic via a \emph{binary search} at the cost of logarithmic overhead in query complexity. We employ a more careful analysis to show that this overhead can be avoided. 
Their result, as stated in \cref{thm:find1_lb}, provides a lower bound of $\Omega(\sqrt{N})$ even when the majority of the input's bits are $1$.

Let \Grover denote the quantum query algorithm which on input $X\in\{0,1\}^N\setminus \{0^N\}$ outputs $i\in [N]$ such that $X_i=1$ that uses $O(\sqrt{N})$ queries.
We formally present the pseudo-deterministic Grover algorithm in \cref{alg:PsD_Grover}.

\begin{algorithm}
	\caption{Pseudo-deterministic Grover search for $N=2^n$ elements} 
    \label{alg:PsD_Grover}
	\begin{algorithmic}[1]
            \State Let $\vL \leftarrow 1, \vR \leftarrow N$
		  \For {$k=0\ldots n-1$}
                \State $\vcur \leftarrow \lfloor (\vL + \vR)/2 \rfloor, I \leftarrow \{\vL, \vL+1, \ldots, \vcur\}$
			    \State Run $\Grover(X\vert_{I})$ for $k+2$ times.
                \If{ at least one call to \Grover returns an index $i \in I$ such that $X_i = 1$:}
                    \State $\vR \leftarrow \vcur$
                \Else
                    \State $\vL \leftarrow \vcur+1$
                \EndIf
		\EndFor
            \State Output $\vL$
	\end{algorithmic} 
\end{algorithm}

\begin{theorem}\label{thm:PsDGrov}
    For $N >1$, There is a pseudo-deterministic quantum algorithm that has query complexity $O(\sqrt{N})$, and on input $X\in \{0,1\}^{N}\backslash \{0^{N}\}$, it outputs the smallest index $i\in [N]$ such that $X_{i}=1$ with probability at least $2/3$.
\end{theorem}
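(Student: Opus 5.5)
We analyze \cref{alg:PsD_Grover} directly and establish two facts: its query complexity is $O(\sqrt{N})$, and it outputs the smallest index $i^*$ with $X_{i^*}=1$ with probability at least $2/3$.

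We use the standard one-sided-error form of \Grover. It only reports an index after checking it with one extra query, so it never returns an index $i$ with $X_i=0$. On an input of length $M$ containing at least one $1$, it finds a marked index with probability at least some constant $c$ (say $c\ge 1/2$) using $O(\sqrt{M})$ queries. If this is not already the form assumed, we get it by boosting and verifying, at constant overhead.

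\textbf{Query complexity.} At iteration $k$ the interval $[\vL,\vR]$ has length about $N/2^k$. The set $I$ is its left half, of size about $N/2^{k+1}$. We run \Grover $k+2$ times on $X\vert_I$, which costs $O\big((k+2)\sqrt{N/2^{k+1}}\big)$ queries. Summing over $k$,
\[
\sum_{k=0}^{n-1} (k+2)\sqrt{N}\,2^{-(k+1)/2} = O(\sqrt{N}),
\]
because $\sum_k (k+2)2^{-k/2}$ converges. This geometric decay is exactly what removes the logarithmic overhead of the naive binary search.

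\textbf{Correctness.} We maintain the invariant $i^*\in[\vL,\vR]$. At iteration $k$ there are two cases.
\begin{itemize}
\item If $X\vert_I$ has no $1$, then by one-sided error no call ever returns a marked index. The algorithm correctly sets $\vL\leftarrow\vcur+1$, and the invariant holds, since $i^*>\vcur$.
\item If $X\vert_I$ contains a $1$, then it must contain $i^*$, because the invariant gives $i^*\ge\vL$. The step errs only if all $k+2$ calls fail, which happens with probability at most $(1-c)^{k+2}\le 2^{-(k+2)}$. When some call succeeds, the algorithm sets $\vR\leftarrow\vcur\ge i^*$, preserving the invariant.
\end{itemize}
By a union bound, the total failure probability is at most $\sum_{k\ge0}2^{-(k+2)}=1/2$. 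This is not yet below $1/3$. The fix is to run each stage $k+3$ times instead of $k+2$, or to use a \Grover with success probability at least $3/4$. Either change makes the total failure at most $\sum_k 4^{-(k+2)}<1/3$ and keeps the complexity $O(\sqrt{N})$. After $n$ halvings the interval is the single point $\{i^*\}$, so the algorithm outputs $i^*$. Because $i^*$ is determined by $X$ alone, the algorithm is pseudo-deterministic.

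\textbf{Main obstacle.} The only delicate point is balancing the error schedule against the query cost. The number of repetitions must grow with $k$ so that the failure probabilities are summable, yet slowly enough that the weighted sum $\sum_k (k+2)2^{-k/2}$ stays bounded. We also need to handle the constant success probability of \Grover when the number of marked elements is unknown, either by the randomized-iteration version of \Grover or by constant-factor amplification. Finally, rounding issues with $\lfloor(\vL+\vR)/2\rfloor$ when $N$ is not a power of two are handled by padding.
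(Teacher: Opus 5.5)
Your proposal is correct and follows the paper's proof essentially verbatim: the same \cref{alg:PsD_Grover}, the same geometric-sum bound $\sum_k (k+2)\sqrt{N/2^{k+1}}=O(\sqrt{N})$, and the same invariant-plus-union-bound argument. The paper takes each \Grover call to succeed with probability $2/3$, so the total failure is $\sum_k 3^{-(k+2)}<1/6$ and no patch is needed. One minor slip in your fix: with $k+3$ repetitions and $c=1/2$ the bound is $\sum_k 2^{-(k+3)}=1/4$, not $\sum_k 4^{-(k+2)}$, though both are below $1/3$.
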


\begin{proof}
In this proof, without loss of generality, we assume $N=2^n$ for some $n\in \mathbb{N}$ as otherwise we can pad the input at a cost of a constant number of queries. We consider the algorithm presented in \cref{alg:PsD_Grover}. First note that the query complexity of \cref{alg:PsD_Grover} can be upper bounded by
    \begin{equation*}
        \sum_{k=0}^{n-1}(k+2)O\left(\sqrt{N/2^{k+1}}\right)=O\left(\sqrt{N}\sum_{k=0}^{n-1}\frac{k+2}{\sqrt{2^{k+1}}}\right)=O(\sqrt{N}).
    \end{equation*}

    Let $i_{0}$ be the earliest index such that $X_{i_{0}}=1$ (this exists by the promise on the input). The value of $L$ at the end of the algorithm will be equal to $i_0$ if and only if $i_{0}\in [L,R-1]$ at every stage of the algorithm. If we assume that one of the iterations of \Grover accepts when $i_0\in I$, then we have that every reassignment of the set $I$ halves the size of $I$ and will still contain $i_0$. Therefore, after $n$ iterations when the algorithm terminates, $\vL=i_0$, $\vR=i_0+1$, and $I=\{i_0\}$.

    The binary search in \cref{alg:PsD_Grover} may fail only if there exists a step $k\in[n-1]$ such that all of the $k+2$ repetitions of Grover's algorithm fail to find an existing ``1'' in $X\vert_{I}$ when $i_0\in I$. At each stage, if it does find a $1$, then $i_{0}$ remains in $I$ at the next step. The probability all $k+2$ iterations of Grover fail at step $k$ is upper bounded by $1/3^{k+2}$. By a union bound over all $k$, we have the error probability upper bounded by $\sum_{k=0}^{n-1}\frac{1}{3^{k+2}} < 1/6$.
\end{proof}

\subsection{\Prunable problems}\label{sec:GrovWit:Wit}

We now generalize the idea in the previous section and present a broader class of problems such that any quantum algorithm for a problem in this class can be made pseudo-deterministic with small overhead. In particular, we consider problems that look for a small subset of the input (i.e. a witness) that has a certain property.

\begin{definition}[$k$-subset finding \cite{CE05subset}]
    Let $k,N\in\mathbb{N}$. The $k$-subset finding problem is defined by a relation $\probR \subseteq \Sigma^{k}\times [N]^{k}$. Given oracle access to an input $X\in \Sigma^{N}$, the task is to find a pair $(Y,I)$ such that $I\subseteq[N]$ is of size $k$ and $X\vert_{I}=Y$ such that $(Y,I) \in \probR$; or output $\bot$ if no such pair exists.
\end{definition}

We now describe a mild condition where we describe a class of problems as \Prunable, which allows an algorithm $\alg$ for a $k$-subset finding problem $\probR$ to be made pseudo-deterministic with small overhead. Informally speaking, a problem $\probR$ is \Prunable if for every $I\subseteq [N]$ there exists an algorithm that finds a subset for $\probR$ when restricted to only having indices in $I$ with query complexity at most that of finding a subset for $\probR$.

\begin{definition}[\Prunable]
    Let the search problem defined by $\probR \subseteq \Sigma^{k}\times [N]^{k}$ be an instance of the $k$-subset finding problem and let $\alg$ be the $\Q(R)$-query quantum algorithm that solves $\probR$. For any index set $I\subseteq [N]$, define a new $k$-subset finding problem $\probR\vert_I \coloneqq \probR \cap (\Sigma^{k}\times I^{k})$.

    We say that $\probR$ is \Prunable if for any index set $I\subseteq [N]$, there exists an algorithm $\alg_I$ that solves $\probR\vert_I$ with $O(\Q(R))$ queries.
\end{definition}

\prunable*

\begin{proof}

Consider the algorithm presented in \cref{alg:PsD_Prune}. This algorithm proceeds by iteratively building up the lexicographically \emph{last} subset that solves the search problem. Suppose the first $\ell -1 $ elements of the subsets have been found, and let $T$ be the set of indices of these $\ell -1 $ elements. The algorithm performs a binary search to find the largest index $i$ such that the problem $R\vert_{I}$ still has a valid solution, where $I = T\cup\{i,i+1 \ldots, N\}$. Then, $X_i$ is chosen to be the $\ell$-th element of the subset.

\begin{algorithm}[ht]
	\caption{Pseudo-deterministic algorithm for a \Prunable problem $R$ with a randomized quantum algorithm $\alg$} \label{alg:PsD_Prune}
 
	\begin{algorithmic}[1]
            \State Let $\mu \gets 10\log(k\log(N))$
            \State Let $Y \gets (), T \gets \emptyset, \LB \gets 1$
		\For {$\ell=1,\ldots, k$}
  
                \State Let $L\leftarrow \LB, R\leftarrow N$
                \While{$L < R$}
                    \State $\vcur \leftarrow \lceil (L+R)/2\rceil$, $I \gets T\cup\{\vcur, \vcur+1,\ldots, N\}$
			     \State Run $\alg_{I}{(X)}$ for $\mu$ times.
                    \If{ any iteration of $\alg_I$ finds a solution of $R\vert_I$}
                        \State $L\leftarrow \vcur$
                    \Else
                         \State $R\leftarrow \vcur - 1$
                    \EndIf
                \EndWhile
                \State $Y\leftarrow Y \circ X_{L}$
                 \State $T\leftarrow \{L\}\cup T$
                \State $\LB\leftarrow L+1$
		\EndFor
            \State Output $(Y, T)$ 
	\end{algorithmic} 
\end{algorithm}

\autoref{alg:PsD_Prune} may fail only if there exists a step $\ell \in [k]$ such that in one of the binary search steps, $R\vert_I$ has a solution, but
all of the $\mu = 10\log(k\log(N))$ repetitions of $\alg_I$ fail to find one. Note that for each $\ell \in [k]$, the binary search takes $O(\log(N))$ iterations of $\alg_I$.
Thus, by a union bound over $k \cdot \log(N)$ iterations, the failure probability is at most

\begin{equation*}
    \frac{k\log(N)}{3^{10\log(k\log(N))}}= \frac{k\log(N)}{(k\log(N))^{10\log(3)}}=o(1).
\end{equation*}

Similarly, by the \Prunable condition, each repetition of $\alg_I$ takes $O(\QC)$ queries. Thus,
the query complexity of \autoref{alg:PsD_Prune} can be upper bounded by $k \cdot \log N \cdot \mu \cdot \QC = \tilde{O}(k \cdot \QC)$.

\end{proof}

From this result we immediately have the following pseudo-deterministic quantum query algorithm for the following problems which are \Prunable .

\begin{definition}
    $k$-\distinctness is the problem on input $X\in [m]^N$ of finding $i_{1},\ldots,i_{k}$ such that $X$ takes the same value on each index. 
\end{definition}

There exists an $\tilde{O}(N^{3/4})$ query quantum algorithm for $k$-\distinctness \cite[Theorem 8]{Bel12}.

\begin{corollary}
    There is an $\tilde{O}(N^{3/4}k)$ query pseudo-deterministic quantum algorithm for $k$-\distinctness.
\end{corollary}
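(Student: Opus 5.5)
The plan is to derive the corollary directly from \cref{thm:prunable}, instantiated with $k$-\distinctness viewed as a $k$-subset finding problem. Concretely, the relation is $\probR = \{(Y,I) : |I| = k,\ Y = (y,\ldots,y) \text{ for some } y\in[m]\}$, so the predicate depends only on the values $X|_I$ being all equal. By \cite[Theorem 8]{Bel12} we have $\Q(\probR) = \tilde{O}(N^{3/4})$ for $k$ a constant. For general $k$ the hidden constant in that bound may depend on $k$; the corollary should then be read as carrying that same dependence. Since \cref{thm:prunable} gives $\psQ(\probR) = \tilde{O}(k\cdot \Q(\probR))$, the stated bound $\tilde{O}(N^{3/4}k)$ follows once we verify that $k$-\distinctness is \Prunable.

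The main step is therefore checking the \Prunable condition. Fix any index set $I\subseteq[N]$. The restricted problem $\probR|_I$ asks for $k$ indices in $I$ on which $X$ agrees. Since membership in $I$ depends only on the index and not on the input, this is exactly $k$-\distinctness on the string $X|_I \in [m]^{|I|}$.

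The algorithm $\alg_I$ runs Belovs' algorithm on $X|_I$, simulating each oracle query to $X|_I$ with one query to $X$ through the fixed bijection $[|I|]\to I$. Its output indices are mapped back through the same bijection, and a $\bot$ output is returned as $\bot$. This uses $\tilde{O}(|I|^{3/4}) \le \tilde{O}(N^{3/4})$ queries, which is $O(\Q(\probR))$ up to the polylogarithmic factors absorbed in $\tilde{O}$.

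One point needs care: the definition of \Prunable asks for $O(\Q(\probR))$ rather than $\tilde{O}$. I would handle this by taking the reference algorithm in the definition to be Belovs' algorithm itself, with its $\tilde{O}(N^{3/4})$ query bound. Its complexity is monotone in the input length, so $\alg_I$ uses at most as many queries as $\alg$ on length $N$.

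The remaining potential obstacle is that the proof of \cref{thm:prunable} repeatedly runs $\alg_I$ and checks whether it "finds a solution of $\probR|_I$". This requires either that the algorithm's outputs be verifiable or that it have one-sided error. Here verification is cheap: given candidate indices $i_1,\ldots,i_k$, we query the $k$ values and check that they are equal, at a cost of $k$ extra queries per run. The total overhead is $k\log N\cdot\mu\cdot k$, which is $\tilde{O}(k^2)$ and is dominated by the main term whenever $k = O(N^{3/4})$. Combining these pieces yields an $\tilde{O}(N^{3/4}k)$-query pseudo-deterministic quantum algorithm for $k$-\distinctness.
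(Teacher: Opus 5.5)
Your proposal is correct and follows the same route as the paper, which obtains the corollary directly from \cref{thm:prunable} together with Belovs' $\tilde{O}(N^{3/4})$ algorithm. The \Prunable property is used exactly as you argue: $k$-\distinctness restricted to an index set $I$ is just $k$-\distinctness on $X|_I$. Your extra care fills in details the paper leaves implicit rather than changing the argument: you treat $\Q(\probR)$ as the cost bound of the reference algorithm, and you verify each candidate with $k$ queries so the binary search never acts on a false positive.
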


\begin{definition}
    \Triangle is the problem on a graph input over $n$ vertices of finding a complete subgraph over $3$ vertices.
\end{definition}

There exists an $\tilde{O}(n^{5/4})$ query quantum algorithm for \Triangle in the adjacency matrix model \cite[Theorem 1.1]{Gall14}.

\begin{corollary}
    There is an $\tilde{O}(n^{5/4})$ query pseudo-deterministic quantum algorithm for \Triangle.
\end{corollary}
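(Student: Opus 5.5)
The plan is to obtain this as a direct application of \cref{thm:prunable} with $k=3$, so the work reduces to two checks: that \Triangle is a $k$-subset finding problem, and that it is \Prunable with respect to the $\tilde{O}(n^{5/4})$-query algorithm of \cite[Theorem 1.1]{Gall14}.

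\textbf{Casting \Triangle as subset finding.} The input is the adjacency matrix, viewed as a string $X\in\{0,1\}^{N}$ with $N=\binom{n}{2}$, indexed by vertex pairs. A triangle is a set of three edge-indices $\{uv,vw,uw\}$ whose values are all $1$. This is a $3$-subset finding problem: the relation $\probR$ accepts $(Y,I)$ exactly when $Y=111$ and the three pairs in $I$ form the edges of a triangle on three distinct vertices. Since $k=3$ is constant, \cref{thm:prunable} would give $\psQ(\Triangle)=\tilde{O}(3\cdot n^{5/4})=\tilde{O}(n^{5/4})$, once prunability is established.

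\textbf{Prunability (main step).} For an arbitrary index set $I\subseteq[N]$, I need an algorithm $\alg_I$ that finds a triangle all of whose edges lie in $I$, using $O(\Q(\Triangle))$ queries. I would reduce $\probR\vert_I$ to an ordinary \Triangle instance $X'$ on the same $n$ vertices. Define $X'_e=X_e$ for $e\in I$ and $X'_e=0$ for $e\notin I$. Because $I$ is known to the algorithm, each query to $X'$ is simulated with at most one query to $X$. In the quantum setting this is done by a controlled oracle call that is skipped when $e\notin I$.

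The triangles of $X'$ are exactly the triangles of $X$ whose edges all lie in $I$. So running the Le Gall algorithm on $X'$ solves $\probR\vert_I$ with $\tilde{O}(n^{5/4})$ queries. It also correctly reports $\bot$ when no such triangle exists, which is the behaviour the binary search in \cref{alg:PsD_Prune} needs.

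\textbf{Expected obstacle.} There is one point to watch. \Prunable is stated with $O(\Q(\probR))$ while the available bound is only $\tilde{O}$, so I would note that the proof of \cref{thm:prunable} tolerates polylogarithmic slack in the cost of $\alg_I$. This is harmless because that proof already pays polylogarithmic factors.
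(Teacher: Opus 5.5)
Your proposal is correct and takes the paper's route: the paper applies \cref{thm:prunable} with $k=3$ to Le Gall's $\tilde{O}(n^{5/4})$ algorithm and simply asserts that \Triangle is \Prunable, and your reduction (set every edge outside $I$ to $0$) supplies exactly the prunability check the paper leaves implicit. The $\tilde{O}$-versus-$O$ issue you flag is moot: applying the same reduction to an optimal triangle algorithm gives $\alg_I$ with $O(\Q(\Triangle))$ queries, which is what the definition of \Prunable requires.
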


\begin{definition}
    $k$-\ksum is the problem on query access to an input of the form $X\in [m]^N$ of determining if there is some subset of size $k$ summing to some fixed integer $t$.
\end{definition}

There exists a general $O(N^{k/(k+1)})$ query quantum algorithm for $k$-subset finding in Theorem 1 of \cite{CE05subset}. $k$-\ksum is a special case implying the following.

\begin{corollary}
    There is an $\tilde{O}(N^{k/(k+1)}\cdot k)$ query pseudo-deterministic quantum algorithm for the $k$-\ksum problem.
\end{corollary}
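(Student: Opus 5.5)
The corollary follows from \cref{thm:prunable}, so the plan is to check its hypotheses for $k$-\ksum and then substitute the known quantum bound. Concretely: (i) verify that $k$-\ksum is a $k$-subset finding problem; (ii) verify that it is \Prunable{}; (iii) plug in $\Q(k\text{-}\ksum)=O(N^{k/(k+1)})$ from Theorem 1 of \cite{CE05subset} to get $\psQ = \tilde O(k\cdot N^{k/(k+1)})$.

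For step (i), the witness is a set $I$ of $k$ indices, and the relation $\probR\subseteq \Sigma^k\times[N]^k$ consists of the pairs $(Y,I)$ with $\sum_j Y_j = t$. Whether a pair lies in $\probR$ depends only on the queried values $X|_I=Y$. This is exactly the form in the definition of $k$-subset finding. The search version (output the pair, or $\bot$ if none exists) is what the Childs--Eisenberg algorithm solves with bounded error, using $O(N^{k/(k+1)})$ queries for constant $k$.

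Step (ii) is the main point to check, namely that restricting to a domain $I\subseteq[N]$ does not increase the cost. The natural idea is to run the \cite{CE05subset} algorithm directly on the sub-instance $X|_I$ of length $|I|\le N$, mapping each index of $X|_I$ to its position in $I$. Every query to $X|_I$ is one query to $X$. The predicate ``$k$ elements sum to $t$'' is defined pointwise on values, so it is unchanged by this relabeling. Hence the solutions of the restricted instance are exactly the pairs of $\probR\vert_I$, and the cost is $O(|I|^{k/(k+1)})=O(N^{k/(k+1)})$, where monotonicity in input length holds because the bound is increasing in $N$.

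There is one subtlety. \cref{alg:PsD_Prune} requires that $\alg_I$ either output a solution of $\probR\vert_I$ or $\bot$ with bounded error. It also checks ``finds a solution'' by classical verification, which costs $k$ extra queries and is negligible. So one-sided behavior suffices: any claimed solution can be verified, and false positives are discarded. Once (ii) is established, \cref{thm:prunable} gives the bound $\tilde O(k\cdot N^{k/(k+1)})$, which is the stated corollary.
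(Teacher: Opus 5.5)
Your proposal is correct and follows the paper's argument: the paper simply invokes \cref{thm:prunable} with the $O(N^{k/(k+1)})$ Childs--Eisenberg bound \cite{CE05subset}, and asserts without further justification that $k$-\ksum is \Prunable. Your relabeling argument (running that algorithm on $X\vert_I$, whose length is at most $N$) and your remark that a claimed solution can be checked with $k$ extra queries, so the algorithm errs only by missing solutions, supply the details the paper leaves implicit.
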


\begin{definition}
    \GraphCollision is the problem, given access to an undirected graph $G$ over $n$ vertices and query access to a set of boolean values $\{x_v\vert v\in G\}$, of determining whether there are two vertices $v,v'$ connected by an edge for which $x_v=x_{v'}=1$.
\end{definition}

There exists an $O(t^{1/6}\cdot\sqrt{n} )$ query quantum algorithm for the \GraphCollision over graphs of tree-width $t$ by Theorem 1 of \cite{GraphColl}.

\begin{corollary}
    There is an $\tilde{O}(t^{1/6}\cdot\sqrt{n})$ query pseudo-deterministic quantum algorithm for the \GraphCollision problem when the tree-width is $t$.
\end{corollary}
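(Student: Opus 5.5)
The corollary follows from \cref{thm:prunable} once we check that \GraphCollision, viewed as a $2$-subset finding problem, is \Prunable, and then plug in the $O(t^{1/6}\sqrt{n})$ algorithm of \cite{GraphColl}.

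\textbf{Step 1: the $2$-subset finding view.} The input is the string $(x_v)_{v\in G}\in\{0,1\}^n$, with $G$ fixed and known to the algorithm. The relation $\probR$ consists of pairs $((1,1),\{v,v'\})$ with $\{v,v'\}$ an edge of $G$, so $k=2$. The search version (output such an edge) has quantum query complexity $O(t^{1/6}\sqrt{n})$. If \cite{GraphColl} only gives a decision algorithm, we get the search version by a standard reduction: run the decider on halves of the edge/vertex set and recurse, or use the usual decision-to-search overhead. This costs at most polylogarithmic factors, which the $\tilde{O}$ absorbs.

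\textbf{Step 2: prunability, the main point to check.} For an index set $I\subseteq[n]$, the restricted problem $\probR\vert_I$ asks for an edge of $G$ with both endpoints in $I$ and both bits equal to $1$. This is exactly \GraphCollision on the induced subgraph $G[I]$. Tree-width does not increase under taking subgraphs, since we can restrict the tree decomposition to $I$. Hence $G[I]$ has tree-width at most $t$ and at most $n$ vertices. The algorithm of \cite{GraphColl} run on $G[I]$ therefore uses $O(t^{1/6}\sqrt{|I|})=O(t^{1/6}\sqrt{n})$ queries, which is $O(\Q(\probR))$ as prunability requires.

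A subtle point is that $\Q(\probR)$ is the true complexity, which might be smaller than the upper bound from \cite{GraphColl}. To handle this, we apply \cref{thm:prunable} with the given algorithm's cost $T=O(t^{1/6}\sqrt{n})$ in place of $\Q(\probR)$. The proof of \cref{thm:prunable} only uses that each $\alg_I$ costs $O(T)$, so it yields $\tilde O(kT)$. This step is where care is needed; everything else is bookkeeping.

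\textbf{Step 3: conclude.} With $k=2$, \cref{alg:PsD_Prune} outputs the lexicographically last colliding edge. It uses $O(k\log n\cdot\log(k\log n)\cdot T)=\tilde{O}(t^{1/6}\sqrt{n})$ queries, with failure probability $o(1)$ by the union bound already established. The algorithm outputs $\bot$ when no collision exists, since every binary-search test fails and we can add a final check. This gives the claimed pseudo-deterministic algorithm.
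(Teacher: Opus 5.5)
Your proposal is correct and matches the paper's argument. You treat \GraphCollision{} as a $2$-subset finding problem and observe, as the paper's remark does, that $\probR\vert_I$ is graph collision on the induced subgraph $G[I]$, whose tree-width is at most $t$, so the algorithm of \cite{GraphColl} witnesses prunability; then you apply \cref{thm:prunable}. Your extra points (the decision-to-search conversion, running the theorem with the algorithm's cost $T$ in place of $\Q(\probR)$, and the final check for the $\bot$ case) are sound bookkeeping that the paper leaves implicit.
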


\begin{remark}
    Note that the last result follows as the treewidth of any subgraph is at most that of the original graph making this problem \Prunable.
\end{remark}

\paragraph{Limitation on problems with global structure.}

An example of a problem that cannot be made pseudo-deterministic via this theorem is the \findone problem. In particular, the promise that the majority of input bits are one does not hold anymore during the binary search. Informally speaking, if an efficient randomized / quantum query algorithm $\alg$ heavily exploits the \emph{global structure} of the problem $\probR$, then our binary search approach may fail, and likely, $\probR$ does not have any efficient pseudo-deterministic algorithm whatsoever.

We now present another example of a quantum algorithm that cannot be made pseudo-deterministic due to the ``global structure''.

\begin{definition}
    Fix a graph $G=(V,E)$ and give the oracle access to a function $f:V\rightarrow \{0,1\}$ such that a vertex $v \in V$ is marked if $f(v)=1$. The goal is to output a marked vertex.
    
    The hitting time of $f$ is defined to be the expected length of a random walk that terminates on hitting a vertex marked by $f$.
\end{definition}

The quantum walk algorithm achieves a quadratic speed-up compared to a classical random walk.

\begin{theorem}[\cite{Szeg04, AGJK20walk}]
    For quantum algorithms, finding a marked vertex on a graph with hitting time $HT$ requires $\tilde{O}(\sqrt{HT})$ queries.
\end{theorem}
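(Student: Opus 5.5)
The statement is imported from \cite{Szeg04, AGJK20walk}, so the plan is to reconstruct the standard quantum-walk argument. I would treat the underlying classical walk as a reversible Markov chain $P$ on $V$ with stationary distribution $\pi$, started from $\pi$, with hitting time $HT$ the expected time to reach the marked set $M=f^{-1}(1)$. The proof has three parts:
\begin{itemize}
\item build Szegedy's quantized walk for an interpolated chain;
\item relate its spectrum to $HT$;
\item use phase estimation to move the initial state onto a state concentrated on $M$, then measure.
\end{itemize}

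\textbf{Step 1: the walk operators.} Let $P'$ be the absorbing chain, which agrees with $P$ on unmarked vertices and has self-loops on marked ones. For $s\in[0,1)$ define the interpolated chain $P(s)=(1-s)P+sP'$. Its stationary distribution $\pi_s$ puts weight $\pi(M)/(\pi(M)+(1-s)(1-\pi(M)))$ on $M$. Szegedy's construction gives a unitary $W(s)$ acting on $\mathbb{C}^V\otimes\mathbb{C}^V$, and one application costs $O(1)$ queries to $f$. The spectral facts I would use are:
\begin{itemize}
\item the phases $\pm\theta_j$ of $W(s)$ on the relevant invariant subspace satisfy $\cos\theta_j=\lambda_j$, where the $\lambda_j$ are the eigenvalues of the discriminant $D(s)=\sqrt{P(s)\circ P(s)^{T}}$;
\item consequently the phase gap is $\Theta(\sqrt{\delta})$ whenever the spectral gap is $\delta$ (the quadratic speed-up);
\item the $1$-eigenvector of $W(s)$ is $\ket{\pi_s}$.
\end{itemize}

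\textbf{Step 2: the spectrum of $P(s)$ versus $HT$ (main obstacle).} The hard part is showing that $\ket{\pi}=\ket{\pi_0}$ has large overlap with $\ket{\pi_s}$ for a suitable $s$, while the remaining eigenvectors that $\ket{\pi}$ overlaps with have phases bounded away from $0$ in an averaged sense. Concretely, I would aim to prove
\[
\sum_{j\ge 1} \frac{|\langle v_j(s)|\pi\rangle|^2}{1-\lambda_j(s)} \lesssim HT .
\]
Following \cite{AGJK20walk}, I would write $HT$ through the fundamental matrix of the chain restricted to $V\setminus M$:
\[
HT=\sum_{j}\frac{|\langle v_j|U\rangle|^2}{1-\lambda_j},
\]
where the $v_j$ are the eigenvectors of the absorbing discriminant and $\ket{U}$ is the normalized restriction of $\ket{\pi}$ to the unmarked vertices. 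I would then show that interpolating with $s$ chosen so that $\pi_s(M)$ is constant (that is, $1-s\approx \pi(M)$) preserves this bound up to constants.

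\textbf{Step 3: the algorithm.} Since $\pi(M)$ is unknown, I would try $s$ on a geometric grid $1-s\in\{2^{-i}\}$, which costs a logarithmic factor and explains the $\tilde O$. For each $s$:
\begin{enumerate}
\item prepare $\ket{\pi}$, which is free of queries because $\pi$ does not depend on $f$;
\item run phase estimation of $W(s)$ to precision $\Theta(1/\sqrt{HT})$, which uses $O(\sqrt{HT})$ applications of $W(s)$;
\item condition on estimating phase $0$, which by the Step 2 bound happens with constant probability for the correct $s$; this projects onto $\ket{\pi_s}$;
\item measure the first register, which lands in $M$ with constant probability since $\pi_s(M)$ is constant;
\item check the result with one query to $f$.
\end{enumerate}
An alternative to the grid is quantum fast-forwarding of $P(s)$ for $\sqrt{HT}$ steps. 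Repeating each trial $O(1)$ times and running over all $O(\log)$ grid values yields a marked vertex with query complexity $\tilde O(\sqrt{HT})$.
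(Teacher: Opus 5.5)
The paper only cites this result, so the real question is whether your reconstruction holds up by itself. It does not: the inequality you aim for in Step 2 is false once $|M|>1$.

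Write $p_M=\pi(M)$, $U=V\setminus M$ and $Z=1-p_M+p_M/(1-s)$. Three facts combine:
\begin{itemize}
\item Since $\pi_s(x)P(s)_{xy}=\pi(x)P_{xy}/Z$ for all $x\neq y$, the interpolated chain has Dirichlet form $\mathcal{E}_P/Z$; it only adds holding at marked vertices.
\item The component of $\ket{\pi}$ orthogonal to $\ket{\pi_s}$ is proportional to $\sqrt{\pi_s}\,(\mathbf{1}_M-\pi_s(M))$.
\item For $\pi_s$-mean-zero $f$, the variational formula $\langle f,(I-P(s))^{-1}f\rangle_{\pi_s}=\sup_h\{2\langle f,h\rangle_{\pi_s}-\mathcal{E}_{P(s)}(h,h)\}$ holds.
\end{itemize}
Together they give, exactly,
\[
\sum_{j\ge1}\frac{|\langle v_j(s)|\pi\rangle|^2}{1-\lambda_j(s)}=(1-\sqrt{1-s})^2(1-p_M)^2\,\pi_s(M)^2\,G,\qquad G=\sup_h\bigl\{2(\bar h_M-\bar h_U)-\mathcal{E}_P(h,h)\bigr\},
\]
where $\bar h_M,\bar h_U$ are the $\pi$-averages of $h$ over $M$ and $U$. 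The quantity $(1-p_M)G$ is exactly the extended hitting time $HT^+$ of Krovi, Magniez, Ozols and Roland, i.e., the $s\to1$ limit of the same sum taken from $\ket{U}$.

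Restricting the supremum to $h$ constant on $M$ gives exactly $HT/(1-p_M)^2$. Hence $HT^+\ge HT/(1-p_M)$, with equality when $|M|=1$. But the supremum lets $h$ vary across $M$, and then $HT^+/HT$ is unbounded. For instance, give one marked vertex $m_1$ most of $\pi(M)$ through a self-loop, attach it by a single edge of conductance $c$, and make another marked vertex easy to hit. Then $HT$ stays bounded as $c\to 0$, but $h=a\mathbf{1}_{m_1}$ already gives $G\ge 1/(4c)$.

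So at $1-s\approx p_M$ your sum is about $HT^+/4$, not $O(HT)$. Your Step 3 claim that post-selecting on phase $0$ at precision $1/\sqrt{HT}$ leaves essentially $\ket{\pi_s}$ therefore has no support; in this example $\ket{\pi}$ has constant weight on the slow bottleneck eigenvector. The grid over $s$ does not rescue this analysis. For $x=\pi_s(M)\gg p_M$ the sum is about $x^2HT^+$, so resolving $\ket{\pi_s}$ well enough to see its marked weight $x$ needs precision about $1/\sqrt{x\,HT^+}$. The $1/\sqrt{x}$ rounds of amplification then bring the cost back to $\sqrt{HT^+}$.

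What you have reconstructed is the Krovi--Magniez--Ozols--Roland algorithm, with cost $\tilde O(\sqrt{HT^+})$. It gives the quadratic speed-up only for a single marked vertex; \cite{Szeg04} by itself gives detection, not finding. Removing $HT^+$ is exactly the contribution of \cite{AGJK20walk}, and it needs an idea your plan lacks. Instead of isolating the $1$-eigenvector of $W(s)$ and bounding a spectral sum, they use quantum fast-forwarding (Apers--Sarlette) to prepare $D(s)^t\ket{\pi}$ with $\tilde O(\sqrt{t})$ walk steps, for $t=O(HT)$ and logarithmically many choices of $(s,t)$. They then show directly that for some such choice this state has substantial weight on $M$. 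That argument is driven by the classical hitting behaviour of the walk, so it scales with $HT$ rather than $HT^+$, and it never needs the post-selected state to be close to $\ket{\pi_s}$. In your write-up fast-forwarding appears only as an optional substitute for the grid over $s$. In fact it is the mechanism that replaces your Step 2, and the parameter search is still needed.
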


This, however, fails to hold pseudo-deterministically since the promise that is true on the whole graph (small hitting time) does not follow when you take a sub-graph. 
Consider the graph $G$ being the complete graph on $n$ vertices, and a majority of the set of vertices are marked. This is reduced to the $\sqrt{N}$ lower bound for \findone~\cite{GIPS21}.

We note that although the quantum walk algorithm for finding a marked vertex cannot be made pseudo-deterministic, algorithms that use this as a subroutine such as $k$-distinctness \cite{Bel12} can be made pseudo-deterministic as the problem is \Prunable.

\section{Quantum search problems and continuous domains}\label{sec:quantInputs}

We extend the notion of a search problem to quantum inputs. To this end, consider a search problem $\probR\subseteq\mathcal{X}\times \mathcal{Y}$. Here, $\mathcal{X}$ is a collection of pure states,
\begin{equation*}
    \mathcal{X}\subseteq \bigcup_{d\in \mathbb{N}}\mathcal{S}(\mathbb{C}^d),
\end{equation*} 
and $\mathcal{S}(\mathbb{C}^{d})$ denotes the set of pure states of dimension $d\in \mathbb{N}$ expressed as a normalised vector $\ket{\psi}\in \mathbb{C}^d$ such that $\braket{\psi|\psi}=1$.
In this setting, we consider algorithms that receive the dimension $d\in \mathbb{N}$ explicitly and oracle access to copies of the unknown quantum state $\ket{\psi}\in \mathcal{S}(\mathbb{C}^d)$. In this model, a quantum algorithm is a POVM acting on $\ket{\psi}\bra{\psi}^{\otimes t}$ and the value of $t$ is the query complexity of the algorithm (the number of copies of the unknown state $\ket{\psi}$). 

Formally, a POVM is expressed as a set of positive semi-definite matrices $\{M_{y}\}_{y\in \mathcal{Y}}$ that act on $(\mathbb{C}^{d})^{\otimes t}$ and $\sum_{y\in \mathcal{Y}}M_y=I$ where $I$ is the identity matrix on $(\mathbb{C}^{d})^{\otimes t}$. The probability of seeing an output $y\in\mathcal{Y}$ on input $\ket{\psi}\in \mathcal{S}(\mathbb{C}^d)$ is $\mathrm{Tr}(M_y\ket{\psi}\bra{\psi}^{\otimes t})$.

Here, we consider pseudo-determinism to be the same notion as in \cref{def:PsD_alg}.

\begin{remark}
    In this section, we focus on search problems mapping quantum pure states to classical strings; however, one can also consider inputs in the form of mixed states, unitaries, or channels.  

    When the output of a quantum algorithm is itself a quantum state, the notion of pseudo-determinism becomes problematic. In most such tasks, approximate outputs are allowed (e.g. tomography), but repeated runs of the algorithm will yield different approximations for a fixed input. This suggests that there is no immediately apparent definition for pseudo-determinism for these problems that maintains its essential properties.
\end{remark}

Here, we consider search problems whose inputs are quantum states and prove that the number of pseudo-deterministic solutions is upper bounded by the number of connected components of the set of valid inputs. For this, we require the following definition.

\begin{definition}
    Let $d\in\mathbb{N}$. Consider a quantum query algorithm $\alg$ whose input is a quantum state $\ket{\phi}\in\mathcal{S}(\mathbb{C}^d)$ and whose output space is $\mathcal{Y}$. 
    
    We define $g_{\alg}:\mathcal{S}(\mathbb{C}^d)\rightarrow \Delta_{\mathcal{Y}}$ to be the map from pure states to their output distributions under this algorithm. Here, $\Delta_{\mathcal{Y}}$ is the probability simplex over $\mathcal{Y}$. Equivalently,
    \begin{equation*}
        \forall \ket{\phi}\in\mathcal{S}(\mathbb{C}^{d}),\forall y\in \mathcal{Y}: g_{\alg}\left(\ket{\phi}\right)_{y} \coloneqq \mathbb{P}\left[\alg^{\ket{\phi}}=y\right].
    \end{equation*}
\end{definition}

\begin{lemma}\label{lem:AlgContMap}
    Let $d\in\mathbb{N}$. For any algorithm $\alg$ whose inputs are $d$-dimensional pure states, $g_{\alg}$ is a continuous map.
\end{lemma}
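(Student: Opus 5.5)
The plan is to write each output probability as an explicit polynomial function of the amplitudes of the input state, and then use that polynomials are continuous. Fix the algorithm $\alg$. It uses some number $t$ of copies of the input and is described by a POVM $\{M_y\}_{y\in\mathcal{Y}}$ on $(\mathbb{C}^d)^{\otimes t}$. Any intermediate unitaries, ancillae, or adaptive measurements can first be absorbed into a single POVM of this form by the usual Naimark/deferred-measurement argument. This matches the model fixed just before the statement.

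For each $y$ we then have
\[
g_{\alg}(\ket{\phi})_y=\mathrm{Tr}\bigl(M_y(\ket{\phi}\bra{\phi})^{\otimes t}\bigr)=\bra{\phi}^{\otimes t}M_y\ket{\phi}^{\otimes t}.
\]
Write $\ket{\phi}=\sum_i \phi_i\ket{i}$. The right-hand side is a finite sum of terms of the form $(M_y)_{I,J}\,\overline{\phi_{i_1}}\cdots\overline{\phi_{i_t}}\,\phi_{j_1}\cdots\phi_{j_t}$. It is therefore a polynomial in the real and imaginary parts of the coordinates of $\ket{\phi}$. Polynomials are continuous, so each coordinate map $\ket{\phi}\mapsto g_{\alg}(\ket{\phi})_y$ is continuous on $\mathcal{S}(\mathbb{C}^d)$ with the subspace topology from $\mathbb{C}^d$. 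The same conclusion holds on the quotient by global phase, since the expression is phase invariant.

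The remaining step, and the only real obstacle, is to get continuity of the map into $\Delta_{\mathcal{Y}}$ itself, because $\mathcal{Y}$ may be infinite. I would take $\Delta_{\mathcal{Y}}$ with the total variation ($\ell_1$) metric and prove a Lipschitz bound directly instead of arguing coordinatewise. Set $\rho=(\ket{\phi}\bra{\phi})^{\otimes t}$ and $\sigma=(\ket{\psi}\bra{\psi})^{\otimes t}$. Then
\[
\sum_y|\mathrm{Tr}(M_y(\rho-\sigma))|\le\|\rho-\sigma\|_1 .
\]
This holds because a POVM measurement is a quantum-to-classical channel, and trace distance contracts under channels. Next, telescoping over the $t$ tensor factors gives
\[
\|\rho-\sigma\|_1\le t\,\bigl\|\ket{\phi}\bra{\phi}-\ket{\psi}\bra{\psi}\bigr\|_1\le 2t\,\|\ket{\phi}-\ket{\psi}\|_2 .
\]
Hence $g_{\alg}$ is $2t$-Lipschitz, and in particular continuous. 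This argument covers both finite and infinite $\mathcal{Y}$. For finite $\mathcal{Y}$ the polynomial argument alone already suffices.
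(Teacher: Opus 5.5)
Your polynomial-in-amplitudes argument is the paper's proof: it takes $\mathcal{Y}$ finite ($s=|\mathcal{Y}|$), writes each output probability as $\mathrm{Tr}(M_y\ket{\phi}\bra{\phi}^{\otimes t})$, and concludes continuity coordinatewise. Your additional trace-distance Lipschitz bound is correct and stronger, since it covers infinite $\mathcal{Y}$ in total variation, but the paper does not need it: its corollary has finite output set $[s]$ and uses only continuity of each coordinate $g_{\alg}(\cdot)_j$.
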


\begin{proof}
    Let $\mathcal{Y}$ be the output space of algorithm $\alg$, where $s = |\mathcal{Y}|$, and let $\alg$ have query complexity $t>0$. $\alg$ is described by an $s$-outcome POVM whose input is $t$ copies of a quantum state $\ket{\phi}\bra{\phi}$. The POVM is expressed as $s$ positive semi-definite matrices $\{M_1,\ldots, M_s\}$ for which $\sum_{j=1}^{s}M_j=I$. The output of $g_{\alg}$ is as follows:
\begin{align*}
    \forall y\in \mathcal{Y}:g_{\alg}\left(\ket{\phi}\right)_{y}\coloneqq\mathbb{P}[\alg^{\ket{\phi}}=y]&=\mathrm{Tr}\left(M_y\ket{\phi}\bra{\phi}^{\otimes t}\right).
\end{align*}
The map $\ket{\phi}\mapsto\ket{\phi}\bra{\phi}^{\otimes t}$ is polynomial in the components of $\ket{\phi}$, and hence continuous; Multiplication by a fixed matrix and taking the trace are continuous as well. This implies that for each $ y\in \mathcal{Y}$, the map $\ket{\phi}\mapsto g_{\alg}(\ket{\phi})_y$ must be continuous. Hence $g_{\alg}$ is continuous.
\end{proof}

\begin{corollary}
    Let $s,d\in\mathbb{N}$. Suppose $\alg$ is a pseudo-deterministic algorithm whose input is an element of a set $S\subseteq \mathcal{S}(\mathbb{C}^d)$, and whose output is an element of $[s]$. Let $f_{\alg}:S\rightarrow [s]$ be the function that maps an input state to its pseudo-deterministic solution according to $\alg$ (which exists and is unique by the definition of pseudo-determinism). Then the number of possible outputs, $|\mathrm{Im}(f_{\alg})|$, is at most the number of connected components of $S$.
    
\end{corollary}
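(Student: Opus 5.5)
The plan is to show that $f_{\alg}$ is locally constant on $S$ (with respect to the subspace topology inherited from $\mathcal{S}(\mathbb{C}^d)$). It then follows that $f_{\alg}$ is constant on each connected component of $S$, so the number of distinct values it takes is at most the number of components.

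\textbf{Step 1: build a continuous separating function.} For each $y\in[s]$, restrict the continuous map of \cref{lem:AlgContMap} to $S$. This gives the continuous function $h_y:S\to[0,1]$ defined by $h_y(\ket{\phi})=g_{\alg}(\ket{\phi})_y$. By the definition of pseudo-determinism, every $\ket{\phi}\in S$ satisfies $h_{f_{\alg}(\ket{\phi})}(\ket{\phi})\geq 2/3$. Since the probabilities of all outcomes sum to one, $h_y(\ket{\phi})\leq 1/3$ for every $y\neq f_{\alg}(\ket{\phi})$. Consequently, for each $y$ we have
\begin{equation*}
f_{\alg}^{-1}(y)=\{\ket{\phi}\in S: h_y(\ket{\phi})>1/2\}=\{\ket{\phi}\in S: h_y(\ket{\phi})\geq 1/2\}.
\end{equation*}
The middle set is open in $S$ because $h_y$ is continuous and $(1/2,1]$ is open. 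The last set is closed in $S$ because $[1/2,1]$ is closed. So each fibre $f_{\alg}^{-1}(y)$ is clopen in $S$.

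\textbf{Step 2: each component meets only one fibre.} Let $C$ be a connected component of $S$. The sets $f_{\alg}^{-1}(y)\cap C$, for $y\in[s]$, are clopen in $C$, pairwise disjoint, and together cover $C$. Since $C$ is connected, exactly one of them is nonempty, so $f_{\alg}$ is constant on $C$. The connected components partition $S$, so the map from components to $\mathrm{Im}(f_{\alg})$ sending $C$ to the constant value of $f_{\alg}$ on $C$ is surjective. This gives the stated bound.

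The only step that needs care is the gap argument in Step 1. It uses the $2/3$ threshold: any constant strictly greater than $1/2$ would work, because then the canonical outcome's probability is bounded away from every other outcome's probability. The remaining checks are bookkeeping. One is that continuity of $g_{\alg}$ on $\mathcal{S}(\mathbb{C}^d)$ implies continuity of its restriction to $S$. The other is that the argument never needs connectedness of the whole space, only of each component.
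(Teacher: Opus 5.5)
Your proof is correct and follows essentially the same route as the paper. Both use the continuity of the outcome probabilities from \cref{lem:AlgContMap}, the gap between $[0,1/3]$ and $[2/3,1]$ forced by pseudo-determinism, and connectedness of each component; the paper argues that the continuous image of a connected component under $g_{\alg}(\cdot)_j$ must lie in one of the two intervals, while you phrase the same fact as the fibres of $f_{\alg}$ being clopen via the threshold $1/2$, which also makes the component bookkeeping slightly more explicit.
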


\begin{proof}

    It suffices to prove that for any connected component of $S$, the number of pseudo-deterministic solutions when restricted to inputs in this component is $1$. Therefore, without loss of generality, we assume $S$ is connected.

    Since the probability of any given output of a pseudo-deterministic function is either less than $1/3$ or greater than $2/3$, it follows that $\forall j\in[s]:g_{\alg}(\cdot)_j$ maps a connected space $S$ to a subset of $[0,1/3]\cup[2/3,1]$. By \cref{lem:AlgContMap}, this function is continuous and therefore $\mathrm{Im}(g_{\alg}(\cdot)_j)$ is a connected space. 
    
    Since $[0,1/3]$ and $[2/3,1]$ are the only connected subsets of $[0,1/3]\cup[2/3,1]$, for each $j\in[s]$, either $g_{\alg}(\cdot)_j$ maps all of $S$ into $[0,1/3]$ (in which case $j$ is never a pseudo-deterministic solution) or maps all of $S$ into $[2/3,1]$ (in which case $j$ is always the pseudo-deterministic output). Since pseudo-determinism requires $\alg$ to output a unique solution with probability at least $2/3$, exactly one value of $j\in[s]$ is the pseudo-deterministic solution over the entire set of inputs, $S$.

\end{proof}

This implies that any pseudo-deterministic algorithm whose domain is the entire set of quantum states of a fixed dimension can only output a constant value. More generally, by the same topological argument, there are no non-trivial pseudo-deterministic algorithms for quantum states, unitaries, channels, or classical distributions whose domain is the entire set of such objects.

This rules out non-trivial algorithms on connected domains, showing that disconnectedness is a necessary condition for non-triviality. To demonstrate this point, we now introduce a quantum analogue of the \findone problem, \UnifSuppFind,  whose input domain is disconnected.

\begin{definition}[\UnifSuppFind]
    On input $\ket{x}\in \mathcal{S}(\mathbb{C}^{d})$, a quantum state which is promised to be a uniform superposition over some subset of the basis states, the task is to output an index $a \in [d]$ such that $\braket{a|x}\neq 0$. 

    Equivalently, each input $\ket{x}$ has the form $\ket{x}=\frac{1}{\sqrt{|J|}}\sum_{j\in J}\ket{j}$ for some $J\subseteq[d]$, where $J\neq \emptyset$, and the output is some $j_0\in J$.
\end{definition}

Note that the set of these superpositions forms a discrete, and therefore disconnected, subset of $\mathcal{S}(\mathbb{C}^d)$.

\begin{theorem}\label{thm: suppfind}
    The query complexity of \UnifSuppFind is $\Theta(d)$.
\end{theorem}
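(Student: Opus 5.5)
For the upper bound, measure each of $t = 2d$ copies of $\ket{x}$ in the computational basis and output the smallest index observed. The canonical solution is $f(\ket{x}) = \min J$.

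Each measurement returns a uniformly random element of $J$. So a single copy hits $\min J$ with probability $1/|J| \geq 1/d$. The probability that $\min J$ never appears in $2d$ copies is at most
\[
(1-1/d)^{2d} \leq e^{-2} < 1/3 .
\]
Whenever $\min J$ is observed, the output is exactly $\min J$, since no smaller index lies in the support. Every observed index is in $J$, so the output is always a valid solution. Hence this algorithm is pseudo-deterministic with $O(d)$ queries.

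For the lower bound, I would use a two-point distinguishing argument in the spirit of the sensitivity argument in the proof of \cref{thm: hardness of hamming}, but measured by fidelity of copies instead of queried positions. Assume $d \geq 2$ and let $\alg$ be a pseudo-deterministic algorithm using $t$ copies.
\begin{enumerate}
\item Let $a$ be the canonical output of $\alg$ on the uniform superposition $\ket{x}$ over $J = [d]$.
\item Consider $\ket{x'}$, the uniform superposition over $J' = [d]\setminus\{a\}$, which is a valid input. Since $a \notin J'$, the canonical output on $\ket{x'}$ is some $b \neq a$.
\item Consequently $\alg$ outputs $a$ with probability at least $2/3$ on $\ket{x}^{\otimes t}$ and at most $1/3$ on $\ket{x'}^{\otimes t}$.
\item This means the POVM element $M_a$ distinguishes the two $t$-copy states with bias at least $1/3$. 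Therefore their trace distance is at least $1/3$.
\end{enumerate}

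We then bound the trace distance. The overlap is $|\braket{x|x'}| = \sqrt{(d-1)/d}$, so the $t$-copy overlap is $((d-1)/d)^{t/2}$. For pure states the trace distance equals $\sqrt{1-|\braket{\psi|\phi}|^2}$, which here gives
\[
\sqrt{1-(1-1/d)^{t}} \;\leq\; \sqrt{t/d}.
\]
Requiring this to be at least $1/3$ forces $t \geq d/9 = \Omega(d)$.

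The argument is elementary, so I do not anticipate a major obstacle. The only points needing care are two: that $[d]\setminus\{a\}$ is a legitimate nonempty input, which is why $d \geq 2$ is needed, and the correct use of the pure-state trace-distance formula for tensor powers. Combining the two bounds gives the claimed $\Theta(d)$.
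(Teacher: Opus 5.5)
Your proposal is correct and follows essentially the same route as the paper: the upper bound measures $O(d)$ copies and outputs the minimum observed index, and the lower bound is a two-point trace-distance argument obtained by deleting the canonical output from the support. The only cosmetic difference is that the paper compares two states of support size $d-1$ (one missing $k$, the other missing $j$, with overlap $1-\frac{1}{d-1}$), whereas you compare the full-support state with the state missing $a$. Your explicit derivation of the $1/3$ trace-distance threshold from the POVM element $M_a$ is slightly more careful than the paper's ``$\Omega(1)$'' remark.
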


\begin{proof}
    We first prove the upper bound. This follows from the fact that when the state defined over the set $J\subseteq[d]$ is measured in the computational basis, one gets the smallest element of the support with probability $1/|J|\geq 1/d$. Therefore, one can measure $t=cd$ copies of the state in the computational basis and then output the smallest measured index from these measurements. The probability that the smallest index is measured is lower bounded by the following for $c\geq \ln(3)$.
    \begin{equation*}
        1-(1-1/|J|)^t \geq 1-(1-1/d)^{cd}\geq 1-e^{-c}\geq 2/3
    \end{equation*}

    To prove the lower bound, we consider a pseudo-deterministic algorithm $\alg$ and note that the probability that an algorithm can distinguish between two states is upper bounded by their trace distance. Consider the quantum state $\ket{y}$ whose support size is $d-1$, missing only basis element $k$. Suppose the pseudo-deterministic output of $\alg$ on $\ket{y}$ is $j\neq k$. This state is expressed as 
    \begin{equation*}
        \ket{y}=\sum_{\ell\in[d]\setminus\{k\}}\frac{1}{\sqrt{d-1}}\ket{\ell}.
    \end{equation*}
     Now consider the state
     \begin{equation*}
         \ket z=\sum_{\ell\in[d]\setminus\{j\}}\frac{1}{\sqrt{d-1}}\ket{\ell},
     \end{equation*} which has support size $d-1$ missing only basis element $j$. The pseudo-deterministic output of $\alg$ on $\ket{z}$ cannot be $j$. The trace distance between $t$ copies of $\ket{y}$ and $\ket{z}$ is:
    \[
    \begin{split}
        \frac{1}{2}\|\ket{y}\bra{y}^{\otimes t}-\ket{z}\bra{z}^{\otimes t}\|_1&=\sqrt{1-|\braket{y|z}|^{2t}}\\
        &=\sqrt{1-\left(1-\frac{1}{d-1}\right)^{2t}}\\
        &\leq \sqrt{1-\left(1-\frac{2t}{d-1}\right)}\\
        &=\sqrt{\frac{2t}{d-1}}
    \end{split} 
    \]
    Here, the third line follows by Bernoulli's inequality. For the trace distance to be $\Omega(1)$ so that the two cases are distinguishable, we must have $t=\Omega(d)$. 
\end{proof}

\section*{Acknowledgements}

We thank Rahul Santhanam, Igor Oliveira, and Scott Aaronson for useful discussions on pseudo-determinism.

\bibliographystyle{alphaurl}
\bibliography{ref}

@inproceedings{fishing,
  author       = {Scott Aaronson},
  editor       = {Leonard J. Schulman},
  title        = {{BQP} and the polynomial hierarchy},
  booktitle    = {Proceedings of the 42nd {ACM} Symposium on Theory of Computing, {STOC}
                  2010, Cambridge, Massachusetts, USA, 5-8 June 2010},
  pages        = {141--150},
  publisher    = {{ACM}},
  year         = {2010},
  url          = {https://doi.org/10.1145/1806689.1806711},
  doi          = {10.1145/1806689.1806711},
  bibsource    = {dblp computer science bibliography, https://dblp.org}
}

@inproceedings{GIPS21,
  author       = {Shafi Goldwasser and
                  Russell Impagliazzo and
                  Toniann Pitassi and
                  Rahul Santhanam},
  editor       = {Valentine Kabanets},
  title        = {On the Pseudo-Deterministic Query Complexity of {NP} Search Problems},
  booktitle    = {36th Computational Complexity Conference, {CCC} 2021, July 20-23,
                  2021, Toronto, Ontario, Canada (Virtual Conference)},
  series       = {LIPIcs},
  volume       = {200},
  pages        = {36:1--36:22},
  publisher    = {Schloss Dagstuhl - Leibniz-Zentrum f{\"{u}}r Informatik},
  year         = {2021},
  url          = {https://doi.org/10.4230/LIPIcs.CCC.2021.36},
  doi          = {10.4230/LIPIcs.CCC.2021.36},
  bibsource    = {dblp computer science bibliography, https://dblp.org}
}

@inproceedings{GGR13,
  author       = {Oded Goldreich and
                  Shafi Goldwasser and
                  Dana Ron},
  editor       = {Robert D. Kleinberg},
  title        = {On the possibilities and limitations of pseudodeterministic algorithms},
  booktitle    = {Innovations in Theoretical Computer Science, {ITCS} '13, Berkeley,
                  CA, USA, January 9-12, 2013},
  pages        = {127--138},
  publisher    = {{ACM}},
  year         = {2013},
  url          = {https://doi.org/10.1145/2422436.2422453},
  doi          = {10.1145/2422436.2422453},
  bibsource    = {dblp computer science bibliography, https://dblp.org}
}

@article{Yamakawa2022,
  author       = {Takashi Yamakawa and
                  Mark Zhandry},
  title        = {Verifiable Quantum Advantage without Structure},
  journal      = {J. {ACM}},
  volume       = {71},
  number       = {3},
  pages        = {20},
  year         = {2024},
  url          = {https://doi.org/10.1145/3658665},
  doi          = {10.1145/3658665},
  bibsource    = {dblp computer science bibliography, https://dblp.org}
}

@article{CE05subset,
  author       = {Andrew M. Childs and
                  Jason M. Eisenberg},
  title        = {Quantum algorithms for subset finding},
  journal      = {Quantum Inf. Comput.},
  volume       = {5},
  number       = {7},
  pages        = {593--604},
  year         = {2005},
  url          = {https://doi.org/10.26421/QIC5.7-7},
  doi          = {10.26421/QIC5.7-7},
  bibsource    = {dblp computer science bibliography, https://dblp.org}
}

@article{CDM23query,
  author       = {Arkadev Chattopadhyay and
                  Yogesh Dahiya and
                  Meena Mahajan},
  title        = {Query Complexity of Search Problems},
  journal      = {Electron. Colloquium Comput. Complex.},
  volume       = {{TR23-039}},
  year         = {2023},
  url          = {https://eccc.weizmann.ac.il/report/2023/039},
  eprinttype    = {ECCC},
  eprint       = {TR23-039},
  bibsource    = {dblp computer science bibliography, https://dblp.org}
}

@inproceedings{Szeg04,
  author       = {Mario Szegedy},
  title        = {Quantum Speed-Up of Markov Chain Based Algorithms},
  booktitle    = {45th Symposium on Foundations of Computer Science {(FOCS} 2004), 17-19
                  October 2004, Rome, Italy, Proceedings},
  pages        = {32--41},
  publisher    = {{IEEE} Computer Society},
  year         = {2004},
  url          = {https://doi.org/10.1109/FOCS.2004.53},
  doi          = {10.1109/FOCS.2004.53},
  bibsource    = {dblp computer science bibliography, https://dblp.org}
}

@inproceedings{Bel12,
  author       = {Aleksandrs Belovs},
  title        = {Learning-Graph-Based Quantum Algorithm for k-Distinctness},
  booktitle    = {53rd Annual {IEEE} Symposium on Foundations of Computer Science, {FOCS}
                  2012, New Brunswick, NJ, USA, October 20-23, 2012},
  pages        = {207--216},
  publisher    = {{IEEE} Computer Society},
  year         = {2012},
  url          = {https://doi.org/10.1109/FOCS.2012.18},
  doi          = {10.1109/FOCS.2012.18},
  bibsource    = {dblp computer science bibliography, https://dblp.org}
}

@inproceedings{GGH18,
  author       = {Shafi Goldwasser and
                  Ofer Grossman and
                  Dhiraj Holden},
  editor       = {Anna R. Karlin},
  title        = {Pseudo-Deterministic Proofs},
  booktitle    = {9th Innovations in Theoretical Computer Science Conference, {ITCS}
                  2018, January 11-14, 2018, Cambridge, MA, {USA}},
  series       = {LIPIcs},
  volume       = {94},
  pages        = {17:1--17:18},
  publisher    = {Schloss Dagstuhl - Leibniz-Zentrum f{\"{u}}r Informatik},
  year         = {2018},
  url          = {https://doi.org/10.4230/LIPIcs.ITCS.2018.17},
  doi          = {10.4230/LIPICS.ITCS.2018.17},
  bibsource    = {dblp computer science bibliography, https://dblp.org}
}

@inproceedings{GGMW20,
  author       = {Shafi Goldwasser and
                  Ofer Grossman and
                  Sidhanth Mohanty and
                  David P. Woodruff},
  editor       = {Thomas Vidick},
  title        = {Pseudo-Deterministic Streaming},
  booktitle    = {11th Innovations in Theoretical Computer Science Conference, {ITCS}
                  2020, January 12-14, 2020, Seattle, Washington, {USA}},
  series       = {LIPIcs},
  volume       = {151},
  pages        = {79:1--79:25},
  publisher    = {Schloss Dagstuhl - Leibniz-Zentrum f{\"{u}}r Informatik},
  year         = {2020},
  url          = {https://doi.org/10.4230/LIPIcs.ITCS.2020.79},
  doi          = {10.4230/LIPICS.ITCS.2020.79},
  bibsource    = {dblp computer science bibliography, https://dblp.org}
}

@inproceedings{BKKS23,
  author       = {Vladimir Braverman and
                  Robert Krauthgamer and
                  Aditya Krishnan and
                  Shay Sapir},
  editor       = {Kousha Etessami and
                  Uriel Feige and
                  Gabriele Puppis},
  title        = {Lower Bounds for Pseudo-Deterministic Counting in a Stream},
  booktitle    = {50th International Colloquium on Automata, Languages, and Programming,
                  {ICALP} 2023, July 10-14, 2023, Paderborn, Germany},
  series       = {LIPIcs},
  volume       = {261},
  pages        = {30:1--30:14},
  publisher    = {Schloss Dagstuhl - Leibniz-Zentrum f{\"{u}}r Informatik},
  year         = {2023},
  url          = {https://doi.org/10.4230/LIPIcs.ICALP.2023.30},
  doi          = {10.4230/LIPICS.ICALP.2023.30},
  bibsource    = {dblp computer science bibliography, https://dblp.org}
}

@inproceedings{GGS23,
  author       = {Ofer Grossman and
                  Meghal Gupta and
                  Mark Sellke},
  title        = {Tight Space Lower Bound for Pseudo-Deterministic Approximate Counting},
  booktitle    = {64th {IEEE} Annual Symposium on Foundations of Computer Science, {FOCS}
                  2023, Santa Cruz, CA, USA, November 6-9, 2023},
  pages        = {1496--1504},
  publisher    = {{IEEE}},
  year         = {2023},
  url          = {https://doi.org/10.1109/FOCS57990.2023.00091},
  doi          = {10.1109/FOCS57990.2023.00091},
  bibsource    = {dblp computer science bibliography, https://dblp.org}
}

@article{GGH19,
  author       = {Michel X. Goemans and
                  Shafi Goldwasser and
                  Dhiraj Holden},
  title        = {Doubly-Efficient Pseudo-Deterministic Proofs},
  journal      = {CoRR},
  volume       = {abs/1910.00994},
  year         = {2019},
  url          = {http://arxiv.org/abs/1910.00994},
  eprinttype    = {arXiv},
  eprint       = {1910.00994},
  bibsource    = {dblp computer science bibliography, https://dblp.org}
}

@inproceedings{CLORS23,
  author       = {Lijie Chen and
                  Zhenjian Lu and
                  Igor C. Oliveira and
                  Hanlin Ren and
                  Rahul Santhanam},
  title        = {Polynomial-Time Pseudodeterministic Construction of Primes},
  booktitle    = {64th {IEEE} Annual Symposium on Foundations of Computer Science, {FOCS}
                  2023, Santa Cruz, CA, USA, November 6-9, 2023},
  pages        = {1261--1270},
  publisher    = {{IEEE}},
  year         = {2023},
  url          = {https://doi.org/10.1109/FOCS57990.2023.00074},
  doi          = {10.1109/FOCS57990.2023.00074},
  bibsource    = {dblp computer science bibliography, https://dblp.org}
}

@article{GG11,
  author       = {Eran Gat and
                  Shafi Goldwasser},
  title        = {Probabilistic Search Algorithms with Unique Answers and Their Cryptographic
                  Applications},
  journal      = {Electron. Colloquium Comput. Complex.},
  volume       = {{TR11-136}},
  year         = {2011},
  url          = {https://eccc.weizmann.ac.il/report/2011/136},
  eprinttype    = {ECCC},
  eprint       = {TR11-136},
  bibsource    = {dblp computer science bibliography, https://dblp.org}
}

@inproceedings{DPWV22,
  author       = {Peter Dixon and
                  Aduri Pavan and
                  Jason Vander Woude and
                  N. V. Vinodchandran},
  editor       = {Stefano Leonardi and
                  Anupam Gupta},
  title        = {Pseudodeterminism: promises and lowerbounds},
  booktitle    = {{STOC} '22: 54th Annual {ACM} {SIGACT} Symposium on Theory of Computing,
                  Rome, Italy, June 20 - 24, 2022},
  pages        = {1552--1565},
  publisher    = {{ACM}},
  year         = {2022},
  url          = {https://doi.org/10.1145/3519935.3520043},
  doi          = {10.1145/3519935.3520043},
  bibsource    = {dblp computer science bibliography, https://dblp.org}
}

@article{DPV21,
  author       = {Peter Dixon and
                  Aduri Pavan and
                  N. V. Vinodchandran},
  title        = {Promise Problems Meet Pseudodeterminism},
  journal      = {CoRR},
  volume       = {abs/2103.08589},
  year         = {2021},
  url          = {https://arxiv.org/abs/2103.08589},
  eprinttype    = {arXiv},
  eprint       = {2103.08589},
  bibsource    = {dblp computer science bibliography, https://dblp.org}
}

@inproceedings{ABKRT21,
  author       = {Scott Aaronson and
                  Shalev Ben{-}David and
                  Robin Kothari and
                  Shravas Rao and
                  Avishay Tal},
  editor       = {Samir Khuller and
                  Virginia Vassilevska Williams},
  title        = {Degree vs. approximate degree and Quantum implications of Huang's
                  sensitivity theorem},
  booktitle    = {{STOC} '21: 53rd Annual {ACM} {SIGACT} Symposium on Theory of Computing,
                  Virtual Event, Italy, June 21-25, 2021},
  pages        = {1330--1342},
  publisher    = {{ACM}},
  year         = {2021},
  url          = {https://doi.org/10.1145/3406325.3451047},
  doi          = {10.1145/3406325.3451047},
  bibsource    = {dblp computer science bibliography, https://dblp.org}
}

@book{NC16,
  author       = {Michael A. Nielsen and
                  Isaac L. Chuang},
  title        = {Quantum Computation and Quantum Information (10th Anniversary edition)},
  publisher    = {Cambridge University Press},
  year         = {2016},
  url          = {https://www.cambridge.org/de/academic/subjects/physics/quantum-physics-quantum-information-and-quantum-computation/quantum-computation-and-quantum-information-10th-anniversary-edition?format=HB},
  isbn         = {978-1-10-700217-3},
  bibsource    = {dblp computer science bibliography, https://dblp.org}
}

@article{bassirian2021certified,
  title={On certified randomness from Fourier sampling or random circuit sampling},
  author={Bassirian, Roozbeh and Bouland, Adam and Fefferman, Bill and Gunn, Sam and Tal, Avishay},
  journal={arXiv preprint arXiv:2111.14846},
  year={2021}
}

@article{Simon97,
  author       = {Daniel R. Simon},
  title        = {On the Power of Quantum Computation},
  journal      = {{SIAM} J. Comput.},
  volume       = {26},
  number       = {5},
  pages        = {1474--1483},
  year         = {1997},
  url          = {https://doi.org/10.1137/S0097539796298637},
  doi          = {10.1137/S0097539796298637},
  bibsource    = {dblp computer science bibliography, https://dblp.org}
}

@article{BBCMW01,
  author       = {Robert Beals and
                  Harry Buhrman and
                  Richard Cleve and
                  Michele Mosca and
                  Ronald de Wolf},
  title        = {Quantum lower bounds by polynomials},
  journal      = {J. {ACM}},
  volume       = {48},
  number       = {4},
  pages        = {778--797},
  year         = {2001},
  url          = {https://doi.org/10.1145/502090.502097},
  doi          = {10.1145/502090.502097},
  bibsource    = {dblp computer science bibliography, https://dblp.org}
}

@article{farhi1998limit,
  title={Limit on the speed of quantum computation in determining parity},
  author={Farhi, Edward and Goldstone, Jeffrey and Gutmann, Sam and Sipser, Michael},
  journal={Physical Review Letters},
  volume={81},
  number={24},
  pages={5442},
  year={1998},
  publisher={APS}
}

@inproceedings{Gall14,
  author       = {Fran{\c{c}}ois Le Gall},
  title        = {Improved Quantum Algorithm for Triangle Finding via Combinatorial
                  Arguments},
  booktitle    = {55th {IEEE} Annual Symposium on Foundations of Computer Science, {FOCS}
                  2014, Philadelphia, PA, USA, October 18-21, 2014},
  pages        = {216--225},
  publisher    = {{IEEE} Computer Society},
  year         = {2014},
  url          = {https://doi.org/10.1109/FOCS.2014.31},
  doi          = {10.1109/FOCS.2014.31},
  bibsource    = {dblp computer science bibliography, https://dblp.org}
}

@inproceedings{AH23,
  author       = {Scott Aaronson and
                  Shih{-}Han Hung},
  editor       = {Barna Saha and
                  Rocco A. Servedio},
  title        = {Certified Randomness from Quantum Supremacy},
  booktitle    = {Proceedings of the 55th Annual {ACM} Symposium on Theory of Computing,
                  {STOC} 2023, Orlando, FL, USA, June 20-23, 2023},
  pages        = {933--944},
  publisher    = {{ACM}},
  year         = {2023},
  url          = {https://doi.org/10.1145/3564246.3585145},
  doi          = {10.1145/3564246.3585145},
  bibsource    = {dblp computer science bibliography, https://dblp.org}
}

@article{GraphColl,
  author       = {Andris Ambainis and
                  Kaspars Balodis and
                  Janis Iraids and
                  Raitis Ozols and
                  Juris Smotrovs},
  title        = {Parameterized Quantum Query Complexity of Graph Collision},
  journal      = {CoRR},
  volume       = {abs/1305.1021},
  year         = {2013},
  url          = {http://arxiv.org/abs/1305.1021},
  eprinttype    = {arXiv},
  eprint       = {1305.1021},
  bibsource    = {dblp computer science bibliography, https://dblp.org}
}

@inproceedings{AGJK20walk,
  author       = {Andris Ambainis and
                  Andr{\'{a}}s Gily{\'{e}}n and
                  Stacey Jeffery and
                  Martins Kokainis},
  editor       = {Konstantin Makarychev and
                  Yury Makarychev and
                  Madhur Tulsiani and
                  Gautam Kamath and
                  Julia Chuzhoy},
  title        = {Quadratic speedup for finding marked vertices by quantum walks},
  booktitle    = {Proceedings of the 52nd Annual {ACM} {SIGACT} Symposium on Theory
                  of Computing, {STOC} 2020, Chicago, IL, USA, June 22-26, 2020},
  pages        = {412--424},
  publisher    = {{ACM}},
  year         = {2020},
  url          = {https://doi.org/10.1145/3357713.3384252},
  doi          = {10.1145/3357713.3384252},
  bibsource    = {dblp computer science bibliography, https://dblp.org}
}

\appendix

\newpage

\section{Proof of \cref{thm: wb PsQ to FPBQP}}\label{App:WhiteBox}

\WhiteBox*

\begin{proof}
    For the ``if'' direction, we assume that such a $\P$-reduction exists. Given an input to the search problem $\probR$, we use the reduction as an algorithm and run the $\BQP$ algorithm on the output. The solution is pseudo-deterministic as, on a single input to $\probR:X$, the reduction will reduce to the same instances of the $\BQP$ decision problem as any such reduction is deterministic, let this reduce to the series of inputs to the $\BQP$ decision problem $(Y_{j})_{j}$. As the $\BQP$ problem is a decision problem, the solution on inputs $(Y_{j})_j$ will be the same on multiple repetitions of this algorithm for $\probR$.  

    For the other direction, suppose the algorithm $\mathcal{A}$ is a pseudo-deterministic quantum algorithm for this search problem. Let us define the set $\probR'$ in the following way,
    \begin{equation*}
        (x,i,b)\in \probR'\iff \mathbb{P}[\mathcal{A}(x)_{i}=b]>\frac{2}{3}.
    \end{equation*}
    By definition, the set $\probR'$ is decidable in $\BQP$, and thus solving $\probR$ is  $\P$-reducible to deciding  $\probR'$.
\end{proof}

\section{Proof of \cref{thm:Find1QComplete}}\label{sec:find1QComplete}

\FindoneQComplete*

\begin{proof}
    We build a reduction from $R$ to \findone.
    Let $\mathcal{V}$ be a classical deterministic algorithm for verifying any solution to $\probR$ with query complexity $ \V(R)$, and $\mathcal{R}$ be a classical randomized algorithm that solves $\probR$ with query complexity $\R(R)$.

    W.l.o.g, $\mathcal{R}$ is a uniform distribution over deterministic algorithms with multiplicity, where each deterministic algorithm in the support of $\mathcal{R}$ makes at most $\R(R)$ queries. 
    Additionally, for any input $X$, the probability that a random algorithm drawn from $\mathcal{R}$ succeeds on input $X$ is at least $2/3$. 

    If we randomly draw $100N$ of deterministic algorithms from the support of $\mathcal{R}$, which corresponds to a new randomized algorithm $\mathcal{R}'$.
    By the Chernoff bound, the probability that for any given $X$, at least $0.49$ fraction of the algorithms in $\mathcal{R}'$ fail is bounded above by $2e^{-N}$. 
Thus, by taking a union bound over all possible inputs $X\in\{0,1\}^{N}$, the probability that a randomly chosen $\mathcal{R}'$ is also a valid randomized algorithm for $\probR$ is $1 - o(1)$. Moreover, the support size of $\mathcal{R}'$ is $100N$.
    
    We now solve $\probR$ with input $X \in \{0,1\}^N$ by reducing it to a \findone instance on input string $Z\in \{0,1\}^{100N}$. Specifically, for any $i \in [100N]$, $Z_i$ is $1$ if the $i$-th deterministic algorithm in the support of $\mathcal{R}'$ produces a correct solution to $X$. Note that $Z_i$ can be calculated by $(\R(R)+\V(R))$ queries to string $X$. Moreover, any solution to the \findone instance $Z$ leads to a certain solution for $R$ on instance $X$.

    Finally, given the fact $\psQ(\findone) = \Theta(\sqrt{N})$~\cite{GIPS21}, we get a pseudo-deterministic quantum algorithm for $R$ with $O\big((\R(R)+\V(R))\cdot \sqrt{N}\big)$ queries.

\end{proof}

\end{document}